\title{Generalized Bradley-Terry Models for Score Estimation from Paired Comparisons}
\author{
    Julien Fageot\textsuperscript{\rm 1,}\textsuperscript{\rm 2}, 
    Sadegh Farhadkhani\textsuperscript{\rm 2}, 
    Lê-Nguyên Hoang\textsuperscript{\rm 1,}\textsuperscript{\rm 3}, and
    Oscar Villemaud\textsuperscript{\rm 1,}\textsuperscript{\rm 2}
}
\newcommand{\norm}[2]{\left\lVert{#1}\right\rVert_{#2}}
\DeclareMathOperator*{\argmin}{arg\,min}
\newtheorem{definition}{Definition}
\newtheorem{proposition}{Proposition}
\newtheorem{theorem}{Theorem}
\newtheorem{lemma}{Lemma}
\newcommand{\Z}{\mathbb Z}
\newcommand{\setR}{\mathbb R}
\newcommand{\comparison}{r}
\newcommand{\R}{\mathbb{R}}
\newcommand{\Comp}{\mathrm{R}}
\newcommand{\Dc}{\textsc{Supp}(f)}
\newcommand{\Ds}{\mathbb{D}_{\mathrm{s}}}
\begin{document}

\maketitle

\begin{abstract}
Many applications, e.g. in content recommendation, sports, or recruitment, 
leverage the comparisons of alternatives to score those alternatives.
The classical Bradley-Terry model and its variants have been widely used to do so. 
The historical model considers binary comparisons (victory/defeat) between alternatives, while more recent developments allow finer comparisons to be taken into account.
In this article, we introduce a probabilistic model encompassing a broad variety of paired comparisons that can take discrete or continuous values. We do so by considering a well-behaved subset of the exponential family,
which we call the family of \emph{generalized Bradley-Terry (GBT) models},
as it includes the classical Bradley-Terry model and many of its variants.
Remarkably, we prove that all GBT models are guaranteed to yield a strictly convex negative log-likelihood.
Moreover, assuming a Gaussian prior on alternatives' scores,
we prove that the maximum a posteriori (MAP) of GBT models, 
whose existence, uniqueness and fast computation are thus guaranteed,
varies monotonically with respect to comparisons 
(the more A beats B, the better the score of A)
and is Lipschitz-resilient with respect to each new comparison
(a single new comparison can only have a bounded effect on all the estimated scores).
These desirable properties make GBT models appealing for practical use. We illustrate some features of GBT models on simulations.

\end{abstract}


\section{Introduction}
\label{sec:intro}

In many settings, alternatives are rather compared than individually scored.
Typically, in chess, football, tennis, judo, or cycling, individuals and teams compete against one another.
Similarly, students and job candidates are arguably easier to compare,
rather than to assess directly.
In fact, comparative judgments are implicitly performed all the time in online applications,
as users often have to select content, applications, or products to consume or purchase, 
within a set of proposed alternatives.
However, ranking all (or a subset of top) alternatives is often demanded.
Many sport competitions identify a current number-one player or team, 
job candidates are ordered for hiring procedures
and recommendation AIs must select a handful of content to recommend.
Such rankings are often produced based on scoring inferred from comparisons.
Scores also allow to reflect the fact that an alternative vastly outperforms a particularly bad alternative,
while it is only slightly better than a third alternative.

Transforming comparisons into scores is not straightforward, especially when the comparisons are noisy.
Typically, better sport teams are sometimes unfortunate, and end up losing against less competitive teams, which is sometimes called the ``beauty of sport".
In other applications, the comparative judgments may vary because they are made by different individuals,
or simply because it is hard for humans to remain consistent in their sequential judgments.

\subsection{Contributions}
In this paper, we introduce and analyze a natural and well-behaved family of probabilistic models 
that convert observed comparisons into individual scoring.
Essentially, our family of models, which we call the generalized Bradley-Terry (GBT) family, 
is obtained by considering a subset of the exponential family of particular interest.
Interestingly, a practitioner then merely needs to define 
how they expect two equally good alternatives to be compared,
to effortlessly construct a unique model of our family.
We show that the GBT family generalizes the well-known and widely studied Bradley-Terry model~\cite{BradleyTerry52} 
(which is limited to binary comparisons in its historical version),
as well as other more recent models~\cite{guo2012score,kristof2019user}.
In fact, we highlight the generality of our GBT models by briefly analyzing multiple noteworthy instances,
including the $K$-nary-GBT, the Gaussian-GBT, the Uniform-GBT, and the Poisson-GBT.

Remarkably, as our key contribution, we prove that, 
given a Gaussian Bayesian prior on alternatives' scores, 
all GBT models are guaranteed to yield several desirable properties. 
Namely, first, all yield a \emph{strongly convex negative log-posterior}, 
which means that the maximum-a-posteriori (MAP) can be quickly computed 
by any optimizer for strongly convex losses.
Second, we prove that the MAP scores vary \emph{monotonically} with the comparisons.
This guarantees that alternatives will always be incentivized to ``win comparisons''
by as large of a margin as possible.
Finally, we prove that the MAP scores are \emph{Lipschitz-resilient} to each new comparison,
thereby guaranteeing that any outlier comparison will only have a limited impact on the estimated scores.
This is especially important in the context of online applications, where misclicks are extremely common.

These properties of the GBT models make them appealing to practitioners.
In fact, they have been deployed on the online collaborative Tournesol platform~\cite{tournesol},
which aims to construct a secure, ethical and collaboratively governed content recommendation algorithm.
To do so, the Tournesol platform asks its contributors 
to provide comparative judgments of which of a pair of videos should be more often recommended 
by the Tournesol recommender.
They reportedly use a GBT model, among other tools, to transform such comparisons into video scores.
An openly available implementation of GBT, under AGPL license, 
was made available in the python package called \texttt{solidago}.

\subsection{Related Works}

An important motivation behind this work is to lay the theoretical foundations of the method of estimating personal scores from paired comparisons made by Tournesol users~\cite{tournesol}. We therefore take inspiration from similar methods applied in contexts where they have proven their worth. One classical approach consists of defining a probabilistic model that conceives the comparisons as random variations from intrinsic scores (modeled as latent variables) of these comparisons~\cite{david1963method}.

The Bradley-Terry model~\cite{BradleyTerry52}, which follows the pioneering work of Zermelo~\cite{zermelo1929berechnung}, proposes to consider this task from binary comparisons (victory or defeat).
These ideas are the root of Elo rating system used in chess~\cite{elo1978rating} and other competitive contexts\footnote{For instance, the Elo rating of tennis players are computed here: \url{https://tennisabstract.com/reports/wta_elo_ratings.html}.}. 
A similar approach was followed independently by Thurstone~\cite{thurstone1927law} in psychophysics.
The Bradley-Terry and Thurstonian models have been generalized in many ways, for instance in order to include ties~\cite{davidson1970extending,rao1967ties}. Refinements of the Elo rating such as Glicko~\cite{glickman1999parameter} or TrueSkill~\cite{herbrich2006trueskill} have been introduced and practically used, together with online or Bayesian techniques to compute score estimators~\cite{hunter2004mm,cattelan2012models}.

Recent approaches considered different comparison models, including unbounded Poisson~\cite{maher1982modelling} and  Skellam (i.e. symmetric Poisson) models~\cite{karlis2009bayesian}, or continuous-domain Gaussian models~\cite{guo2012score,maystre2019pairwise,kristof2019user}. 
To the best of our knowledge, there is no theory covering all of these models which are based on different modeling of the comparisons. This article proposes precisely to introduce a unified framework including them all.

\subsection{Outline}
The rest of the paper is organized as follows.
First, we define the setting, and introduce the GBT models.
We especially stress the importance of the \emph{cumulant-generating functions},
wherein lie so many of the well-behaved properties of GBT models.
We then introduce MAP estimators based on GBT models, given a Gaussian prior on scores,
and highlight their basic computational and statistical properties.
Next, we define the \emph{monotonicity} of score estimators 
and prove that any GBT MAP estimator has this desirable property. 
In the following section,  we (re)define \emph{Lipchitz-resilience} to user's modifications,
and show that GBT MAP estimators with bounded scores are Lipschitz-resilient. 
We then exemplify GBT models, from the historical binary Bradley-Terry model to continuous-domain ones, provide illustrative simulations, and finally conclude in the last section. 

\section{Generalized Bradley-Terry Models} \label{sec:GBT}

In this section, we introduce the setting and the GBT models.
We then redefine cumulant-generating functions and highlight a remarkable result about these functions,
which will prove, as a corollary, that the MAP of GBT models can be efficiently computed.
Finally, we draw the connection with the historical Bradley-Terry model.

\subsection{The Setting}

Consider a set $\mathcal{A}= \{a\}_{a \in \mathcal{A}}$ of alternatives with cardinal $A = |\mathcal{A}|$. 
We assume that comparisons $r_{ab}$ between alternatives $a$ and $b$ have been made, 
for a (potentially small) subset of pairs of alternatives.
A positive comparison $r_{ab} > 0$ means that $b$ is judged better than $a$, 
and increases when the judgment is more pronounced.
We assume that $r_{ab} = - r_{ba}$ (i.e. $a$ beats $b$ if and only if $b$ is beaten by $a$).

We denote by 
$\mathcal{C}$ the set of pairs $(ab) \in \mathcal{A}^2$ that have been compared, 
and by $C = |\mathcal{C}| \geq 1$ its cardinal. 
For $a \in\mathcal{A}$, let $\mathcal{A}_a = \{ b \in \mathcal{A}, \ (ab) \in \mathcal{C}\} \subset \mathcal{A}$ be the set of alternatives that have been compared with $a$ and $A_a$ its cardinal. 
Then, $C \leq A(A-1)$, where the equality corresponds to the case where all the alternatives have been compared. 
If so, we will have that $\mathcal{A}_a = \mathcal{A} \backslash \{a\}$ and $A_a = A-1$ for any $a \in \mathcal{A}$. 

The comparisons are then characterized by an antisymmetric \emph{comparison matrix}\footnote{The comparison matrix is not a matrix in the classical sense, since its entries $r_{ab}$ are only defined for $(ab) \in \mathcal{C}$ and not $(ab) \in \mathcal{A}^2$ in general.}
\begin{equation}
    \Comp = (r_{ab})_{(ab) \in \mathcal{C}}. 
\end{equation} 
Our goal is to attribute a score $\theta_a$ to each alternative $a \in \mathcal{A}$ from the paired comparison data $\Comp$,
i.e. to construct a function $\Theta^* : \Comp \mapsto \Theta^*(\Comp) \in \R^{{A}}$
that yields desirable computational, monotonicity and resilience properties.

    \subsection{GBT Models as an Exponential Subfamily} \label{sec:defGBT}

To infer scores from comparisons, as is often done in probabilistic models, 
we first define a distribution of comparisons given scores.
Our proposal is to focus on a particular subset of the widely studied \emph{exponential family}~\cite{barndorff2014information}.
Recall that this family is of the form 
\begin{equation}
    p(r | \theta) = f(r) g(\theta) \exp (h(\theta) k(r))
\end{equation}
for some functions $f,g,h,k$, where $r$ is a random variable whose law is parameterized by $\theta$.
Essentially, we define GBT models as the subfamily of such models 
where $h(\theta) = \theta$ and $k(r) = r$.
In fact, since $g(\theta)$ is merely a normalization factor,
this amounts to defining $p(r|\theta) \propto f(r) \mathrm{e}^{r\theta}$.
Since the distribution $p(r|\theta)$ is fully determined by $f$, 
assuming it is normalized so that $\int_\R f(r)\mathrm{d}r = 1 = \int_\R \mathrm{d}F(r)$,
we call $f$ the \emph{root law} of the GBT model, and $F$ is the associated cdf.
More precisely, we define GBT models as follows.

\begin{definition}
    For any probability law $f$ over $\R$ with finite exponential moments
    (i.e. $\int_{\R} e^{r \theta} dF(r) < \infty$ for any $\theta \in \R$),    
    we define the \emph{$f$-GBT model} as follows.
    For any hidden scores $\theta \in \R^{ A}$, 
    the comparisons $\Comp_{ab}$ given $\theta$ are assumed to be independent
    and each only depends on the score difference $\theta_{ab} \triangleq \theta_a - \theta_b$,
    with $p(r_{ab} | \Theta) \propto f(r_{ab}) e^{r_{ab} \theta_{ab}}$. 
\end{definition}

Equivalently, this corresponds to defining, 
for any $(ab)\in \mathcal{C}$ and any $\Theta \in \R^A$,
    \begin{equation} \label{eq:cecicela}
        p(r_{ab}|\Theta) =\frac{\mathrm{e}^{\theta_{ab}r_{ab}} f(r_{ab})}{\int_{\R} \mathrm{e}^{\theta_{ab}r} \mathrm{d}F(r)},
    \end{equation}
which is well-defined when the \emph{root law} $f$ has finite exponential moments.
Note that, assuming $\theta_a = \theta_b$, 
the (normalized) \emph{root law} $f$ is then exactly the probability distribution of a comparison $r_{ab}$.
In other words, $f$ describes the distribution of comparisons between alternatives of similar quality.
This makes it natural to expect $f$ to be symmetric with respect to zero,
which implies that its support $\textsc{Supp}(f)$ is as well.
Using the independence of the comparisons conditionally to the scores, we easily deduce the following result. 

\begin{proposition} \label{prop:lawGBT}
    Under the $f$-GBT model,
    the comparisons $r_{ab}$ are independent conditionally to $\Theta \in\R^A$ and $r_{ab}| \Theta = r_{ab} |\theta_{ab}$. Moreover, we have, for any $(\Comp,\Theta)$
    \begin{equation} \label{eq:comptheta}
    p(\Comp|\Theta) =  
     \prod_{(ab)\in \mathcal{C}} p(r_{ab}|\theta_{ab}) 
     =
    \prod_{(ab)\in \mathcal{C}}  \frac{f\left( r_{ab} \right)   
    \mathrm{e}^{r_{ab} \theta_{ab}}}{\int_{\R} \mathrm{e}^{\theta_{ab} r} \mathrm{d}F(r)}
    \end{equation}
\end{proposition}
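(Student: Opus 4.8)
The plan is to derive the statement directly from the definition of the $f$-GBT model, since both the conditional independence and the reduction to the score difference are built into that definition. First I would recall that, by definition, given $\Theta \in \R^A$ the comparisons $(r_{ab})_{(ab)\in\mathcal{C}}$ are mutually independent, and the conditional law of each $r_{ab}$ is the one-dimensional distribution with density proportional to $f(r)\mathrm{e}^{r\theta_{ab}}$, as made explicit in \eqref{eq:cecicela}. In particular this conditional density depends on $\Theta$ only through the scalar $\theta_{ab} = \theta_a - \theta_b$, which is exactly the assertion $r_{ab}|\Theta = r_{ab}|\theta_{ab}$; the only mild care needed here is to observe that ``$r_{ab}|\theta_{ab}$'' is well-defined precisely because the expression in \eqref{eq:cecicela} is constant on every level set $\{\Theta : \theta_a - \theta_b = \text{const}\}$, and that each such conditional law is a genuine probability density since the denominator in \eqref{eq:cecicela} is a finite normalizing constant under the finite-exponential-moment hypothesis on $f$.

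Next I would assemble the joint conditional density. By conditional independence, the density of $\Comp = (r_{ab})_{(ab)\in\mathcal{C}}$ given $\Theta$ factorizes as the product over $(ab)\in\mathcal{C}$ of the marginal conditional densities $p(r_{ab}|\Theta)$. Substituting \eqref{eq:cecicela} for each factor and using the previous step to rewrite $p(r_{ab}|\Theta) = p(r_{ab}|\theta_{ab})$ then yields \eqref{eq:comptheta} verbatim. It is also worth noting in passing that the resulting product is itself a valid density, which is immediate because each factor integrates to one.

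I do not expect a real obstacle here: the content of the proposition is essentially a restatement of the modelling assumptions, and the only points worth spelling out explicitly are (i) why conditioning on the whole vector $\Theta$ collapses to conditioning on the pairwise difference $\theta_{ab}$, and (ii) why finiteness of the exponential moments of $f$ makes each conditional law well-defined. I would therefore keep the proof to just a few lines.
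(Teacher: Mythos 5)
Your proposal is correct and matches the paper's treatment: the paper offers no separate proof, stating only that the result follows ``using the independence of the comparisons conditionally to the scores,'' i.e.\ it is read off directly from the definition of the $f$-GBT model together with \eqref{eq:cecicela}, exactly as you argue. Your added remarks on why conditioning on $\Theta$ collapses to conditioning on $\theta_{ab}$ and why the normalizing constant is finite are sound but not needed beyond what the paper implicitly assumes.
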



    \subsection{Cumulant-Generating Functions} \label{sec:logpartition}

Let us now introduce the cumulant-generating function~\cite{kenney1951cumulants} derived from the \emph{root law} $f$,
which will be central to our analysis of $f$-GBT models.

\begin{definition}
    Let $f$ be a probability law over $\R$ with finite exponential moment. Its \emph{cumulant generating function} is defined for any $\theta \in \R$ by
    \begin{equation} \label{eq:defphiab}
    \Phi_f(\theta) = \log \left( \int_{\R} \mathrm{e}^{\theta r} \mathrm{d}F(r)\right),
    \end{equation}
    where we recall that $F$ is the cdf of $f$. 
\end{definition}

The cumulant generating function of $f$, also called the log-partition function~\cite{wainwright2005new}, will play an important role for GBT models. By extension, we say that $\Phi_f$ is the cumulant generating function of the $f$-GBT model.

The Taylor series expansion of $\Phi_f$ provides the cumulants of $f$. The function $\Phi$ is known for many classical probability laws\footnote{\url{https://en.wikipedia.org/wiki/Moment-generating_function\#Examples}} and has been extensively studied, in particular in large deviations theory~\cite{dembo2009large}. We recall some of its main properties in Theorem~\ref{theo:Phi}, whose proof is given in the Appendix for the sake of completeness. We recall that $\Dc$ is the support of $f$ and we denote by $r_{max} = \sup \Dc \in (0, \infty] $.

 \begin{theorem}[\cite{dembo2009large}]
 \label{theo:Phi}
    For any root law $f$ with finite exponential moments, 
    the cumulant-generating function $\Phi_f$ is non-negative, strictly convex, even, and infinitely smooth over $\R$.
    Its derivative $\Phi_f'$ is a strictly increasing odd bijection from $\R$ to its image, which is $(- r_{\max} , r_{\max})$ as soon as $r_{\max} < \infty$.
\end{theorem}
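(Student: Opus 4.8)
The plan is to work with the moment-generating function $M(\theta) = \int_{\R}\mathrm{e}^{\theta r}\,\mathrm{d}F(r)$, so that $\Phi_f = \log M$, and to read off each property from $M$. The finite--exponential--moment hypothesis gives $M(\theta)\in(0,\infty)$ for every $\theta\in\R$. It also makes $M$ of class $C^{\infty}$ on $\R$ (in fact real-analytic): on a bounded interval $|\theta|\le T$ the formal $k$-th derivative $r^{k}\mathrm{e}^{\theta r}$ of the integrand is dominated, uniformly in $\theta$, by an $F$-integrable function (one built from $\mathrm{e}^{\pm (T+1)r}$, whose integrals are finite exponential moments), so differentiation under the integral sign is legitimate and $M^{(k)}(\theta)=\int_{\R}r^{k}\mathrm{e}^{\theta r}\,\mathrm{d}F(r)$. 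Since $M>0$ everywhere, $\Phi_f=\log M$ is infinitely smooth on $\R$.

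Non-negativity and evenness both exploit the symmetry of $f$ (adopted throughout the paper), which in particular forces $\expect[r]=0$ and $\mathrm{d}F(r)=\mathrm{d}F(-r)$. Jensen's inequality then gives $M(\theta)=\expect[\mathrm{e}^{\theta r}]\ge\mathrm{e}^{\theta\expect[r]}=1$, hence $\Phi_f\ge 0$ with $\Phi_f(0)=\log 1=0$; and the change of variables $r\mapsto -r$ gives $M(-\theta)=\int_{\R}\mathrm{e}^{-\theta r}\,\mathrm{d}F(r)=\int_{\R}\mathrm{e}^{\theta r}\,\mathrm{d}F(r)=M(\theta)$, so $\Phi_f$ is even.

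For strict convexity I would compute $\Phi_f'=M'/M$ and $\Phi_f''=\bigl(M''M-(M')^{2}\bigr)/M^{2}$, and interpret these via the exponentially tilted law $\mathrm{d}F_\theta(r)=\mathrm{e}^{\theta r-\Phi_f(\theta)}\,\mathrm{d}F(r)$: one gets $\Phi_f'(\theta)=\expect_\theta[r]$ and $\Phi_f''(\theta)=\variance_\theta(r)$ (equivalently, Cauchy--Schwarz yields $M''M\ge(M')^{2}$ and one tracks the equality case). As the root law $f$ is non-degenerate (not a Dirac mass) and exponential tilting preserves its support, $\variance_\theta(r)>0$ for every $\theta$, so $\Phi_f''>0$ and $\Phi_f$ is strictly convex. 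Hence $\Phi_f'$ is strictly increasing, therefore injective; being the derivative of an even function it is odd; and being continuous it maps $\R$ bijectively onto an open interval, which by oddness has the form $(-L,L)$ with $L=\lim_{\theta\to+\infty}\Phi_f'(\theta)\in(0,\infty]$.

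The remaining, and most delicate, step is to identify $L=r_{\max}$ whenever $r_{\max}=\sup\Dc<\infty$. One bound is immediate: $\Phi_f'(\theta)=\expect_\theta[r]\le r_{\max}$ since $r\le r_{\max}$ holds $F$-a.s., so $L\le r_{\max}$. For the converse, fix $\varepsilon\in(0,r_{\max})$; by definition of $r_{\max}$ the mass $p_\varepsilon=F\bigl([r_{\max}-\varepsilon/2,\,r_{\max}]\bigr)$ is positive, which yields $M(\theta)\ge p_\varepsilon\,\mathrm{e}^{\theta(r_{\max}-\varepsilon/2)}$ and hence $\mathbb{P}_\theta\bigl(r<r_{\max}-\varepsilon\bigr)\le \mathrm{e}^{-\theta\varepsilon/2}/p_\varepsilon\to 0$ as $\theta\to+\infty$; a similar tilting estimate controls the (negative) contribution $\expect_\theta\bigl[|r|\,\mathbf 1\{r<r_{\max}-\varepsilon\}\bigr]$ and shows it also vanishes. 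Splitting $\expect_\theta[r]$ over $\{r\ge r_{\max}-\varepsilon\}$ and its complement then gives $\liminf_{\theta\to+\infty}\Phi_f'(\theta)\ge r_{\max}-\varepsilon$; letting $\varepsilon\to 0$ yields $L\ge r_{\max}$, so the image of $\Phi_f'$ is exactly $(-r_{\max},r_{\max})$. All of this is classical (see Dembo--Zeitouni); the Appendix reproduces the argument for completeness.
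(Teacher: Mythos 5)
Your proof is correct and follows essentially the same route as the paper's: differentiation under the integral sign for smoothness, the Cauchy--Schwarz (equivalently, tilted-variance) argument for strict convexity of $\Phi_f=\log M$, and the symmetry of $f$ for evenness and oddness of $\Phi_f'$. The only substantive difference is that you prove the limit $\Phi_f'(\theta)\to r_{\max}$ directly via an exponential-tilting concentration estimate (which is sound, since symmetry of $f$ bounds the support so the complementary term is at most $r_{\max}\,\mathbb{P}_\theta(r<r_{\max}-\varepsilon)\to 0$), whereas the paper outsources this step to a textbook reference; you also make the non-negativity explicit via Jensen, which the paper leaves implicit.
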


As we will see in the next section,
the cumulant-generating function $\Phi_f$ yields numerous key statistics
of the maximum-a-posteriori score estimator.

        \subsection{Discrete and Continuous GBT Models}
    \label{sec:discreteandconti}

For concreteness, we distinguish two types of GBT models, depending on the discreteness of the comparisons' domain.
As we will see, the discreteness of $f$ defines the discreteness of $f$-GBT.



\paragraph{Discrete GBT models.} 
Let the root law $f$ be of the form
\begin{equation}
    f(r)= \sum_{k \in \mathcal{K}} \mathbb{P}[r_k|0] \delta_{r_k}(r),
\end{equation}
where $\delta_x$ is the Dirac distribution on $x$
and $\mathcal{K}$ is a countable (possibly finite) subset of $\R$ with no accumulation point.
The discrete root law has an associated probability mass function $\mathbb{P}[r|0]$.
The random variable $\Comp|\Theta$ is then also discrete with identical support $\Dc$. According to \eqref{eq:comptheta}, its probability mass function $P(\Comp|\Theta)$ is given by
\begin{equation} \label{eq:Pcompthetadiscrete}
    P(\Comp|\Theta) = \prod_{(ab)\in \mathcal{C}} \frac{\mathbb{P}[r_{ab}|0]  \mathrm{e}^{r_{ab}\theta_{ab}}}{ \sum_{r \in \Dc} \mathbb{P}[r|0]\mathrm{e}^{r\theta_{ab}}}. 
\end{equation}

\paragraph{Continuous GBT models.}
In the continuous setting, 
the comparison matrix $\Comp|\Theta$ also admits a probability density function given by 
\begin{equation} \label{eq:pcompthetacontinuous}
    p(\Comp|\Theta)= \prod_{(ab)\in \mathcal{C}} \frac{f(r_{ab}) \mathrm{e}^{r_{ab}\theta_{ab}}}{\int_{\Dc} \mathrm{e}^{r\theta_{ab}}f(r) \mathrm{d}r}. 
\end{equation}



    \subsection{Historical Bradley-Terry as Bernoulli-GBT} \label{sec:histoBT}

Recall that the classical~\cite{BradleyTerry52} model is characterized by the fact 
that comparisons follows the following distribution:\footnote{This is one possible parametrization, using an exponential score function. An alternative is to consider that $P(a > b) = \frac{p_a}{p_a+p_b}$ with $p_a > 0$ the score value of $a$.}
\begin{equation} \label{eq:oldBT}
    P ( r_{ab} = 1 | \theta_a - \theta_b) = \frac{\mathrm{e}^{\theta_a}}{\mathrm{e}^{\theta_a}+\mathrm{e}^{\theta_b}}
\end{equation}
and $P ( r_{ab} = 1 | \theta_a, \theta_b) + P ( r_{ab} = -1 | \theta_a, \theta_b) = 1$.

Here, we highlight the fact that this is an instance of the discrete GBT models. 
Namely, consider the Bernouilli \emph{root law}
$f = \frac{\delta_1 + \delta_{-1}}{2}$.
Its cumulant-generating function is
\begin{equation}
    \Phi_{\text{Bernoulli}}(\theta) = 
    \log \left( \frac{\mathrm{e}^{\theta} + \mathrm{e}^{-\theta}}{2}\right) = \log (\cosh \theta).
\end{equation}
Using \eqref{eq:cecicela} and $\theta_{ab}=\theta_a - \theta_b$, we deduce 
\begin{equation}
    P ( r_{ab} = 1 | \Theta ) = \frac{\mathrm{e}^{\theta_{ab}}}{2 \cosh(\theta_{ab})} = \frac{\mathrm{e}^{\theta_a}}{\mathrm{e}^{\theta_a}+\mathrm{e}^{\theta_b}}.
\end{equation}
We recover~\eqref{eq:oldBT},
hence the fact that \cite{BradleyTerry52} is the Bernoulli-GBT model.

        \section{MAP Scores Based on GBT Models}
        \label{sec:estimators}
        
In this section, we take a Bayesian approach to define a maximum-a-posteriori estimator based on GBT models,
and prove some of the basic resulting properties.

    \subsection{Bayesian Model for Score Estimation} 
    Consider a normal prior $\mathcal N(0, \sigma^2 I)$ on $\Theta$. 
    In the continuous setting, the posterior density function of $\Theta$ conditionally to the comparisons $\Comp$ is, using Bayes law,    
    \begin{equation} \label{eq:be}
        p(\Theta |\Comp) = \frac{p(\Comp| \Theta) p(\Theta)}{p(\Comp)}.
    \end{equation}
    In the discrete setting, probability density functions are replaced by probability mass functions.

    \textit{Remark.} Using \eqref{eq:pcompthetacontinuous} in \eqref{eq:be}, we can interpret the GBT model with a Gaussian prior (or a prior from any law of the exponential family) as an exponential family for the random variable $\Theta$, where $\Comp$ is the natural parameter. 
    
    \subsection{Maximum A Posteriori Estimator} 
   We define the \emph{negative log-posterior}, either for discrete or continuous comparisons,  as 
   \begin{equation} \label{eq:LBayes}
       \mathcal{L} (\Theta|\Comp) =  - \log p (\Theta |\Comp).
   \end{equation}
   Interestingly, this yields a loss function that directly depends on the cumulant-generating function, as
   \begin{equation}
   \label{eq:total_loss}
       \mathcal{L} (\Theta|\Comp) =  \frac{1}{2\sigma^2} \sum_{a \in \mathcal{A}}\theta_a^2 + \sum_{(ab) \in \mathcal{C}} \left( \Phi_f(\theta_{ab}) - r_{ab} \theta_{ab} \right).
   \end{equation}
   In particular, the nice properties of $\Phi_f$ 
   and the fact that the Gaussian prior is turned into a quadratic regularization
   imply that the negative log-posterior $\mathcal{L}$ is well-behaved.
   
       \begin{proposition} \label{prop:existuniquemin}
For any comparison matrix $\Comp$,
the negative log-posterior $\mathcal{L}(\cdot |\Comp)$ is $(1/\sigma^2)$-strongly convex, and thus admits a unique minimizer $\Theta^*_{f, \sigma^2}  (\Comp) \in \R^{{A}}$.
\end{proposition}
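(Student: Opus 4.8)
The plan is to read off the Hessian of $\mathcal{L}(\cdot\,|\Comp)$ from its closed form \eqref{eq:total_loss} and show that it is uniformly bounded below by $(1/\sigma^2)I_A$, from which strong convexity and then existence/uniqueness of the minimizer follow by standard arguments. First I would record that, by \eqref{eq:total_loss}, $\mathcal{L}(\Theta|\Comp)$ splits as the quadratic term $\frac{1}{2\sigma^2}\norm{\Theta}{2}^2$ coming from the Gaussian prior, plus the data term $\sum_{(ab)\in\mathcal{C}}\bigl(\Phi_f(\theta_{ab})-r_{ab}\theta_{ab}\bigr)$. Since $\Phi_f$ is infinitely smooth on $\R$ by Theorem~\ref{theo:Phi}, and each $\theta_{ab}=\theta_a-\theta_b=\langle e_a-e_b,\Theta\rangle$ (with $e_a\in\R^A$ the $a$-th canonical basis vector) is linear in $\Theta$, the function $\mathcal{L}(\cdot\,|\Comp)$ is $C^\infty$ on $\R^A$.

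Next I would compute the Hessian term by term. The prior term contributes $(1/\sigma^2)I_A$; the linear parts $-r_{ab}\theta_{ab}$ contribute nothing; and by the chain rule each $\Theta\mapsto\Phi_f(\theta_{ab})$ contributes $\Phi_f''(\theta_{ab})\,(e_a-e_b)(e_a-e_b)^\top$. Hence
\[
\nabla^2\mathcal{L}(\Theta|\Comp)=\frac{1}{\sigma^2}I_A+\sum_{(ab)\in\mathcal{C}}\Phi_f''(\theta_{ab})\,(e_a-e_b)(e_a-e_b)^\top .
\]
By Theorem~\ref{theo:Phi}, $\Phi_f$ is strictly convex, so $\Phi_f''>0$ everywhere, and each rank-one matrix $(e_a-e_b)(e_a-e_b)^\top$ is positive semidefinite; therefore the sum is positive semidefinite and $\nabla^2\mathcal{L}(\Theta|\Comp)\succeq(1/\sigma^2)I_A$ for every $\Theta\in\R^A$. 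This is exactly $(1/\sigma^2)$-strong convexity of $\mathcal{L}(\cdot\,|\Comp)$. (One can equally avoid second derivatives: the data term is a sum of compositions of the convex $\Phi_f$ with linear maps, plus linear terms, hence convex, and adding the $(1/\sigma^2)$-strongly convex prior term preserves that modulus.)

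For the final claim, existence and uniqueness are standard consequences of strong convexity on $\R^A$. Uniqueness follows from strict convexity. For existence, strong convexity gives $\mathcal{L}(\Theta|\Comp)\ge\mathcal{L}(0|\Comp)+\langle\nabla\mathcal{L}(0|\Comp),\Theta\rangle+\frac{1}{2\sigma^2}\norm{\Theta}{2}^2$, whose right-hand side tends to $+\infty$ as $\norm{\Theta}{2}\to\infty$; together with continuity of $\mathcal{L}$, this coercivity forces the infimum to be attained at some point, which I would denote $\Theta^*_{f,\sigma^2}(\Comp)\in\R^A$.

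I do not expect a real obstacle here: the only mild subtlety is the differentiation-under-the-integral claim behind "$\Phi_f\in C^\infty$ with $\Phi_f''>0$", but this is precisely the content of Theorem~\ref{theo:Phi}, which we may invoke, so nothing further is required.
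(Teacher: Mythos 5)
Your proof is correct and follows essentially the same route as the paper's: both isolate the $\frac{1}{2\sigma^2}\|\Theta\|_2^2$ term from the Gaussian prior as the source of the $(1/\sigma^2)$-strong convexity and observe that the remaining data term $\sum_{(ab)}\bigl(\Phi_f(\theta_{ab})-r_{ab}\theta_{ab}\bigr)$ is convex because $\Phi_f$ is convex and $\theta_{ab}$ is linear in $\Theta$ (your explicit Hessian computation and the parenthetical composition argument are just two phrasings of this, and only $\Phi_f''\ge 0$ is actually needed there, which convexity already gives). The paper's version is terser and leaves the existence-via-coercivity step implicit, but the content is the same.
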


Since it is the mode of the posterior, $\Theta^*_{f, \sigma^2}(\Comp)$ is commonly known as the \emph{maximum a posteriori (MAP) estimator}. 
Crucially, for any $f$-GBT model with a normal prior, Proposition~\ref{prop:existuniquemin} guarantees that the MAP is well-defined and fastly computable by any strongly convex optimizer.
In fact, it is used in Tournesol to estimate individual scores from user comparisons~\cite{Pipeline2023}.

    \subsection{First Properties of Score Estimators}

In the following, 
we show that the maximum likelihood score estimator $\Theta^*_{f, \sigma^2}(\Comp)$ has zero mean in  Proposition~\ref{prop:bayesianconstraint}, 
we provide its first two moments in Proposition \ref{prop:momentstheta}, 
and we give a bound on its supremum norm in Proposition~\ref{prop:boundinfinity}. 
The proofs are in the appendix.

\begin{proposition} \label{prop:bayesianconstraint}
    For any comparison matrix $\Comp$, 
    the MAP estimator 
    $\theta^* = \Theta^*_{f, \sigma^2} (\Comp)$ verifies $ \sum_{a \in \mathcal{A}}  \theta_a^*  = 0$.
\end{proposition}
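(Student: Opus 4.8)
The plan is to exploit the translation invariance of the comparison-dependent part of the negative log-posterior \eqref{eq:total_loss}. Write $\mathcal{L}(\Theta|\Comp) = \frac{1}{2\sigma^2}\sum_{a\in\mathcal{A}}\theta_a^2 + G(\Theta)$ with $G(\Theta) = \sum_{(ab)\in\mathcal{C}}\big(\Phi_f(\theta_{ab}) - r_{ab}\theta_{ab}\big)$. Since each summand of $G$ depends on $\Theta$ only through the difference $\theta_{ab} = \theta_a - \theta_b$, the function $G$ is invariant under the shift $\Theta \mapsto \Theta + t\,\onevector$ for every $t \in \R$, where $\onevector = (1,\dots,1)\in\R^A$. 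This is the only structural fact we need.

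Let $\theta^* = \Theta^*_{f,\sigma^2}(\Comp)$ be the minimizer, which exists and is unique by Proposition~\ref{prop:existuniquemin}, and consider the scalar function $\phi(t) = \mathcal{L}(\theta^* + t\,\onevector \mid \Comp)$. Using the invariance of $G$,
\[
   \phi(t) = \frac{1}{2\sigma^2}\sum_{a\in\mathcal{A}}(\theta_a^* + t)^2 + G(\theta^*),
\]
which is a strictly convex quadratic in $t$ whose unique minimum is attained at $t = -\frac{1}{A}\sum_{a\in\mathcal{A}}\theta_a^*$. Because $\theta^*$ globally minimizes $\mathcal{L}(\cdot\mid\Comp)$, the value $t = 0$ must minimize $\phi$; by uniqueness of the minimizer of a strictly convex function this forces $-\frac{1}{A}\sum_{a\in\mathcal{A}}\theta_a^* = 0$, i.e. $\sum_{a\in\mathcal{A}}\theta_a^* = 0$.

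An equivalent route is to use the first-order condition $\nabla\mathcal{L}(\theta^*\mid\Comp) = 0$: differentiating the identity $G(\Theta + t\,\onevector) = G(\Theta)$ at $t=0$ gives $\sum_{a\in\mathcal{A}}\partial_{\theta_a}G(\Theta) = 0$ for every $\Theta$, so summing the $A$ coordinate equations of the optimality condition leaves only $\frac{1}{\sigma^2}\sum_{a\in\mathcal{A}}\theta_a^* = 0$. Either way the argument is very short; there is essentially no obstacle. The only thing worth keeping in mind is that it is precisely the difference structure $\theta_{ab} = \theta_a - \theta_b$ — equivalently, the evenness of $\Phi_f$ together with the antisymmetry $r_{ab} = -r_{ba}$ — that makes $G$ constant along the direction $\onevector$, while the quadratic Gaussian regularizer pins down the remaining translational degree of freedom at mean zero, independently of how $\mathcal{C}$ is encoded (ordered or unordered pairs).
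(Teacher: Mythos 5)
Your proposal is correct, and its primary argument is genuinely different from the paper's. The paper proves the claim by writing out the first-order condition coordinate-wise, summing the $A$ equations, and then cancelling $\sum_{(ab)\in\mathcal{C}} r_{ab}$ and $\sum_{(ab)\in\mathcal{C}}\Phi_f'(\theta^*_{ab})$ using the antisymmetry $r_{ab}=-r_{ba}$ together with the oddness of $\Phi_f'$ — a cancellation that implicitly relies on $\mathcal{C}$ containing $(ba)$ whenever it contains $(ab)$. Your main route instead isolates the translation invariance of the data term $G$ along $\onevector$ (which follows from the difference structure $\theta_{ab}=\theta_a-\theta_b$ alone, with no appeal to evenness or antisymmetry) and lets the quadratic prior select the zero-mean representative; your ``equivalent route'' in the second paragraph is essentially the paper's computation in disguise. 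What your version buys is robustness and generality: it works verbatim for any comparison-dependent term that is a function of score differences, and, as you note, it is insensitive to whether $\mathcal{C}$ is encoded with ordered or unordered pairs. The one small imprecision is the closing parenthetical: the invariance of $G$ under $\Theta\mapsto\Theta+t\,\onevector$ is \emph{not} equivalent to ``evenness of $\Phi_f$ plus antisymmetry of $r$'' — it holds regardless of those properties, which are only needed to make the paper's coordinate-wise cancellation go through. This does not affect the validity of your argument.
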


We denote by $\mathbb{E} [g(\Comp)|\Theta]$ and $\mathbb{V}[g(\Comp)|\Theta]$ the mean and the variance of $g(\Comp)$ conditionally to the scores. 
Also, denote by $\mathbb{C}\mathrm{ov}[g(\Comp),h(\Comp)|\Theta] $
the covariance of $g(\Comp)|\Theta$ and $ h(\Comp)|\Theta$.

\begin{proposition} 
    \label{prop:momentstheta}
    Let $(ab), (cd) \in \mathcal{C}$ with $(cd) \notin \{(ab), (ba)\}$.
    Then,
    \begin{align} \label{eq:meanvariance}
        &\mathbb{E}[r_{ab}|\Theta] = \Phi'_f(\theta_{ab})  \quad  \text{and}  \\ 
        &\mathbb{V}[r_{ab}|\Theta] = \Phi''_f(\theta_{ab}), \quad 
        \mathbb{C}\mathrm{ov}[r_{ab}, r_{cd} |\Theta ] = 0. \nonumber
    \end{align}
\end{proposition}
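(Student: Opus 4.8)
The plan is to recognize that all three identities are classical properties of exponential families, specialized to the $f$-GBT model, and that the cross-covariance statement follows immediately from conditional independence. The key observation is that, conditionally on $\Theta$, the distribution of a single comparison $r_{ab}$ given by \eqref{eq:cecicela} is an exponential family in the natural parameter $\theta_{ab}$, with $\Phi_f(\theta_{ab})$ playing the role of the log-partition (cumulant-generating) function. For such families, the first two cumulants — mean and variance — are exactly the first two derivatives of the cumulant-generating function. By Theorem~\ref{theo:Phi}, $\Phi_f$ is infinitely smooth, so these derivatives exist and the differentiation-under-the-integral steps below are justified.

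Concretely, for the mean I would start from $\Phi_f(\theta) = \log \int_\R \mathrm{e}^{\theta r}\,\mathrm{d}F(r)$ and differentiate once with respect to $\theta$, obtaining
\begin{equation}
\Phi_f'(\theta) = \frac{\int_\R r\, \mathrm{e}^{\theta r}\,\mathrm{d}F(r)}{\int_\R \mathrm{e}^{\theta r}\,\mathrm{d}F(r)}.
\end{equation}
Comparing with the density \eqref{eq:cecicela} (resp. the mass function \eqref{eq:Pcompthetadiscrete} in the discrete case), the right-hand side is precisely $\mathbb{E}[r_{ab}\mid\Theta]$ when $\theta = \theta_{ab}$, which gives the first identity. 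Differentiating once more, a direct computation yields
\begin{equation}
\Phi_f''(\theta) = \frac{\int_\R r^2\, \mathrm{e}^{\theta r}\,\mathrm{d}F(r)}{\int_\R \mathrm{e}^{\theta r}\,\mathrm{d}F(r)} - \left( \frac{\int_\R r\, \mathrm{e}^{\theta r}\,\mathrm{d}F(r)}{\int_\R \mathrm{e}^{\theta r}\,\mathrm{d}F(r)} \right)^2 = \mathbb{E}[r_{ab}^2\mid\Theta] - \mathbb{E}[r_{ab}\mid\Theta]^2 = \mathbb{V}[r_{ab}\mid\Theta]
\end{equation}
at $\theta = \theta_{ab}$, which is the second identity. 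In the discrete case the integrals against $\mathrm{d}F$ are replaced by sums against the pmf $\mathbb{P}[\cdot\,|0]$ over $\Dc$, and the computation is otherwise identical.

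Finally, for the covariance term, I would invoke Proposition~\ref{prop:lawGBT}: conditionally on $\Theta$, the comparisons $(r_{ab})_{(ab)\in\mathcal{C}}$ are mutually independent, and since $(cd)\notin\{(ab),(ba)\}$ the variables $r_{ab}$ and $r_{cd}$ are distinct members of this independent family, hence uncorrelated, so $\mathbb{C}\mathrm{ov}[r_{ab}, r_{cd}\mid\Theta] = 0$. I do not anticipate a genuine obstacle here; the only point requiring minor care is justifying the interchange of differentiation and integration defining $\Phi_f', \Phi_f''$, which is standard for exponential families with finite exponential moments (dominated convergence on compact $\theta$-neighborhoods, using the finiteness assumption on $\int_\R \mathrm{e}^{r\theta}\,\mathrm{d}F(r)$), and is in any case subsumed by the smoothness of $\Phi_f$ already asserted in Theorem~\ref{theo:Phi}.
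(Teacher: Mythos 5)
Your proposal is correct and follows essentially the same route as the paper: both identify $\mathbb{E}[r_{ab}\mid\Theta]$ and $\mathbb{V}[r_{ab}\mid\Theta]$ with the ratio expressions obtained by differentiating $\Phi_f$ once and twice (the paper simply cites these formulas, \eqref{eq:phiprimeab} and \eqref{eq:phiprimeprimeab}, as already established in the proof of Theorem~\ref{theo:Phi}), and both derive the vanishing covariance from the conditional independence in Proposition~\ref{prop:lawGBT}. No gaps; the interchange of differentiation and integration you flag is indeed handled once and for all in the proof of Theorem~\ref{theo:Phi}.
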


\begin{proposition}
    \label{prop:boundinfinity}
    If $r_{\max} < \infty$, then the MAP estimator $\Theta^*_{f, \sigma^2}(\Comp)$ verifies, for any $a \in \mathcal{A}$,
    \begin{equation}
        |\theta_a^*(\Comp)| \leq 2 A_a r_{\max} \sigma^2,
    \end{equation}
    and therefore
    \begin{equation}
    \|\Theta^*_{f, \sigma^2}(\Comp) \|_\infty \leq  2 r_{\max}    \sigma^2 \sup_{a \in \mathcal{A}} A_a \leq  2 r_{\max}  \sigma^2 (A-1). 
    \end{equation}
\end{proposition}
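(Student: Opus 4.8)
The plan is to exploit the first-order optimality condition for the strongly convex loss \eqref{eq:total_loss}. By Proposition~\ref{prop:existuniquemin}, $\theta^* = \Theta^*_{f,\sigma^2}(\Comp)$ is the unique point at which the (smooth) gradient of $\mathcal{L}(\cdot\,|\Comp)$ vanishes, so it suffices to differentiate \eqref{eq:total_loss} coordinatewise and solve for $\theta_a^*$.

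First I would compute $\partial \mathcal{L}/\partial \theta_a$. The alternative $a$ enters the quadratic term only through $\theta_a^2/(2\sigma^2)$, with derivative $\theta_a/\sigma^2$, and enters each comparison term $\Phi_f(\theta_{ab}) - r_{ab}\theta_{ab}$ with $b \in \mathcal{A}_a$. Using that $\Phi_f'$ is odd (Theorem~\ref{theo:Phi}) and that $r_{ab} = -r_{ba}$, one checks that the contribution of the pair $\{a,b\}$ to $\partial \mathcal{L}/\partial \theta_a$ equals $\Phi_f'(\theta_{ab}) - r_{ab}$ regardless of the orientation in which the pair is recorded. Setting the gradient to zero therefore yields the stationarity identity
\begin{equation*}
  \theta_a^* \;=\; -\,\sigma^2 \sum_{b \in \mathcal{A}_a} \bigl( \Phi_f'(\theta_{ab}^*) - r_{ab} \bigr).
\end{equation*}

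It then remains to bound the right-hand side termwise. By Theorem~\ref{theo:Phi}, since $r_{\max} < \infty$, the image of $\Phi_f'$ is the open interval $(-r_{\max}, r_{\max})$, so $|\Phi_f'(\theta_{ab}^*)| < r_{\max}$; and since $r_{ab}$ is a genuine comparison it lies in $\Dc$, while $-r_{ab} = r_{ba}$ lies in $\Dc$ as well, forcing $|r_{ab}| \le \sup \Dc = r_{\max}$. Hence each summand has modulus at most $2 r_{\max}$, the sum ranges over $A_a$ indices, and the triangle inequality gives $|\theta_a^*(\Comp)| \le 2 A_a r_{\max} \sigma^2$. The supremum-norm bound then follows by taking the maximum over $a \in \mathcal{A}$ and using $A_a \le A - 1$.

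The only delicate points are the bookkeeping in the gradient computation — making sure the antisymmetric, possibly doubly listed representation of each compared pair does not introduce a spurious factor — and the justification that $|r_{ab}| \le r_{\max}$, which relies on reading $r_{ab}$ as a valid realization of the $f$-GBT model (so $r_{ab} \in \Dc$) together with the antisymmetry assumption $r_{ab} = -r_{ba}$. Everything else is elementary once Theorem~\ref{theo:Phi} is available.
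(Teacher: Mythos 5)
Your proposal is correct and follows essentially the same route as the paper: both derive the stationarity identity $\theta_a^* = -\sigma^2 \sum_{b \in \mathcal{A}_a} \bigl( \Phi_f'(\theta_{ab}^*) - r_{ab} \bigr)$ from the vanishing gradient of the negative log-posterior and then bound each summand by $2 r_{\max}$ using $|\Phi_f'| \le r_{\max}$ (Theorem~\ref{theo:Phi}) and $|r_{ab}| \le r_{\max}$. Your extra care about the orientation bookkeeping and the symmetry of $\Dc$ is a slight refinement of the paper's argument, not a departure from it.
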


\section{Monotonicity of MAP Score  Estimators}\label{sec:monotonicity}

In this section, we prove a desirable property of MAP estimators 
for all GBT models with a Gaussian prior.
Namely, we show that the more an alternative wins comparisons, the better it is scored.

    \subsection{Partial Orders over Paired Comparisons}
    \label{sec:partialorders}

Let us first formalize a partial order between comparisons associated to a given alternative, 
which captures the idea that $a$ is better compared when all the other comparisons between alternatives different from $a$ are fixed. 
Note that this partial order is only defined for comparison matrices $\Comp$ and $\Comp'$ sharing the same set of compared pairs $\mathcal{C}$. 

\begin{definition} \label{def:partialorder}
    For any alternative $a \in \mathcal{A}$, we say that two comparison matrices $\Comp$ and $\Comp'$ satisfy 
    \begin{equation}
        \Comp \leq_a \Comp'
    \end{equation}
    if (i)  $\comparison_{a b} \leq \comparison_{a b}'$ for all $b$ such that $(ab) \in \mathcal{C}$ and (ii)  $ \comparison_{cd} = \comparison_{cd}'$ for $(cd) \in \mathcal{C}$ with $a \notin \{ c,d \}$. The relation 
    \begin{equation}
        \Comp <_a \Comp'
    \end{equation}
    means that (i) is strict for at least one $b$ such that  $(ab) \in \mathcal{C}$. 
\end{definition}

We formalize the notion that a score estimator is consistent with the partial orders of Definition~\ref{def:partialorder}.

\begin{definition} \label{def:increasingscoring}
    We say that an estimator $\widehat{\Theta}(\Comp)$ is \emph{increasing} with respect to the $\Comp$ if, for any $a \in \mathcal{A}$, 
    \begin{equation}
        \Comp \leq_a \Comp' \Longrightarrow \widehat{\theta}_a (\Comp) \leq \widehat{\theta}_a(\Comp'),
    \end{equation}
    and \emph{strictly increasing} with respect to $\Comp$ if, for any $a \in \mathcal{A}$, 
    \begin{equation}
        \Comp <_a \Comp' \Longrightarrow \widehat{\theta}_a (\Comp) < \widehat{\theta}_a (\Comp'). 
    \end{equation} 
\end{definition}

    \subsection{Elementary Monotonicity Criteria}

We provide criteria for the monotonicity of score estimators for continuous and discrete comparisons, in Proposition~\ref{prop:onpartialrab} and Proposition~\ref{prop:ondeltarabdiscrete} that are proved in the Appendix.

\begin{proposition} \label{prop:onpartialrab}
    We suppose that $\Comp$ is continuous-domain and that the estimator $\widehat{\Theta}(\Comp)$ is differentiable with respect to $r_{ab}$ for any $(ab) \in \mathcal{C}$. Then, $\widehat{\Theta}(\Comp)$ is increasing with respect to $\Comp$ if and only if, for any $(ab) \in \mathcal{C}$ and any $\Comp$,
    \begin{equation} \label{eq:partialtheta}
        \partial_{r_{ab}} \widehat{\theta}_a(\Comp) \geq 0.
    \end{equation}
    It is moreover strictly increasing if and only if the inequalities are strict.
\end{proposition}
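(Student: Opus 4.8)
\emph{Proof sketch.} The plan is to reduce the statement, which compares $\widehat{\Theta}$ at two matrices that may differ in several coordinates, to a family of one-dimensional monotonicity statements — one for each comparison involving $a$ — and then to invoke the elementary fact that a differentiable real function is non-decreasing exactly when its derivative is everywhere non-negative (and increasing when it is everywhere positive). For each $(ab)\in\mathcal{C}$, I write $E_{ab}$ for the comparison data equal to $+1$ on the entry $(ab)$, $-1$ on the entry $(ba)$, and $0$ on every other pair of $\mathcal{C}$; adding a real multiple of $E_{ab}$ to a comparison matrix preserves antisymmetry and the index set $\mathcal{C}$, and realizes exactly the move ``increase $r_{ab}$'' of Definition~\ref{def:partialorder}.

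For the ``if'' direction, fix $a\in\mathcal{A}$ and $\Comp\leq_a\Comp'$, enumerate $\mathcal{A}_a=\{b_1,\dots,b_{A_a}\}$, set $\Delta_j=r_{ab_j}'-r_{ab_j}\geq 0$, and define the telescoping path $\Comp^{(0)}=\Comp$, $\Comp^{(j)}=\Comp^{(j-1)}+\Delta_j E_{ab_j}$. Because the $b_j$ are distinct, the $j$-th step is the only one touching the entries $(ab_j)$ and $(b_j a)$, so $\Comp^{(A_a)}$ agrees with $\Comp'$ on every entry $(ab_j)$, hence by antisymmetry on every $(b_j a)$, and condition (ii) of $\leq_a$ yields agreement on all remaining entries; thus $\Comp^{(A_a)}=\Comp'$. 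For each $j$, the map $s\mapsto\widehat{\theta}_a(\Comp^{(j-1)}+s\Delta_j E_{ab_j})$ on $[0,1]$ is, up to the affine reparametrization $s\mapsto r_{ab_j}+s\Delta_j$, the single-variable function obtained by varying only the coordinate $r_{ab_j}$; by hypothesis it is differentiable, with derivative $\Delta_j\,\partial_{r_{ab_j}}\widehat{\theta}_a(\,\cdot\,)\geq 0$ by \eqref{eq:partialtheta} and $\Delta_j\geq 0$. Hence each increment $\widehat{\theta}_a(\Comp^{(j)})-\widehat{\theta}_a(\Comp^{(j-1)})$ is non-negative and, summing, $\widehat{\theta}_a(\Comp)\leq\widehat{\theta}_a(\Comp')$. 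If moreover $\Comp<_a\Comp'$, then $\Delta_{j_0}>0$ for some $j_0$, and when \eqref{eq:partialtheta} holds strictly the corresponding segment has strictly positive derivative throughout, so that increment is strictly positive and $\widehat{\theta}_a(\Comp)<\widehat{\theta}_a(\Comp')$.

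For the ``only if'' direction, fix $(ab)\in\mathcal{C}$ and an arbitrary $\Comp$. For $s_1<s_2$ one checks directly that $\Comp+s_1E_{ab}\leq_a\Comp+s_2E_{ab}$; hence, if $\widehat{\Theta}$ is increasing, $s\mapsto\widehat{\theta}_a(\Comp+sE_{ab})$ is non-decreasing, and being differentiable at $0$ with derivative $\partial_{r_{ab}}\widehat{\theta}_a(\Comp)$, that derivative is $\geq 0$, which is \eqref{eq:partialtheta}.

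The step I expect to be the main obstacle is the converse in the strict case: deducing $\partial_{r_{ab}}\widehat{\theta}_a(\Comp)>0$ everywhere from strict monotonicity. A strictly increasing differentiable function of one variable may have a vanishing derivative at isolated points (e.g.\ $s\mapsto s^{3}$), so restricting to a single coordinate only gives $\geq 0$; closing this gap seems to require the stronger regularity actually at hand — for the GBT MAP estimator of \eqref{eq:total_loss} the implicit function theorem applied to the $(1/\sigma^{2})$-strongly convex loss produces a smooth map $\Comp\mapsto\widehat{\Theta}(\Comp)$ whose partials with respect to the comparisons are explicit and can be shown positive — rather than an argument valid for an arbitrary differentiable estimator. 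The remaining ingredients (the affine reparametrizations and the one-variable monotonicity lemma) are routine.
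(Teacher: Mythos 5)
Your proof matches the paper's almost line for line: the same elementary antisymmetric matrices $\mathrm{E}^{ab}$, the same telescoping path through the neighbours $b_1,\dots,b_{A_a}$ with the fundamental theorem of calculus applied on each one-coordinate segment, and the same non-negative difference-quotient limit for the converse. The obstacle you flag in the strict case is real, but the paper does not resolve it either --- its proof only establishes that strictly positive partials imply strict monotonicity and asserts the converse is ``the same'', whereas (as your $s\mapsto s^3$ example shows) strict monotonicity of a general differentiable estimator does not force strictly positive partials at every point; fortunately only the direction you did prove is used downstream, in the proof of Theorem~\ref{theo:monotone}.
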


For discrete comparisons, we define finite-difference operators over scores as follows.
For any $\Comp$ and $(ab) \in \mathcal{C}$, we define $\Comp'_{ab}$ as the comparison with identical scores, except at position $(ab)$ where the score is increased from $r_k$ to $r_{k+1}$ (or remains unchanged if $r_K$ is reached) and reduces symmetrically the comparison at position $(ba)$. Then, we define the operator $\Delta_{(ab)}$ over functions $F : \Dc^C \rightarrow \R$,
\begin{equation} \label{eq:defdeltaab}
    \Delta_{(ab)} F (\Comp) = \frac{F(\Comp^{\mathrm{up}}_{(ab)}) - F(\Comp)}{r_{k+1} - r_k}
\end{equation}
if $r_k < r_{K}$ and $\Delta_{(ab)} F (\Comp) = 0$ if $r_k = r_K$.

\begin{proposition} \label{prop:ondeltarabdiscrete}
    We suppose that $\Comp$ is discrete-domain. Then, $\widehat{\Theta}(\Comp)$ is increasing with respect to $\Comp$ if and only if, for any $(ab) \in \mathcal{C}$ and any $\Comp$,
    \begin{equation} \label{eq:partialthetadiscrete}
        \Delta_{(ab)} \widehat{\theta}_a(\Comp) \geq 0.
    \end{equation}
    It is moreover strictly increasing if and only if the inequalities are strict.
\end{proposition}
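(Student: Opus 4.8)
The plan is to reduce the discrete statement to the analysis of a single \emph{elementary up-step} and then chain such steps, playing the role that the fundamental theorem of calculus plays in the proof of Proposition~\ref{prop:onpartialrab}. The key structural fact is that the partial order $\leq_a$ is exactly the order generated by the moves $\Comp \mapsto \Comp^{\mathrm{up}}_{(ab)}$ appearing in~\eqref{eq:defdeltaab}: since $\Comp$ and $\Comp^{\mathrm{up}}_{(ab)}$ differ only at the positions $(ab)$ and $(ba)$ (both involving $a$) and $r_{ab}$ is only increased, one has $\Comp \leq_a \Comp^{\mathrm{up}}_{(ab)}$, with strict relation $\Comp <_a \Comp^{\mathrm{up}}_{(ab)}$ whenever $r_{ab} < r_K$. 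Conversely, if $\Comp \leq_a \Comp'$, then for every $b$ with $(ab) \in \mathcal{C}$ both $r_{ab}$ and $r'_{ab}$ lie in the discrete set $\mathcal{K}$, which has no accumulation point, so only finitely many values of $\mathcal{K}$ lie between them; concatenating, for each such $b$, the corresponding notch-by-notch increases yields a finite chain $\Comp = \Comp^{(0)}, \Comp^{(1)}, \dots, \Comp^{(M)} = \Comp'$ in which each $\Comp^{(i+1)} = (\Comp^{(i)})^{\mathrm{up}}_{(a b_i)}$ for some $b_i \in \mathcal{A}_a$.

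For the ``only if'' direction, assume $\widehat{\Theta}$ is increasing and fix $(ab) \in \mathcal{C}$ and $\Comp$. If the current value of $r_{ab}$ equals $r_K$, then $\Delta_{(ab)} \widehat{\theta}_a(\Comp) = 0$ by definition and~\eqref{eq:partialthetadiscrete} holds. Otherwise $\Comp \leq_a \Comp^{\mathrm{up}}_{(ab)}$, so $\widehat{\theta}_a(\Comp) \leq \widehat{\theta}_a(\Comp^{\mathrm{up}}_{(ab)})$; dividing the nonnegative numerator by the positive increment $r_{k+1} - r_k$ gives $\Delta_{(ab)} \widehat{\theta}_a(\Comp) \geq 0$. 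If $\widehat{\Theta}$ is strictly increasing and $r_{ab} < r_K$, then $\Comp <_a \Comp^{\mathrm{up}}_{(ab)}$ forces the numerator to be strictly positive, hence $\Delta_{(ab)} \widehat{\theta}_a(\Comp) > 0$.

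For the ``if'' direction, assume~\eqref{eq:partialthetadiscrete} holds for every $(ab) \in \mathcal{C}$ and every $\Comp$, and let $\Comp \leq_a \Comp'$. Along the finite chain above, the defining identity~\eqref{eq:defdeltaab} gives, at each step,
\[
    \widehat{\theta}_a(\Comp^{(i+1)}) - \widehat{\theta}_a(\Comp^{(i)}) = (r_{k_i + 1} - r_{k_i})\, \Delta_{(a b_i)} \widehat{\theta}_a(\Comp^{(i)}) \geq 0
\]
(with a zero difference for any trivial step). Telescoping over $i = 0, \dots, M-1$ yields $\widehat{\theta}_a(\Comp) \leq \widehat{\theta}_a(\Comp')$, so $\widehat{\Theta}$ is increasing. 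If moreover the inequalities in~\eqref{eq:partialthetadiscrete} are strict whenever $r_{ab} < r_K$, and $\Comp <_a \Comp'$, then by definition at least one coordinate $r_{ab}$ strictly increases between $\Comp$ and $\Comp'$; hence at least one step of the chain is a non-trivial up-step at a position $(a b_i)$ with $r_{a b_i} < r_K$, contributing a strictly positive increment, while every other step contributes a nonnegative one, so the telescoping sum is strictly positive and $\widehat{\theta}_a(\Comp) < \widehat{\theta}_a(\Comp')$.

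The only point requiring care is the chain construction in the first paragraph: one has to check that running through the notch-by-notch increases of the various coordinates $r_{a b}$, $b \in \mathcal{A}_a$, one after another really is a sequence of legitimate $\Comp^{\mathrm{up}}_{(\cdot)}$ moves --- in particular that the symmetric decrease imposed at $(b a)$ is consistent with the antisymmetry $r_{a b} = - r_{b a}$ and with $\mathcal{K}$ being stable under $x \mapsto -x$ --- and that every intermediate matrix $\Comp^{(i)}$ still belongs to $\Dc^C$, so that the hypothesis ``for any $\Comp$'' in~\eqref{eq:partialthetadiscrete} legitimately applies to each of them. Granting this, the argument is a routine discrete transcription of the proof of Proposition~\ref{prop:onpartialrab}, with finite differences replacing derivatives and a telescoping sum replacing the integral.
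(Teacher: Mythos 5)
Your proof is correct and follows essentially the same route as the paper's: both directions reduce to elementary up-steps $\Comp \mapsto \Comp^{\mathrm{up}}_{(ab)}$, with the converse handled by chaining such steps and telescoping the finite differences (the paper's own write-up even contains a small typo, dividing by $r_{k_n+1}-r_{k_n}$ where your identity correctly multiplies; the sign argument is unaffected). Your additional remarks on the boundary case $r_{ab}=r_K$, the strict version, and the well-definedness of the chain are consistent with, and slightly more explicit than, the paper's argument.
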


    \subsection{Monotonicity of GBT Estimators}

    We show that the monotonicity of GBT estimators is automatically satisfied.  The proof is in the Appendix. 


\begin{theorem} \label{theo:monotone}
For any $f$-GBT model with a Gaussian prior, 
the MAP estimator $\Theta^* (\Comp)$ is 
strictly increasing with $\Comp$ in the sense of Definition~\ref{def:increasingscoring}. 
\end{theorem}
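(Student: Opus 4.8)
The plan is to reduce the theorem to a single analytic fact — how the MAP moves when one comparison $r_{ab}$ is perturbed — and then to a purely linear-algebraic statement about the inverse of the Hessian of $\mathcal{L}$. Although the statement distinguishes the discrete and continuous settings, I would first note that the loss $\mathcal{L}(\Theta\mid\Comp)$ in \eqref{eq:total_loss} is well-defined, jointly $C^\infty$ in $(\Theta,\Comp)$, and $(1/\sigma^2)$-strongly convex in $\Theta$ for \emph{every} real array $\Comp\in\R^{\mathcal{C}}$, not only for $\Comp$ supported on the discrete domain. Hence $\Theta^*=\Theta^*_{f,\sigma^2}$ extends to a map $\R^{\mathcal{C}}\to\R^{A}$, and since the Hessian $H(\Theta):=\nabla^2_\Theta\mathcal{L}$ is everywhere positive definite (Proposition~\ref{prop:existuniquemin}), the implicit function theorem applied to the stationarity equation $\nabla_\Theta\mathcal{L}(\Theta^*(\Comp),\Comp)=0$ shows that $\Theta^*$ is smooth in $\Comp$, with
\[
\partial_{r_{ab}}\Theta^*(\Comp) = 2\,H^{-1}(e_a-e_b),
\]
where $H$ is evaluated at $\Theta^*(\Comp)$, and the factor and signs come from $\partial_{r_{ab}}\nabla_\Theta\mathcal{L}=-2(e_a-e_b)$: the only $r_{ab}$-dependence of $\mathcal{L}$ sits in the linear terms $-r_{ab}\theta_{ab}$ attached to the pairs $(ab)$ and $(ba)$, recalling $r_{ba}=-r_{ab}$. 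In particular $\partial_{r_{ab}}\theta^*_a(\Comp)=2\bigl(H^{-1}_{aa}-H^{-1}_{ab}\bigr)$.

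Next I would identify the structure of $H$. Differentiating \eqref{eq:total_loss} twice gives $H=\sigma^{-2}I+2L$, where $L$ is the weighted graph Laplacian on $\mathcal{A}$ with strictly positive edge weights $w_{ab}=\Phi_f''(\theta^*_{ab})>0$ (strict convexity of $\Phi_f$, Theorem~\ref{theo:Phi}). Thus $H$ is a symmetric positive-definite matrix with nonpositive off-diagonal entries and constant row sums equal to $\sigma^{-2}>0$ — a strictly diagonally dominant Stieltjes matrix. The key lemma is then: for any such $H$, its inverse $M=H^{-1}$ satisfies $M_{aa}>M_{ba}$ for every $b\neq a$. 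I would prove this by a discrete maximum-principle argument on the column $v:=M_{\cdot a}$, which solves $Hv=e_a$. Since $M$ is positive definite, $v_a=M_{aa}>0$; and if the maximum of $v$ were attained at some index $b^*\neq a$, then the $b^*$-th equation $\sum_c H_{b^*c}v_c=0$, together with $H_{b^*c}\le 0$ for $c\neq b^*$, $v_c\le v_{b^*}$, and $\sum_c H_{b^*c}=\sigma^{-2}$, would force $0\ge\sigma^{-2}v_{b^*}$, i.e. $v_{b^*}\le 0$, contradicting $v_{b^*}\ge v_a>0$. Hence $a$ is the unique argmax of $v$, so $M_{aa}=v_a>v_b=M_{ab}$ for all $b\neq a$ (using symmetry of $M$). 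Combining with the previous step yields $\partial_{r_{ab}}\theta^*_a(\Comp)>0$ for all $\Comp\in\R^{\mathcal{C}}$ and all $(ab)\in\mathcal{C}$.

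Finally I would translate this back into the two settings. In the continuous case, strict positivity of $\partial_{r_{ab}}\theta^*_a$ is exactly the strict criterion of Proposition~\ref{prop:onpartialrab}, so $\Theta^*$ is strictly increasing. In the discrete case, for $\Comp$ in the discrete domain and $(ab)$ with $r_{ab}=r_k<r_K$, integrating $\partial_{r_{ab}}\theta^*_a$ along the segment that raises $r_{ab}$ from $r_k$ to $r_{k+1}$ with all other entries fixed gives $\theta^*_a(\Comp^{\mathrm{up}}_{(ab)})-\theta^*_a(\Comp)=\int_{r_k}^{r_{k+1}}\partial_{r_{ab}}\theta^*_a\,\mathrm{d}r>0$, i.e. $\Delta_{(ab)}\theta^*_a(\Comp)>0$, which is the criterion of Proposition~\ref{prop:ondeltarabdiscrete}. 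The main obstacle is the linear-algebra lemma — establishing that the diagonal strictly dominates every other entry in each column of $H^{-1}$; the implicit-function-theorem computation and the bookkeeping of the antisymmetry $r_{ba}=-r_{ab}$ are routine.
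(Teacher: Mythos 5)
Your proof is correct, and its top-level skeleton coincides with the paper's: implicitly differentiate the stationarity condition $\nabla_\Theta\mathcal{L}(\Theta^*(\Comp)\mid\Comp)=0$, observe that the Hessian is $\sigma^{-2}I$ plus a weighted graph Laplacian with weights $\Phi_f''(\theta^*_{cd})>0$ (hence symmetric, strictly diagonally dominant, with nonpositive off-diagonal), and reduce strict monotonicity to the statement that the inverse of such a matrix has strictly dominant diagonal, $n_{aa}>n_{ab}$. Where you genuinely diverge is in the two supporting steps. First, you prove the linear-algebra lemma by a discrete maximum principle on the column $v=H^{-1}e_a$: this is more elementary and self-contained than the paper's route, which invokes the M-matrix characterization (nonnegativity of the inverse) and then a Schur-complement computation to get $n_{aa}-n_{ab}>0$; your argument also makes transparent that only the strict positivity of the row sums (here $\sigma^{-2}$) and positive definiteness are needed. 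Second, for the discrete case you extend $\Theta^*$ to all real comparison arrays (legitimate, since the loss \eqref{eq:total_loss} is defined and strongly convex for any real $r_{ab}$) and integrate $\partial_{r_{ab}}\theta^*_a>0$ along the segment from $r_k$ to $r_{k+1}$; the paper instead applies the finite-difference operator directly and uses a mean-value form $\Delta_{(ab)}\Phi_f'(\theta^*_{cd})=\Phi_f''(\theta^0_{cd})\Delta_{(ab)}\theta^*_{cd}$ at intermediate points, which requires a small additional check that the resulting matrix retains symmetry (via evenness of $\Phi_f''$) — a subtlety your continuous-extension route avoids entirely. The factor of $2$ in your $\partial_{r_{ab}}\Theta^*=2H^{-1}(e_a-e_b)$ versus the paper's normalization is just the double-counting convention for ordered pairs in $\mathcal{C}$ and does not affect the sign argument.
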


The proof relies on the monotonicity criteria for continuous (Proposition~\ref{prop:onpartialrab}) and discrete (Proposition~\ref{prop:ondeltarabdiscrete}) GBT models. In order to show that 
$\partial_{r_{ab}} \widehat{\theta}_a(\Comp) \geq 0$, 
we analyze the gradient relation 
$\nabla \mathcal{L}(\Theta^*_{f, \sigma^2}(\Comp)|\Theta) = 0$. By applying $\partial_{\theta_{ab}}$ to it, we obtain a linear system on $\partial_{\theta_{ab}}\Theta^*_{f, \sigma^2}(\Comp)$. The study of this linear system relies on the properties of diagonally-dominant matrices and leads to the desired result.

Note that our proof yields a slightly more general result, 
as the monotonicity actually holds for all coordinate-independent priors
(i.e. $\theta_a$ is a priori independent from $\theta_b$ for $a \neq b$)
which yield a strongly convex negative log-prior.
In fact, it can be extended to convex negative log-prior, 
or even to the maximum likelihood estimator (i.e. no prior),
if we consider inferred scores with values in $[-\infty, +\infty]$.

    \subsection{Impact of New Comparisons}
    
     Assume that $\Comp$ and $\Comp'$ are two comparison matrices over $\mathcal{C}$ and $\mathcal{C}'$ respectively, such that
    \begin{equation}
        \mathcal{C}' = \mathcal{C} \cup \{ (ab), (ba) \} 
    \end{equation}
where $(ab) \notin \mathcal{C}$. We assume that $r_{cd}=r_{cd}'$ for any $(cd) \in \mathcal{C}$, hence $\mathcal{C}$ and $\mathcal{C}'$ only differs by the new comparison. We evaluate the impact of this comparison $r_{ab}'$ over the score vector $\Theta^*_{f, \sigma^2}(\Comp')$. 

\begin{proposition} \label{prop:equalscores}
    For $\Comp$ and $\Comp'$ as defined above, we have the equivalences
    \begin{align} \label{eq:equivalencerab}
        \Theta^*_{f, \sigma^2}(\Comp) = \Theta^*_{f, \sigma^2}(\Comp') &\Longleftrightarrow r_{ab}' = \Phi_f'(\theta_{ab}^*(\Comp)), \\
        \Theta^*_{f, \sigma^2}(\Comp) < \Theta^*_{f, \sigma^2}(\Comp') &\Longleftrightarrow r_{ab}' > \Phi_f'(\theta_{ab}^*(\Comp)), \\
        \Theta^*_{f, \sigma^2}(\Comp) > \Theta^*_{f, \sigma^2}(\Comp') &\Longleftrightarrow r_{ab}' < \Phi_f'(\theta_{ab}^*(\Comp)). 
    \end{align}
\end{proposition}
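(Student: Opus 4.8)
The plan is to exploit the fact that both $\Theta^* = \Theta^*_{f,\sigma^2}(\Comp)$ and $\Theta'^* = \Theta^*_{f,\sigma^2}(\Comp')$ are characterized as the unique zeros of the gradients of their respective strongly convex losses (Proposition~\ref{prop:existuniquemin}), and that the two losses differ by exactly one term. Writing out the stationarity condition $\nabla \mathcal{L}(\Theta|\Comp) = 0$ coordinate by coordinate using \eqref{eq:total_loss}, the $c$-th equation reads
\begin{equation}
    \frac{\theta_c}{\sigma^2} + \sum_{d \in \mathcal{A}_c} \bigl( \Phi_f'(\theta_{cd}) - r_{cd} \bigr) = 0,
\end{equation}
where I use $\Phi_f'$ odd and the antisymmetry of the comparisons so that each pair contributes once to each endpoint. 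The loss $\mathcal{L}(\cdot|\Comp')$ is obtained from $\mathcal{L}(\cdot|\Comp)$ by adding the single term $\Phi_f(\theta_{ab}) - r_{ab}'\theta_{ab}$ (and its symmetric counterpart), so $\Theta^*$ is a zero of $\nabla\mathcal{L}(\cdot|\Comp')$ if and only if this extra term has vanishing gradient at $\Theta^*$, i.e. if and only if $\Phi_f'(\theta_{ab}^*(\Comp)) - r_{ab}' = 0$. Since $\nabla\mathcal{L}(\cdot|\Comp')$ has a unique zero, namely $\Theta'^*$, this proves the first equivalence: $\Theta^* = \Theta'^* \iff r_{ab}' = \Phi_f'(\theta_{ab}^*(\Comp))$.

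For the strict inequalities, I would invoke the monotonicity theorem (Theorem~\ref{theo:monotone}) together with Proposition~\ref{prop:onpartialrab}/\ref{prop:ondeltarabdiscrete}. View $\Comp'$ as depending on the single scalar $r_{ab}'$, all other entries fixed; the MAP $\Theta^*_{f,\sigma^2}(\Comp')$ is then a strictly increasing function of $r_{ab}'$ in the sense that increasing $r_{ab}'$ strictly increases $\theta_a'^*$, strictly decreases $\theta_b'^*$ (the order $\leq_b$ on $\Comp'$ is reversed since $r_{ba}' = -r_{ab}'$), and — one should check — moves the full vector $\Theta'^*$ monotonically. Concretely, the clean way is: when $r_{ab}' = \Phi_f'(\theta_{ab}^*(\Comp))$ we have equality $\Theta'^* = \Theta^*$ by the first part; by strict monotonicity in $r_{ab}'$, any larger value of $r_{ab}'$ yields $\theta_a'^* > \theta_a^*$ and the symmetric statement at $b$, and conversely. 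To upgrade ``$\theta_a'^*$ strictly larger'' to ``$\Theta'^* > \Theta^*$'' coordinatewise in the stated sense, I would use that the sign of $\partial_{r_{ab}'}\Theta'^*$ is constant in each coordinate (from the diagonally-dominant linear-system analysis underlying Theorem~\ref{theo:monotone}), so the vector inequality, once strict in one coordinate, is the genuine partial order relation the proposition asserts.

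The main obstacle is the precise meaning of the vector inequalities $\Theta^* < \Theta'^*$ and $\Theta^* > \Theta'^*$: these are not the coordinatewise product order on $\R^A$ (which would be inconsistent with the zero-sum constraint of Proposition~\ref{prop:bayesianconstraint}) but rather the partial order $<_a$ applied appropriately, or the statement that $\Theta'^* \neq \Theta^*$ and $\Theta'^* \succeq_a \Theta^*$ in the monotonicity sense. I would pin this down by noting that adding the comparison $r_{ab}'$ with $r_{ab}' > \Phi_f'(\theta_{ab}^*(\Comp))$ is equivalent, via the first equivalence applied to an intermediate comparison matrix, to starting from the ``neutral'' comparison value $r_{ab}^{eq} := \Phi_f'(\theta_{ab}^*(\Comp))$ (which leaves scores unchanged) and then increasing the $(ab)$ entry, so that Theorem~\ref{theo:monotone} with $\Comp^{eq} <_a \Comp'$ directly gives strict increase of $\theta_a$; symmetry $r_{ba} = -r_{ab}$ and oddness of $\Phi_f'$ handle $b$. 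The remaining care is simply to treat the discrete and continuous cases in parallel, replacing $\partial_{r_{ab}}$ by the finite-difference operator $\Delta_{(ab)}$ of \eqref{eq:defdeltaab} and the identity $r_{ab}' = \Phi_f'(\theta_{ab}^*(\Comp))$ by the corresponding ``no-change'' condition, which is exactly how Proposition~\ref{prop:ondeltarabdiscrete} is set up.
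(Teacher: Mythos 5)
Your proposal is correct and follows essentially the same route as the paper: the first equivalence is obtained by observing that $\mathcal{L}(\cdot|\Comp')$ and $\mathcal{L}(\cdot|\Comp)$ differ by the single term $\Phi_f(\theta_{ab}) - r_{ab}'\theta_{ab}$, so that $\nabla\mathcal{L}(\Theta^*(\Comp)|\Comp')=0$ iff $r_{ab}'=\Phi_f'(\theta_{ab}^*(\Comp))$, and the strict cases follow from the strict monotonicity of $r_{ab}'\mapsto\Theta^*(\Comp')$ given by Theorem~\ref{theo:monotone}, applied from the ``neutral'' value onward. Your caveat about the precise meaning of the vector inequalities is a fair observation, but the paper's own proof treats this point no more carefully than you do.
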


The proof is provided in the Appendix.
Now, $\Phi_f'$ can be interpreted as a conversion function between score differences $\theta_{ab}$ and comparisons $r_{ab}$. This is highlighted by the relation~\eqref{eq:equivalencerab} and the fact that $\mathbb{E}[r_{ab} | \Theta ] = \Phi_f'(\theta_{ab})$ (Proposition~\ref{prop:momentstheta}).
When $r_{\max} = 1$, we observe that $\Phi_f'$ is a sigmoid function, i.e. an increasing bijection from $\R$ to $(-1,1)$.

\begin{table*}[ht]
\centering
\begin{tabular}{||c c c c c ||}  
 \hline
 GBT model & Parameter & $F/f(r|0)$ & $\Dc$ &   $\Phi(\theta)$    \\ [0.5ex] 
 \hline\hline
 Binary & - & $\frac{1}{2}$  & $\{\pm 1\}$  & $\log\left( \cosh \theta \right)$   \\ [1ex]  
 \hline  
 $K$-nary & $K \geq 2$  & $\frac{1}{K} $  & $\{ -1 , \ldots, +1\} $ & $\log\left( \frac{\sinh \left( \frac{K\theta}{K-1}\right) }{K \sinh \left(  \frac{\theta}{K-1} \right)}\right) $   \\  [1ex] 
 \hline  
 Poisson & $\lambda > 0$ & $
      \frac{\mathrm{e}^{-\lambda} \lambda^{|k|}}{ |k| !} (  \delta_{k=0} + \frac{\delta_{k\neq 0}}{2} )$  & $\mathbb{Z}$  & $\lambda \cosh(\theta)$  \\  [1ex] 
 \hline   Gaussian & $\sigma_0^2 > 0$ & $g_{\sigma_0^2}(r) = \frac{1}{\sqrt{2\pi} \sigma_0} \mathrm{e}^{-r^2 / 2 \sigma_0^2}$  & $\R$  & $\frac{\sigma_0^2 \theta^2}{2}$  \\  [1ex] 
 \hline  
Beta & $\beta > 0$ & $\frac{\Gamma(2\beta)}{4^{\beta} \Gamma(\beta)^2} (1-r^2)^{\beta-1}$  & $(-1,1)$ & $ \log \left(  1 + \sum_{k \geq 1} \left( \prod_{n=0}^k \frac{\beta + n}{2\beta + n} \right) \frac{\theta^{2k}}{(2k)!}  \right)$      \\ [1ex] 
 Uniform & $\beta = 1$ & $\frac{1}{2}$  & $(-1,1)$  & $\log \left( \frac{\sinh \theta}{\theta}\right)$  \\ [1ex]  
   & $\beta = 2$ & $\frac{3 }{4}(1-r^2)$  & $(-1,1)$  & $ \log \left( \frac{3 (\theta \cosh \theta - \sinh \theta ))}{\theta^3}\right) $   \\ [1ex] 
  \hline  
 \end{tabular}  
 \label{table:GBT_models}
 \caption{Examples of GBT models, with their cumulant-generating functions.}
\end{table*}

\section{Lipschitz-Resilience of Score MAP Estimators}    \label{sec:resilience}

    Among the motivation for the generalized Bradley-Terry model, we aim at a scoring method from expressed pair comparisons which controlled impact of user's decision. In the more global Tournesol pipeline~\cite{tournesol}, the individual scoring method is used at a first step in a global scoring method for global scoring from individual comparisons~\cite{AllouahGHV22}. 
    We formalize the notion of Lipschitz-resilience to the user's updates and provide criteria to determine if a given GBT model is resilient. We show that the Lipschitz-resilience is guaranteed as soon as the comparison domain $\Dc$ is bounded.

    \subsection{Lipschitz-Resilience to User Modifications}

    The Lipschitz-resilience of an estimator captures its ability to be limitedly modified by changing or adding new paired comparisons for the user.\\

    Let $\Comp \in \Dc^{C}$ and $\Comp' \in \Dc^{C'}$ be two comparison matrices over some possibly distinct $\mathcal{C}$ and $\mathcal{C}'$ of respective size $C$ and $C'$. We define the \emph{symmetric difference} of $\mathcal{C}$ and $\mathcal{C}'$ as $\mathcal{C} \Delta \mathcal{C}' = (\mathcal{C} \backslash \mathcal{C}') \cup (\mathcal{C} \backslash \mathcal{C}')$ and denote its cardinal by
    \begin{equation} 
        \Delta_{\mathrm{domain}}(\Comp,\Comp') = |\mathcal{C}\Delta\mathcal{C}'|.
    \end{equation}
    The set $\mathcal{C} \Delta \mathcal{C}'$ is made of pairs $(ab)$ that are in one of the two set $\mathcal{C}$ and $\mathcal{C}'$ and not on the other. The number $\Delta_{\mathrm{domain}}(\Comp,\Comp')$ therefore quantifies the number of comparisons needed to be added or removed to transform $\mathcal{C}$ into $\mathcal{C}'$. 

    We define the matrices $\tilde{\Comp} = (r_{ab})_{(ab) \in \mathcal{C} \cap \mathcal{C}'}$ and $\tilde{\Comp}' = (r_{ab}')_{(ab) \in \mathcal{C} \cap \mathcal{C}'}$, both in $\Dc^{|\mathcal{C}\cap  \mathcal{C}'|}$, which coincide with $\Comp$ and $\Comp'$ on $\mathcal{C}\cap\mathcal{C}'$. 
    We recall that the \emph{$L0$ ``norm"} of a vector is the number of its non-zero entries. We then define 
    \begin{equation}
        \Delta_{\mathrm{entries}}(\Comp,\Comp') = \| \tilde{\Comp} - \tilde{\Comp}' \|_0,
    \end{equation}
    which measures the number of entries $(ab) \in \mathcal{C} \cap \mathcal{C}'$ on which $\Comp$ and $\Comp'$ differ. We also set 
    \begin{equation}
        \Delta( \Comp, \Comp')
        =
         \Delta_{\mathrm{domain}} (\Comp,\Comp') + \Delta_{\mathrm{entries}} (\Comp,\Comp'),
    \end{equation}
    which counts the number of elementary modifications (removing, adding, or changing a comparison) from $\Comp$ to $\Comp'$.

    \begin{definition}
        An estimator $\widehat{\Theta}(\Comp)$ is said to be \emph{$L$-Lipschitz-resilient} for some $L > 0$ for the Euclidean norm if, for any comparison matrices $\Comp ,\Comp'$,
        \begin{equation} \label{eq:widehattheta}
            \| \widehat{\Theta}(\Comp) - \widehat{\Theta}(\Comp') \|_2 \leq L \Delta( \Comp, \Comp')
        \end{equation}
    \end{definition}

If $\widehat{\Theta}$ is $L$-Lipschitz-resilient, then $L$ bounds the possible impact on the score $\widehat{\Theta} (\Comp)$ for single modifications of the comparison matrix (single update over one comparison or addition of a new comparison). 
This impact is measured in terms of $\ell_2$-norm. 
When the comparison sets $\mathcal{C}=\mathcal{C}'$ coincide\footnote{This ensures that $\Comp-\Comp'$ is well-defined.}, the bound \eqref{eq:widehattheta} is simply 
\begin{equation}
    \| \widehat{\Theta}(\Comp) - \widehat{\Theta}(\Comp') \|_2 \leq L \| \Comp - \Comp' \|_0.
\end{equation}

    \subsection{Lipschitz-Resilience Guarantee for GBT Models} \label{sec:relientornot}
 
    \begin{theorem} \label{theo:resilient} 
        For any root law $f$ and given a Gaussian prior $\mathcal N(0, \sigma^2 I)$,
        the MAP estimator for the $f$-GBT model is $(4 \sqrt{2}r_{\max}\sigma^2)$-Lipschitz-resilient, i.e.
        \begin{equation} \label{eq:boundresilience}
            \sup_{\Comp \neq \Comp'} \frac{\| \Theta_{f, \sigma^2}^*(\Comp) - \Theta_{f, \sigma^2}^*(\Comp') \|_2}{\Delta( \Comp, \Comp')} \leq   4 \sqrt{2}r_{\max} \sigma^2.
        \end{equation}
    \end{theorem}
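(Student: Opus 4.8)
The plan is to combine a standard ``two-point'' strong-convexity estimate with a Lipschitz bound on how the loss changes when a single comparison is perturbed, and then pass from a single modification to an arbitrary one by a telescoping argument. Throughout we may assume $r_{\max} < \infty$, since otherwise the claimed bound is $+\infty$ and there is nothing to prove. We work with the loss $\mathcal{L}(\cdot\,|\Comp)$ of \eqref{eq:total_loss}: it differs from the true negative log-posterior only by a term independent of $\Theta$, so it has the same minimizer $\Theta^*_{f,\sigma^2}(\Comp)$, is infinitely smooth in $\Theta$ (since $\Phi_f$ is, by Theorem~\ref{theo:Phi}), and is $(1/\sigma^2)$-strongly convex by Proposition~\ref{prop:existuniquemin}. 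The first step is a reduction: any pair $(\Comp,\Comp')$ can be joined by a finite chain $\Comp = \Comp^{(0)},\Comp^{(1)},\dots,\Comp^{(m)} = \Comp'$ of \emph{elementary modifications}, each of which either adds one comparison (a pair $(ab),(ba)$), removes one comparison, or changes the value of one comparison, and one can arrange this chain (routing through $\mathcal{C}\cup\mathcal{C}'$) with all intermediate matrices valid and $m \le \Delta(\Comp,\Comp')$. By the triangle inequality for $\|\cdot\|_2$, it then suffices to bound $\|\Theta^*_{f,\sigma^2}(\Comp) - \Theta^*_{f,\sigma^2}(\Comp')\|_2 \le 4\sqrt{2}\,r_{\max}\sigma^2$ when $\Comp$ and $\Comp'$ differ by one elementary modification.

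For such a pair, write $\Theta = \Theta^*_{f,\sigma^2}(\Comp)$, $\Theta' = \Theta^*_{f,\sigma^2}(\Comp')$, and $D(\cdot) = \mathcal{L}(\cdot\,|\Comp') - \mathcal{L}(\cdot\,|\Comp)$. Applying the strong-convexity lower bound of $\mathcal{L}(\cdot\,|\Comp)$ at its minimizer $\Theta$ evaluated at $\Theta'$, and of $\mathcal{L}(\cdot\,|\Comp')$ at its minimizer $\Theta'$ evaluated at $\Theta$, yields $\mathcal{L}(\Theta'|\Comp) - \mathcal{L}(\Theta|\Comp) \ge \frac{1}{2\sigma^2}\|\Theta-\Theta'\|_2^2$ and $\mathcal{L}(\Theta|\Comp') - \mathcal{L}(\Theta'|\Comp') \ge \frac{1}{2\sigma^2}\|\Theta-\Theta'\|_2^2$. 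Summing these two inequalities gives
\[
D(\Theta) - D(\Theta') \ \ge\ \frac{1}{\sigma^2}\,\|\Theta - \Theta'\|_2^2 .
\]

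It remains to bound the left-hand side. Since $\Comp$ and $\Comp'$ differ in only one comparison pair $(ab),(ba)$, expression \eqref{eq:total_loss} (using $\theta_{ba} = -\theta_{ab}$, $r_{ba} = -r_{ab}$, and that $\Phi_f$ is even) collapses $D$ to $\pm 2\bigl(\Phi_f(\theta_{ab}) - r'_{ab}\theta_{ab}\bigr)$ for an added or removed comparison, and to $-2(r'_{ab} - r_{ab})\theta_{ab}$ for a changed value. In either case I would bound $|D(\Theta) - D(\Theta')|$ using three elementary facts: (i) $\Phi_f$ is $r_{\max}$-Lipschitz, because $|\Phi_f'| \le r_{\max}$ on all of $\R$ by Theorem~\ref{theo:Phi}; (ii) every admissible comparison satisfies $|r_{ab}| \le r_{\max}$, since the support $\Dc$ must be symmetric (as $(ab)$ compared forces $r_{ba} = -r_{ab} \in \Dc$); and (iii) $|\theta_{ab} - \theta'_{ab}| = |(\theta_a - \theta'_a) - (\theta_b - \theta'_b)| \le \sqrt{2}\,\|\Theta - \Theta'\|_2$. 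Each case then gives $|D(\Theta) - D(\Theta')| \le 4\sqrt{2}\,r_{\max}\,\|\Theta - \Theta'\|_2$. Combining with the two-point inequality, $\frac{1}{\sigma^2}\|\Theta-\Theta'\|_2^2 \le 4\sqrt{2}\,r_{\max}\|\Theta-\Theta'\|_2$, hence $\|\Theta-\Theta'\|_2 \le 4\sqrt{2}\,r_{\max}\sigma^2$, and the reduction step finishes the proof.

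The parts that are purely routine are the explicit evaluation of $D$ in the three cases and the three inequalities in the last step; the only place that needs a little care is the reduction, namely verifying that intermediate comparison matrices in the chain are genuinely valid (antisymmetric, with entries in $\Dc$, nonempty) and that the number of steps is controlled by $\Delta(\Comp,\Comp')$. Conceptually, the heart of the argument is the pairing of the $(1/\sigma^2)$-strong convexity of the log-posterior (Proposition~\ref{prop:existuniquemin}) with the uniform bound $|\Phi_f'| \le r_{\max}$ on the derivative of the cumulant-generating function (Theorem~\ref{theo:Phi}); everything else is bookkeeping.
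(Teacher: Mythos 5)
Your proof is correct and follows essentially the same strategy as the paper's: reduce to a single elementary modification, telescope over a chain of such modifications, and let the $(1/\sigma^2)$-strong convexity of the loss convert a bound on the perturbation of the loss into a bound on the perturbation of the minimizer, with the perturbation controlled by $|\Phi_f'|\le r_{\max}$ and $|r_{ab}|\le r_{\max}$. The only cosmetic difference is that the paper uses the one-sided estimate $\|\Theta^*(\Comp')-\Theta^*(\Comp)\|_2 \le 2\sigma^2\,\|\nabla\mathcal{L}(\Theta^*(\Comp')\,|\,\Comp)\|_2$ and bounds the gradient of the loss difference at $\Theta^*(\Comp')$, whereas you use the symmetric two-point inequality and bound the difference of function values of $D$; both routes yield the same constant $4\sqrt{2}\,r_{\max}\sigma^2$.
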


   In the GBT model, contrary to the historical BT model, we regularize the scores using a Gaussian prior on $\Theta$. We see in Theorem~\ref{theo:resilient} that the Lipschitz-resilience constant explodes when $\sigma^2 \rightarrow \infty$ (i.e. with no regularization). The regularization leads to controllable user's modifications, where the prior variance $\sigma^2$ plays a crucial role.
   Moreover, the Gaussian and Poisson BT models (see examples below), for which the comparison domain $\Dc = \R$ or $\mathbb{Z}$ is unbounded, are not Lipschitz-resilient (we provide a proof in the Appendix). The boundedness of the comparisons is a key ingredient to the Lipschitz-resilience. 
    
\section{Examples of GBT Models} \label{sec:examples}

The examples detailed in this section are listed in Table~1, together with their corresponding comparison domain and their cumulant-generating function. 
Each of these models can be used to provide score estimators based on paired comparisons that are all strictly increasing with respect to $\Comp$ according to Theorem~\ref{theo:monotone}. They are moreover all resilient for bounded comparisons, while the Gaussian and Poisson-GBT model are not (see Proposition~\ref{prop:gaussianproperties} below and the appendix).

\subsection{The Gaussian-GBT Model}

The Gaussian GBT model is characterized by a Gaussian root law $f = g_{\sigma^2_0}$. 
This model has already been studied \cite{guo2012score} and applied~\cite{kristof2019user}. We summarize its main properties in Proposition~\ref{prop:gaussianproperties}. The proof and the closed form expression of $\Theta^*(\Comp)$ are given in the appendix.

\begin{proposition}
    \label{prop:gaussianproperties}
    The Gaussian-GBT model with variance $\sigma^2_0$ is such that 
        \begin{equation} \label{eq:lawcomptthetagaussfirst}
        \Comp | \Theta \sim \mathcal{N} \left( (\sigma_0 \theta_{ab})_{(ab)\in \mathcal{C}} , \sigma_0^2 \mathrm{Id} \right).
    \end{equation}
    The MAP estimator $\Theta^*(\Comp)$ is linear, strictly increasing with respect to $\Comp$, and is \emph{not} Lipschitz-resilient.
\end{proposition}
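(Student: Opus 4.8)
The plan is to handle the three claims in turn; each one reduces to the fact that the Gaussian root law turns every relevant quantity into something quadratic. For the conditional law \eqref{eq:lawcomptthetagaussfirst}, I would substitute $f=g_{\sigma_0^2}$ into \eqref{eq:cecicela} and complete the square in the exponent, $r\theta_{ab}-\frac{r^2}{2\sigma_0^2}=-\frac{1}{2\sigma_0^2}\bigl(r-\sigma_0^2\theta_{ab}\bigr)^2+\frac{\sigma_0^2\theta_{ab}^2}{2}$, where the constant $\frac{\sigma_0^2\theta_{ab}^2}{2}$ is exactly $\Phi_f(\theta_{ab})$ from \eqref{eq:defphiab}. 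Hence $p(r_{ab}\,|\,\Theta)$ is a Gaussian density with variance $\sigma_0^2$ and mean $\Phi_f'(\theta_{ab})$, consistent with Proposition~\ref{prop:momentstheta}, and conditional independence across distinct compared pairs is Proposition~\ref{prop:lawGBT}; stacking these coordinates gives the stated multivariate normal.

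Next, for linearity and strict monotonicity of the MAP, I would plug $\Phi_f(\theta)=\sigma_0^2\theta^2/2$ into the negative log-posterior \eqref{eq:total_loss}. The result is a quadratic form in $\Theta$, strongly convex by Proposition~\ref{prop:existuniquemin}, so its unique minimizer $\Theta^*(\Comp)$ solves the linear stationarity system $\nabla\mathcal{L}(\Theta^*\,|\,\Comp)=0$, which has the form $M\Theta^*=b(\Comp)$ with $M=\sigma^{-2}I+\sigma_0^2 L$ (here $L$ is the symmetric positive semidefinite weighted Laplacian of the comparison graph) and $b(\Comp)$ depending linearly on the entries $r_{ab}$. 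Since $M\succ 0$ is invertible and independent of $\Comp$, the map $\Comp\mapsto\Theta^*(\Comp)=M^{-1}b(\Comp)$ is linear; inverting $M$ explicitly yields the closed form relegated to the appendix. Strict monotonicity is not special to this instance: $g_{\sigma_0^2}$ has finite exponential moments, so Theorem~\ref{theo:monotone} applies verbatim; alternatively one reads $\partial_{r_{ab}}\theta_a^*>0$ directly off the linear system via the diagonal dominance of $M$, as in the general proof.

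Finally, for the failure of Lipschitz-resilience, note that here $\Dc=\R$, so $r_{\max}=+\infty$ and Theorem~\ref{theo:resilient} gives only a vacuous bound; I would instead exhibit an explicit violating family. Fix one compared pair and, for $t\in\R$, let $\Comp^{(t)}$ be obtained from $\Comp=\Comp^{(0)}$ by replacing the entry $r_{ab}$ with $t$, leaving all other entries and $\mathcal{C}$ itself unchanged; then $\Delta(\Comp,\Comp^{(t)})$ equals a fixed constant independent of $t$. By the linearity just established, $\Theta^*(\Comp^{(t)})-\Theta^*(\Comp)=t\,v$ for the fixed vector $v=\partial_{r_{ab}}\Theta^*(\Comp)$, and $v\neq 0$ because $v_a=\partial_{r_{ab}}\theta_a^*>0$ by strict monotonicity (Theorem~\ref{theo:monotone}). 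Hence $\|\Theta^*(\Comp^{(t)})-\Theta^*(\Comp)\|_2=|t|\,\|v\|_2\to\infty$ as $t\to\infty$ while $\Delta(\Comp,\Comp^{(t)})$ stays bounded, so no finite $L$ can satisfy \eqref{eq:widehattheta}; restricting $t$ to integers disposes identically of the Poisson-GBT model mentioned after Theorem~\ref{theo:resilient}.

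I expect the main obstacle to be not any single hard step but getting the non-resilience argument clean: one must genuinely use that $\Dc$ is unbounded (so arbitrarily large single comparisons are admissible inputs) and certify that the perturbation direction $v$ is nonzero, which is exactly where strict rather than mere monotonicity is needed. The only genuinely computational part is writing $M^{-1}$ in closed form for the appendix's explicit formula for $\Theta^*(\Comp)$, which is routine linear algebra for a matrix of the form $\sigma^{-2}I+\sigma_0^2 L$.
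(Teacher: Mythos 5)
Your proposal is correct and follows essentially the same route as the paper: complete the square in \eqref{eq:cecicela} to identify the conditional Gaussian law, observe that $\Phi_f(\theta)=\sigma_0^2\theta^2/2$ turns \eqref{eq:total_loss} into a quadratic form whose stationarity condition is a linear system with an invertible, strictly diagonally dominant matrix independent of $\Comp$, and invoke Theorem~\ref{theo:monotone} for strict monotonicity. Two small remarks. First, your completion of the square yields mean $\sigma_0^2\theta_{ab}=\Phi_f'(\theta_{ab})$, which is the value consistent with Proposition~\ref{prop:momentstheta}; the factor $\sigma_0$ in \eqref{eq:lawcomptthetagaussfirst} (and in the paper's own appendix computation) appears to be a typo, so the mismatch is not a defect of your argument. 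Second, for non-resilience the paper scales the entire comparison matrix, $\Comp\mapsto\lambda\Comp$, so that $\Delta(\Comp,\lambda\Comp)$ stays bounded by the number of compared pairs while $\|\Theta^*(\lambda\Comp)\|_2=\lambda\|\Theta^*(\Comp)\|_2\to\infty$, certifying nontriviality by choosing any $\Comp$ with $\Theta^*(\Comp)\neq 0$; you instead send a single entry to infinity and certify that the perturbation direction $v$ is nonzero via strict monotonicity. Both are valid; yours is marginally stronger in that it shows a single comparison update already breaks resilience, and, as you note, it transfers to the Poisson case by restricting to integer $t$, whereas the paper treats that case separately.
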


\subsection{The Uniform-GBT Model}

The Uniform-GBT model corresponds to choosing the uniform probability law $f=1/2$ on $[-1,1]$ as the root law of the model.  We recall that the prior variance is denoted by $\sigma^2$. The Uniform-GBT model is used for the current version of the Tournesol pipeline~\cite{Pipeline2023}. 

\begin{proposition}
    \label{prop:uniformproperties}
    The cumulant generating function of the Uniform-GBT model is $\Phi(\theta) = \log (\sinh(\theta) / \theta)$.
    The MAP estimator $\Theta^*(\Comp)$  is strictly increasing with respect to $\Comp$ and $4\sqrt{2}\sigma^2$-Lipschitz-resilient.
\end{proposition}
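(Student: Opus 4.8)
The plan is to treat the three assertions separately; each reduces either to an elementary integral or to a direct application of a theorem already established above. First, for the cumulant-generating function, I would compute the integral in \eqref{eq:defphiab} for the uniform root law $f = \tfrac{1}{2}\mathbf{1}_{[-1,1]}$. For $\theta \neq 0$,
\[
\int_{\R} \mathrm{e}^{\theta r}\,\mathrm{d}F(r) = \frac{1}{2}\int_{-1}^{1} \mathrm{e}^{\theta r}\,\mathrm{d}r = \frac{\mathrm{e}^{\theta}-\mathrm{e}^{-\theta}}{2\theta} = \frac{\sinh\theta}{\theta},
\]
so that $\Phi(\theta) = \log\!\left(\sinh\theta/\theta\right)$; the value at $\theta = 0$ is $\log 1 = 0$ by continuity, consistent with $\Phi$ being even, non-negative and infinitely smooth as guaranteed by Theorem~\ref{theo:Phi}. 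I would also observe that the uniform law has bounded support and therefore finite exponential moments, so it is an admissible root law and the associated $f$-GBT model, together with its MAP estimator, is well-defined by Proposition~\ref{prop:existuniquemin}.

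Second, strict monotonicity of $\Theta^*(\Comp)$ with respect to $\Comp$ is an immediate instance of Theorem~\ref{theo:monotone}, which asserts exactly this for every $f$-GBT model with a Gaussian prior; the uniform root law is covered with no extra work.

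Third, for the Lipschitz-resilience I would invoke Theorem~\ref{theo:resilient}, which gives $(4\sqrt{2}\,r_{\max}\sigma^2)$-Lipschitz-resilience for any root law. Since the support of the uniform law on $[-1,1]$ satisfies $r_{\max} = \sup\Dc = 1 < \infty$, the general constant specializes precisely to $4\sqrt{2}\,\sigma^2$, which is the claimed bound.

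There is essentially no obstacle in this argument: the only points that require a word of care are the removable singularity at $\theta = 0$ in the closed form of $\Phi$ (handled by continuity) and checking the hypotheses of the two cited theorems, both of which hold trivially because the uniform law is bounded. If one wanted a self-contained derivation, one could alternatively verify directly that $\theta \mapsto \log(\sinh\theta/\theta)$ satisfies the properties listed in Theorem~\ref{theo:Phi}, but reusing the general results is cleaner.
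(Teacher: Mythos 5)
Your proposal is correct and follows essentially the same route as the paper: the paper likewise obtains $\Phi(\theta)=\log(\sinh\theta/\theta)$ by directly evaluating $\tfrac{1}{2}\int_{-1}^{1}\mathrm{e}^{\theta r}\,\mathrm{d}r$ (as the $\beta=1$ case of the $\beta$-GBT family) and then deduces strict monotonicity and $4\sqrt{2}\sigma^2$-Lipschitz-resilience by specializing Theorem~\ref{theo:monotone} and Theorem~\ref{theo:resilient} with $r_{\max}=1$. Your extra remarks on the removable singularity at $\theta=0$ and on the boundedness of the support are sensible but not points of divergence.
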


\textit{Remark.} The derivative $\Phi'(\theta) = \coth(\theta) - \frac{1}{\theta}$
of $\Phi$ is known as the Langevin function \cite{cohen1991pade}.

\section{Empirical Simulations} 

\begin{figure*}[!ht]
    \centering
    \includegraphics[width=53mm]{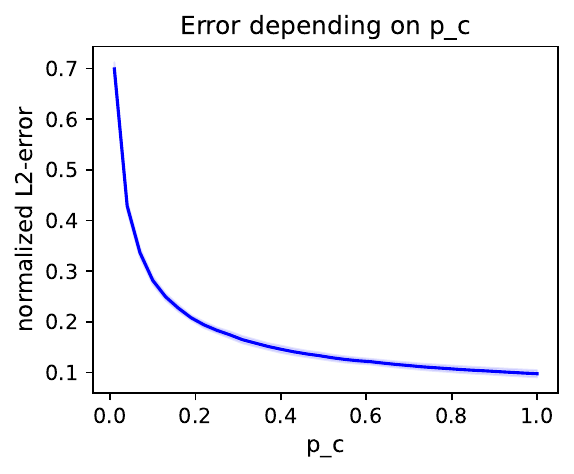}%
    \includegraphics[width=54mm]{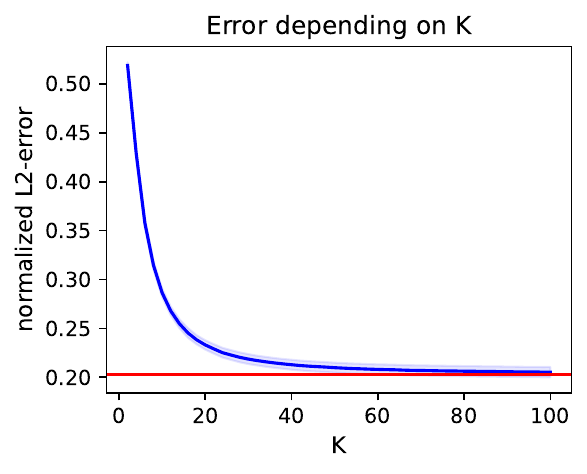}   \includegraphics[width=56mm]{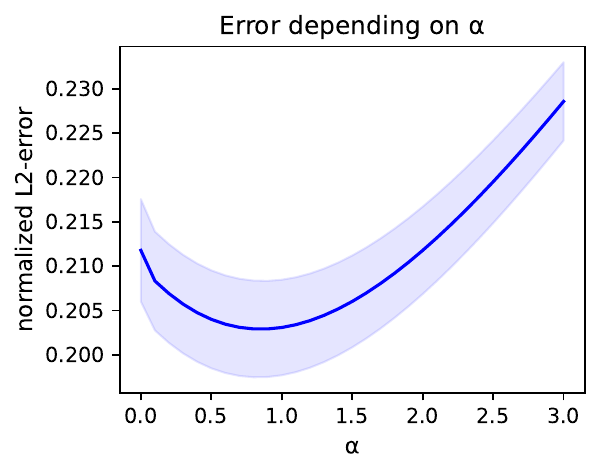}%
    \caption{Left: Normalized mean-square error with respect to the sparsity parameter $p_c$ for Erdös-Rényi comparison graphs. Middle: Normalized mean-square errors $\textsc{NormError}_K$ using $K$-nary-GBT MAP estimators on the data generated via the Uniform-GBT model. Blue: $K \mapsto \textsc{NormError}_K$; Red: $\textsc{NormError}$ for the Uniform-GBT model. Right: Normalized mean-square error with respect to the regularization scale $\frac{1}{\sigma^2}$.}
    \label{fig:experiments-1}
    \vspace{-15pt}
\end{figure*}

We propose three experiments that illustrate interesting properties of GBT models. The generative data model is itself a GBT model, and does not simulate a realistic situation corresponding to real data. These simulations allow us to measure three aspects of the model: (i) the impact of the sparsity of the comparison graph, (i) the impact of the discretization level, and (iii) the impact of the regularization parameter (prior variance) on the quality of the reconstruction. The results are expressed in terms of the normalized mean-square error against the true scores $\Theta^\dagger$, given by
\begin{equation}
    \textsc{NormError} = \mathbb{E} \left[ \frac{ \|  {\Theta(\Comp) - \Theta^\dagger}\|_2^2}{{\|\Theta^{\dagger}\|_2^2}} \right]
\end{equation}
\urlstyle{tt}
We use Monte-Carlo simulations to obtain normalized mean-square errors for various GBT MAP estimators with Gaussian prior. 
Plots in Figure~\ref{fig:experiments-1} depict the mean and standard deviation of the Monte-Carlo simulations. We consider $A = 500$ alternatives. 
The ground-truth scores $\Theta^\dagger \in \R^{500}$ are generated as i.i.d. Gaussian random variables with variance $\sigma^{\dagger 2} = 1$. 
 All experiments are run
with ten seeds from $1$ to $10$. The code and details of the experiments are available at {\small \url{https://github.com/sadeghfarhadkhani/GBT}}.

\paragraph{(i) Impact of the graph sparsity.}
We generate the comparisons $\Comp | \Theta^{\dagger}$  using the Uniform-GBT model over $[-1,1]$. The comparison set $\mathcal{C}$ is generated as an Erdös-Rényi random graph $\mathcal{G}(A = 500,p_c)$ where the nodes are different alternatives and each edge corresponds to a comparison which is randomly activated independently from other edges with probability $p_c \in [0,1]$.
We estimate the normalized mean-square error of MAP estimators based on the Uniform-GBT model with variance $\sigma^2 = 1$ for different values of $p_c$. The results are depicted in Figure~\ref{fig:experiments-1} (left). With no surprise, the sparsity of the graph strongly impacts the reconstruction performance.

\paragraph{(ii) $K$-nary-GBT models against Uniform-GBT model.}

The data are generated using the Uniform-GBT model on a comparison graph following the Erdös-Rényi random graph $\mathcal{G}(A = 500, p_c = 0.2)$. 
For any integer $K\geq 2$, we estimate $\Theta_K(\Comp)$ as the MAP estimator of the $K$-nary-GBT model with variance $\sigma^2=1$. We also compute $\Theta(\Comp)$ as the MAP estimator of the Uniform-GBT model with the same variance. We show the evolution of the normalized mean-square error with respect to $K$ in Figure~\ref{fig:experiments-1} (middle). 
The error decays with respect to $K$, and converges to the limit value corresponding to the Uniform-GBT reconstruction model. These results advocate for discretized comparison models when discretization is at play. The value $K=21$ was chosen on the Tournesol platform~\cite{Pipeline2023}, as a compromise between greater finesse on the comparisons and a restricted choice of possible comparisons for users.

\paragraph{(iii) Impact of the prior variance over the mean-square error.}
Recall from~\eqref{eq:total_loss} that
the regularization factor is scaled by $ \frac{1}{\sigma^2}$   where $\sigma^2$ is the variance of the prior for the score estimation model. Again, we use the Uniform-GBT model on a comparison graph following the Erdös-Rényi random graph $\mathcal{G}(A = 500, p_c = 0.2)$, to generate the data.
We then estimate the normalized mean-square error of MAP
estimators based on the Uniform-GBT model for different values of $\sigma$.
The results are given in Figure~\ref{fig:experiments-1} (right). 
We observe that the error for small $\frac{1}{\sigma^2}$ (i.e., little regularization) is slightly better than for $\sigma = \infty$ ($\frac{1}{\sigma^2} = 0$, i.e., no regularization), which is in favor of using a Gaussian Bayesian prior for the scores. This is a second advantage of the Bayesian approach in addition to Lipschitz-resilience.
A natural conjecture that our plot raises, which we leave open,
is whether the optimal prior variance for score inference matches the variance of the score distribution,
i.e. $\sigma = \sigma^\dagger$.

\section{Conclusion and Future Perspective} \label{sec:conclusion}

In this paper, we generalized the historical Bradley-Terry model, by defining a family of so-called GBT (generalized Bradley-Terry) probabilistic models, each of which transforms comparisons into scores.
This family is parameterized by a \emph{root law}, which models how two equally good alternatives are expected to be compared.
Remarkably, for any such prior, we proved that the derived GBT model is guaranteed to feature numerous desirable properties,
such as strict convexity, monotonicity and Lipschitz-resilience.
Because of these compelling features, GBT models seem desirable to deploy in many practical applications,
as they have been on Tournesol~\cite{tournesol} to turn contributors' comparisons of video recommendability into scores.

Having said this, GBT models raise further intriguing research questions.
For one thing, there may be other appealing models that also have the convexity, monotonicity and Lipschitz-resilience properties.
We leave open the problem of identifying the set of all models with such properties.
We also leave open the analysis of the statistical error of the GBT scores.
A related question is that of estimating the uncertainty on the GBT scores, given comparisons.
We stress that these analyses are challenging as they depend on the graph $\mathcal{C}$ of comparisons.
In fact, another related question is that of optimizing the graph to minimize the statistical error,
which is known as the \emph{active learning} problem.
Finally, a general analysis of how the GBT scores depend on the root law is also left open.

\newpage 

\bibliography{references}

\appendix
\newpage
\onecolumn

\begin{center}
     {\huge Appendix}
 \end{center}

\section{Proof of Theorem~\ref{theo:Phi}}
\label{app:Phi}

Theorem~\ref{theo:Phi} is classical\footnote{To quote Wikipedia$\copyright$, ``The cumulant generating function [...], if it exists, is infinitely differentiable and convex, and passes through the origin. Its first derivative ranges monotonically in the open interval from the infimum to the supremum of the support of the probability distribution, and its second derivative is strictly positive everywhere it is defined, except for the degenerate distribution of a single point mass." Source:~\url{https://en.wikipedia.org/wiki/Cumulant}} , and we provide a proof for the sake of completeness.
The moment-generating function $M_f : \theta \mapsto \int_{\R} \mathrm{e}^{\theta r} \mathrm{d}F(r)$ is well-defined over $\R$ since all the exponential moments of $f$ are finite.
    The function $r \mapsto \mathrm{e}^{\theta r}$ is smooth and such that $\partial_{\theta}^n \mathrm{e}^{\theta r} = r^n  \mathrm{e}^{\theta r}$. 
    For any $\theta , \theta' \in \R$ such that $\theta' > \theta$, we have that
    $|r^n \mathrm{e}^{\theta r}| \leq M \mathrm{e}^{|\theta' r|}$ for any $r \in \Dc$ and some $M > 0$. Since $\mathrm{e}^{|\theta' r|} \in L_1(\Dc, p)$, we deduce that $r^n \mathrm{e}^{\theta r} \in L_1(\Dc, p)$. This implies that $M$ is infinitely differentiable with derivatives given by
$  M^{(n)}(\theta) = 
        \int_{\R} r^n \mathrm{e}^{\theta r} \mathrm{d}F(r).$ 
    Moreover, $\int_{\R} \mathrm{e}^{\theta r} \mathrm{d}F(r) > 0$, which implies that $\Phi (\theta) = \log \int_{\R} \mathrm{e}^{\theta r} \mathrm{d}F(r)$ is well-defined and infinitely smooth. 

    Using the relations $( \log g)' = g' / g$ and $(\log g)'' = (g g'' - (g')^2 ) / (g^2)$, we deduce that
    \begin{equation} \label{eq:phiprimeab}
        \Phi_f'(\theta) = \frac{ \int_{\R} r \mathrm{e}^{\theta r} \mathrm{d}F(r)}{ \int_{\R} \mathrm{e}^{\theta r} \mathrm{d}F(r)}.
    \end{equation}
    and 
    \begin{equation}  \label{eq:phiprimeprimeab}
        \Phi_f''(\theta) = \frac{ \left( \int_{\R} r^2 \mathrm{e}^{\theta r} \mathrm{d}F(r) \right)  \left( \int_{\R}  \mathrm{e}^{\theta r} \mathrm{d}F(r) \right)  - \left( \int_{\R} r \mathrm{e}^{\theta r} \mathrm{d}F(r) \right) ^2}{\left( \int_{\R}  \mathrm{e}^{\theta r} \mathrm{d}F(r) \right)^2 }
    \end{equation}   
    Using the Cauchy-Schwarz relation, we then have
        \begin{equation}  \label{eq:cauchysch}
    \left( \int_{\R} r \mathrm{e}^{\theta r} \mathrm{d}F(r) \right) ^2 =
    \left( \int_{\R} r \sqrt{\mathrm{e}^{\theta r}} \times \sqrt{\mathrm{e}^{\theta r}} \mathrm{d}F(r) \right) ^2
    \leq 
\left( \int_{\R} r^2 \mathrm{e}^{\theta r} \mathrm{d}F(r) \right)  \left( \int_{\R}  \mathrm{e}^{\theta r} \mathrm{d}F(r) \right),
        \end{equation}   
        with equality if and only if $r \sqrt{\mathrm{e}^{\theta r}} = \lambda \sqrt{\mathrm{e}^{\theta r}}$ for (almost) any $r \in  \R$ and some $\lambda > 0$. 
        This is only possible if $\R = \{\lambda\}$, which is excluded. This shows that $\left( \int_{\R} r \mathrm{e}^{\theta r} \mathrm{d}F(r) \right) ^2 < \left( \int_{\R} r^2 \mathrm{e}^{\theta r} \mathrm{d}F(r) \right)  \left( \int_{\R}  \mathrm{e}^{\theta r} \mathrm{d}F(r) \right)$, and therefore that $\Phi_f''(\theta) > 0$ according to \eqref{eq:phiprimeprimeab}. This implies that $\Phi$ is strictly convex and that $\Phi'$ is strictly increasing. The latter is therefore a continuous bijection from $\R$ to its image.

        It is moreover known that $\Phi_f'(\theta) = \frac{M'_f(\theta)}{M_f(\theta)} \rightarrow_{\theta \rightarrow \infty} r_{\max}$ for finite $r_{\max}$~\cite[Theorem 2.7.9]{durrett2019probability}, where we recall that $M_f(\theta)$ is the moment generating function of $f$. This proves that $\Phi_f'(\R) = (-r_{\max},r_{\max})$. 
        Finally, knowing that $f$ is symmetric with respect to $0$, by change of variable $r \rightarrow -r$ in \eqref{eq:defphiab}, we see that $\Phi_f(\theta) = \Phi_f(-\theta)$ and $\Phi$ is even. The derivative and second derivative of an even function are odd and even, respectively.

\section{Proofs on MAP Estimators}

This section contains the proofs of Proposition~\ref{prop:existuniquemin} to \ref{prop:boundinfinity}. 

\begin{proof}[Proof of Proposition~\ref{prop:existuniquemin}]
    According to \eqref{eq:be}, the negative log-posterior $\Theta \mapsto \mathcal{L}(\Theta|\Comp)$ and  $\Theta \mapsto \sum_{a \in \mathcal{A}} \frac{ \theta^2_a}{2 \sigma^2} + \sum_{(ab) \in \mathcal{C}} \Phi_f (\theta_{ab} ) - r_{ab} \theta_{ab}$ differ from a constant. The latter is $\alpha$-strongly convex due to the quadratic term,
    hence $\mathcal{L}(\Theta|\Comp)$ is $(1/\sigma^2)$-strongly convex over $\R^A$.
   The existence and uniqueness of $\Theta_{f,\sigma^2}^*(\Comp)$ directly follows.
\end{proof}

\begin{proof}[Proof of Proposition~\ref{prop:bayesianconstraint}]
According to Proposition~\ref{prop:existuniquemin}, there exists a unique minimizer $\Theta_{f,\sigma^2}^* (\Comp) = \arg\min_{\Theta \in \Ds^A} \mathcal{L}(\Theta|\Comp).$
    Then, $\Theta_{f,\sigma^2}^* (\Comp)$ is characterized by the equation $\nabla_{\Theta} \mathcal{L}( \Theta_{f,\sigma^2}^*(\Comp) | \Comp ) = 0$. For any $a$, we therefore have that
    \begin{equation} \label{eq:usefullater}
        0 = \partial_{\theta_a} \mathcal{L} ( \Theta_{f,\sigma^2}^* (\Comp) | \Comp ) 
        =
        \frac{\theta_a^*(\Comp)}{\sigma^2} + \sum_{b \in \mathcal{A}_a} \Phi_f'( \theta_{ab}^* (\Comp)  ) -   r_{ab}. 
    \end{equation}
    Hence, summing over $a$, we have 
    \begin{equation} \label{eq:computesumtheta*}
        0 =  \sum_{a \in \mathcal{A}} \partial_{\theta_a} \mathcal{L} ( \Theta_{f,\sigma^2}^* (\Comp) | \Comp ) 
        =
          \frac{1}{\sigma^2} \sum_{a \in \mathcal{A}}  \theta_a^*(\Comp)+ \sum_{(ab) \in \mathcal{C}} \Phi_f'( \theta_{ab}^* (\Comp)  ) -  r_{ab}. 
    \end{equation}    
    Since $r_{ab} = - r_{ba}$ and $\Phi_f'$ is odd (hence $\Phi_f'(\theta_{ab}) = - \Phi_f'(\theta_{ba})$), we have that $ \sum_{(ab) \in \mathcal{C}} r_{ab} = 0 = \sum_{(ab) \in \mathcal{C}} \Phi_f'( \theta_{ab}^* (\Comp)  )$.
    Hence, \eqref{eq:computesumtheta*} leads to $0 =   \sum_{a \in \mathcal{A}}  \theta_a^*(\Comp)$ as expected.
\end{proof}

\begin{proof}[Proof of Proposition~\ref{prop:momentstheta}]
    We fix $(ab) \in \mathcal{C}$.
    We have that
    \begin{equation}
        \mathbb{E}[r_{ab}|\Theta] = \int_{Dc} r \mathrm{d} p_{r_{ab}|\Theta}(r) = \frac{ \int_{\Dc} r \mathrm{e}^{\theta_{ab} r} \mathrm{d}p(r)}{ \int_{\Dc} \mathrm{e}^{\theta_{ab} r} \mathrm{d}p(r)} = \Phi_f'(\theta_{ab}),
    \end{equation}
    where the second equality is identical to \eqref{eq:cecicela} and the third one has been proved in \eqref{eq:phiprimeab}. 
    We also have that, similarly,
    \begin{align}
        \mathbb{V}[r_{ab}|\Theta] &= \mathbb{E} [r^2_{ab}|\Theta] - \mathbb{E}[r_{ab}|\Theta]^2
        = \frac{ \int_{\Dc} r^2 \mathrm{e}^{\theta_{ab} r} \mathrm{d}p(r)}{ \int_{\Dc} \mathrm{e}^{\theta_{ab} r} \mathrm{d}p(r)} - \frac{ \left( \int_{\Dc} r \mathrm{e}^{\theta_{ab} r} \mathrm{d}p(r) \right)^2}{ \left( \int_{\Dc} \mathrm{e}^{\theta_{ab} r} \mathrm{d}p(r) \right)^2} \\
        &= \frac{ \left( \int_{\Dc} r^2 \mathrm{e}^{\theta r} \mathrm{d}p(r) \right)  \left( \int_{\Dc}  \mathrm{e}^{\theta_{ab}  r} \mathrm{d}p(r) \right)  - \left( \int_{\Dc} r \mathrm{e}^{\theta_{ab}  r} \mathrm{d}p(r) \right) ^2}{\left( \int_{\Dc}  \mathrm{e}^{\theta_{ab}  r} \mathrm{d}p(r) \right)^2 }
        = \Phi_f''(\theta_{ab} ),
    \end{align}
    where the last equality has been proven in \eqref{eq:phiprimeprimeab}.
    Let $(cd) \in \mathcal{C} \backslash \{(ab), (ba)\}$. Then, $r_{ab}$ and $r_{cd}$ are independent conditionally to $\Theta$, hence 
       $ \mathbb{E}[r_{ab} r_{cd} |\Theta] = \mathbb{E}[r_{ab}|\Theta]\mathbb{E}[r_{cd}|\Theta] = \Phi_f'(\theta_{ab})(\Phi_f')_{cd}(\theta_{cd})$
    and therefore 
    \begin{equation}
        \mathbb{C}\mathrm{ov} [r_{ab}, r_{cd} |\Theta] =  \mathbb{E}[r_{ab}r_{cd}|\Theta] -  \mathbb{E}[r_{ab}|\Theta] \mathbb{E}[r_{cd}|\Theta] = 0.
    \end{equation}
\end{proof}

    \begin{proof}[Proof of Proposition~\ref{prop:boundinfinity}]
Using the relation~\eqref{eq:usefullater}, 
we observe that $\theta_a^*(\Comp) \leq \sigma^2 \sum_{b \in \mathcal{A}_a} r_{ab} - \Phi_f'(\theta_{ab}^*(\Comp))$. 
Using that $|r_{ab}|\leq r_{\max}$ and $|\Phi_f'(\theta)| \leq r_{\max}$ for any 
$\theta$ (see Theorem~\ref{theo:Phi}), we deduce that
\begin{equation}
    |\theta_a^*(\Comp)| \leq \sigma^2 \sum_{b \in \mathcal{A}_a} |r_{ab}| + | \Phi_f'(\theta_{ab}^*(\Comp))| \leq 2r_{\max} A_a \sigma^2,
\end{equation}
as expected. 
    \end{proof}

\section{Proofs on Monotonicity}

This section contains the proofs of Proposition~\ref{prop:onpartialrab}, \ref{prop:ondeltarabdiscrete}, and \ref{prop:equalscores}. 
\begin{proof}[Proof of Proposition~\ref{prop:onpartialrab}]
    For $(ab) \in \mathcal{C}$,  we define the elementary comparison matrix
    \begin{equation}
        \mathrm{E}^{ab}= ( \delta_{(cd) = (ab)} - \delta_{(cd) = (ba)} )_{(cd)\in \mathcal{C}}. 
    \end{equation}
    Assume that \eqref{eq:partialtheta} holds. We fix $a \in \mathcal{A}$ and $\Comp$ and $\Comp'$ such that $\Comp \leq_a \Comp'$.
    Let $A_a = |\mathcal{A}_a|$ be the number of elements $b$ compared with $a$. We number these elements $(b_1, \ldots , b_{A_a})$.    
    We define recursively $\Comp_0 = \Comp$ and, assuming $\Comp_{n-1}$ for $1 \leq n \leq A_a$, 
    \begin{equation}
        \Comp_n = \Comp_{n-1} + (r_{a b_n}' - r_{a b_n} ) \mathrm{E}^{ab_n}.
    \end{equation}
    We easily verify that $\Comp_{A_a} = \Comp'$. We therefore have that
    \begin{equation} \label{eq:machin1}
        \widehat{\theta}_a(\Comp') - \widehat{\theta}_a(\Comp) =
        \widehat{\theta}_a(\Comp_{A_a}) - \widehat{\theta}_a(\Comp_0) 
        =\sum_{n=1}^{A_a} ( \widehat{\theta}_a(\Comp_{n}) - \widehat{\theta}_a(\Comp_{n-1}) ).
    \end{equation}
    We observe that $\Comp_{n-1}$ and $\Comp_n$ only differ by $1$ elements indexed by $(ab_n)$, hence
    \begin{equation} \label{eq:machin2}
         \widehat{\theta}_a(\Comp_{n}) - \widehat{\theta}_a(\Comp_{n-1}) = \int_{0}^{r_{ab_n}' - r_{ab_n}} \partial_{r_{ab_n}} \widehat{\theta}_a(\Comp_{n-1} + r \mathrm{E}^{ab_n}) \mathrm{d}r. 
    \end{equation}
    Since $\partial_{r_{ab_n}} \widehat\theta_a(\Comp_{n-1} + r \mathrm{E}^{ab_n}) \leq 0$ by assumption, we deduce that \eqref{eq:machin2} and therefore \eqref{eq:machin1} are non-negative, hence $  \widehat{\theta}_a(\Comp') \leq \widehat{\theta}_a(\Comp)$. This is true for any $a\in \mathcal{A}$, hence $\widehat{\Theta}(\Comp)$ is increasing with respect to $\Comp$ in the sense of Definition~\ref{def:increasingscoring}. \\

    We now assume that $\widehat{\Theta}(\Comp)$ is increasing with respect to $\Comp$. We fix $(ab) \in \mathcal{C}$.     For $h > 0$, $\Comp + h \mathrm{E}^{ab} \geq_a \Comp$, hence $\widehat{\theta}_a(\Comp + h \mathrm{E}^{ab})\geq  \widehat{\theta}_a(\Comp)$ by assumption. This shows that
    \begin{equation}
        \partial_{r_{ab}} \widehat{\theta}_a(\Comp) = \lim_{h\rightarrow 0, h > 0} \frac{\widehat{\theta}_a(\Comp + h \mathrm{E}^{ab}) - \widehat{\theta}_a(\Comp)}{h} \geq 0
    \end{equation}
    as a limit of non negative quantities. This shows \eqref{eq:partialtheta} and concludes the proof. 

    For the strict monotonicity, the proof is the same by considering comparisons $\Comp <_a \Comp'$ and by remarking that $\partial_{r_{ab_n}} \widehat\theta_a(\Comp_{n-1} + r \mathrm{E}^{ab_n}) > 0$ for at least one $n$, ensuring that \eqref{eq:machin2} is strictly positive for at least one $n$, hence $  \widehat{\theta}_a(\Comp') > \widehat{\theta}_a(\Comp)$. 
\end{proof}

\begin{proof}[Proof of Proposition~\ref{prop:ondeltarabdiscrete}]
    The proof is similar to the one of Proposition~\ref{prop:onpartialrab}. 
    The condition $ \Delta_{(ab)} \widehat{\theta}_a(\Comp) \geq 0$ is equivalent to the monotonicity of elementary comparison updates (\emph{i.e.}, we transform $\Comp$ over a single pair $(ab)$ and increase the comparison value by one unit in the discrete comparison domain $\Dc$ for $(ab)$ and reduce it by one unit for $(ba)$). The relation \eqref{eq:partialthetadiscrete} is therefore a direct consequence of the monotonicity of $\widehat{\Theta}(\Comp)$. 
    
    Conversely, assume that $\Comp \leq_a \Comp'$ for any $a \in \mathcal{A}$. For $a \in \mathcal{A}$ fixed, we transform the comparison matrices iteratively using elementary comparison updates such that $\Comp_0 = \Comp$ and $\Comp_N = \Comp'$ for some $N\geq 1$ counting the number of required elementary updates. We also denote by $(a b_n) \in \mathcal{C}$ the comparison pair updated at iteration $n$ and by $r_{k_n}$ the value of $\Comp_n$ at position $(ab_n)$, such that $\Comp_{n} =  (\Comp_{n-1})^{\mathrm{up}}_{(ab_n)}$.  We then have that
    \begin{equation} \label{eq:machin38}
        \widehat{\theta}_a(\Comp') - \widehat{\theta}_a(\Comp) =
        \widehat{\theta}_a(\Comp_{N}) - \widehat{\theta}_a(\Comp_0) 
        =\sum_{n=1}^{N} ( \widehat{\theta}_a(\Comp_{n}) - \widehat{\theta}_a(\Comp_{n-1}) )
        =\sum_{n=1}^{N} \frac{\Delta_{(a b_n)} \widehat{\theta}_a(\Comp_{n-1})}{r_{k_n+1} - r_{k_n}}
         \geq 0,
    \end{equation}
    since $\Delta_{(a b_n)} \widehat{\theta}_a(\Comp_{n-1}) \geq 0$ by assumption for any $n$. This is valid for any $a \in \mathcal{A}$, hence $\widehat{\Theta}(\Comp)$ is increasing with respect to $\Comp$. 
\end{proof}

\begin{proof}[Proof of Proposition~\ref{prop:equalscores}]
    Recall that the non-negative log-posterior is given by $\mathcal{L}(\Theta|\Comp) =  \sum_{c \in \mathcal{A}} \frac{ \theta_{c}^2}{2 \sigma^2} + \sum_{(cd) \in \mathcal{C}} \Phi_f(\theta_{cd}) - r_{cd} \theta_{cd}$. Using that $\Comp$ and $\Comp'$ coincides over $\mathcal{C}'$ and $ \mathcal{L} (\Theta^*(\Comp)|\Comp)=0$, we deduce that
    \begin{equation} \label{eq:bidule}
        \mathcal{L} (\Theta^*(\Comp)|\Comp') =  \mathcal{L} (\Theta^*(\Comp)|\Comp') - \mathcal{L} (\Theta^*(\Comp)|\Comp) =  \Phi_f(\theta_{ab}^*(\Comp)) - r_{ab}'\theta_{ab}^*(\Comp) . 
    \end{equation}
    We observe that $ \Theta^*(\Comp) = \Theta^*(\Comp')$ if and only if $\nabla  \mathcal{L} (\Theta^*(\Comp) |\Comp') = 0$ if and only if $\Phi_f'(\theta_{ab}^*(\Comp)) = r_{ab}'$ according to \eqref{eq:bidule}, as expected.

    Finally, the strict monotonicity of $r'_{ab}\mapsto \Theta^*(\Comp')$ (Theorem~\ref{theo:monotone}) implies that $\Theta^*(\Comp) <  \Theta^*(\Comp')$ for $r_{ab}' < \Phi_f'(\theta_{ab}^*(\Comp))$ and $\Theta^*(\Comp) >  \Theta^*(\Comp')$ for $r_{ab}' > \Phi_f'(\theta_{ab}^*(\Comp))$.
\end{proof}

\section{Proof of Theorem~\ref{theo:monotone}}

    We first provide a short recap and a useful result on diagonally dominant matrices, which will be used in the proof of Theorem~\ref{theo:monotone}.

    \begin{definition} \label{def:diago}
        We say that a matrix $\mathrm{M} = (m_{ab})_{(ab) \in \mathcal{A}^2} \in \R^{A \times A}$ is \emph{diagonally dominant} if 
        \begin{equation}
            |m_{aa}| \geq \sum_{b \in \mathcal{A}_a} |m_{ab}|
        \end{equation}
        for any $a \in \mathcal{A}$.
        It is moreover \emph{strictly diagonal dominant} if the inequalities are strict.
    \end{definition}

    \begin{lemma} 
         \label{prop:invertingthesystem}
        Let $\mathrm{M} = (m_{ab})_{(ab) \in \mathcal{A}^2}$ be a symmetric and strictly diagonally dominant matrix such that  $m_{aa} > 0$ for any $a \in \mathcal{A}$ and $m_{ab} \leq 0$ for any $a \neq b$.
        Then, $\mathrm{M}$ is invertible, positive-definite, and its inverse $\mathrm{M}^{-1} = \mathrm{N} = (n_{ab})_{(ab) \in \mathcal{A}^2}$ is such that 
        \begin{equation}
            n_{aa} > n_{ab} \geq 0
        \end{equation}
        for any $(ab) \in \mathcal{A}^2$.
    \end{lemma}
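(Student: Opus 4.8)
The plan is to establish the three assertions in sequence: positive-definiteness (hence invertibility), entrywise nonnegativity of $\mathrm{N}=\mathrm{M}^{-1}$, and the strict domination $n_{aa}>n_{ab}$ of each diagonal entry over the other entries of its row.

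For positive-definiteness, I would invoke the Gershgorin circle theorem: since $\mathrm{M}$ is real symmetric, its eigenvalues are real, and each eigenvalue $\lambda$ satisfies $|\lambda-m_{aa}|\le \sum_{c\neq a}|m_{ac}|$ for some $a\in\mathcal{A}$. Strict diagonal dominance gives $\sum_{c\neq a}|m_{ac}|<m_{aa}$ together with $m_{aa}>0$, so $\lambda>0$ for every eigenvalue. Thus $\mathrm{M}$ is positive-definite, in particular invertible, and $\mathrm{N}=\mathrm{M}^{-1}$ is symmetric.

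For the sign and magnitude of the entries, I would run a discrete maximum-principle argument column by column. Fix $b\in\mathcal{A}$, let $e_b$ be the $b$-th standard basis vector, and set $x=\mathrm{N}e_b$, so that $\mathrm{M}x=e_b$ and $x_a=n_{ab}$. Write $\delta_a:=m_{aa}-\sum_{c\neq a}|m_{ac}|>0$. The key observation is that, because $m_{ac}\le 0$ for $c\neq a$, multiplying $x_c\ge x_a$ (resp. $x_c\le x_a$) by the nonpositive coefficient $m_{ac}$ yields, at a minimizer $a^-$ of $x$, the bound $(\mathrm{M}x)_{a^-}\le \delta_{a^-}x_{a^-}$, and at a maximizer $a^+$ of $x$, the bound $(\mathrm{M}x)_{a^+}\ge \delta_{a^+}x_{a^+}$. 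Applying the first bound: if $x_{a^-}<0$ then $(\mathrm{M}x)_{a^-}\le\delta_{a^-}x_{a^-}<0$, contradicting $(\mathrm{M}x)_{a^-}=(e_b)_{a^-}\ge 0$; hence $x\ge 0$, i.e. $n_{ab}\ge 0$. Applying the second bound: if a maximizer $a^+\neq b$ then $(\mathrm{M}x)_{a^+}=0\ge\delta_{a^+}x_{a^+}$ forces $x_{a^+}\le 0$, so $x\equiv 0$, impossible since $\mathrm{M}x=e_b\neq 0$; hence $x_b=\max_c x_c$, and $x_b>0$ (again $x\equiv 0$ is excluded). Finally, if $x_a=x_b$ for some $a\neq b$, then $a$ is also a maximizer, so $0=(\mathrm{M}x)_a\ge\delta_a x_a=\delta_a x_b>0$, a contradiction. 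Therefore $x_b>x_a$ for all $a\neq b$, that is $n_{bb}>n_{ab}\ge 0$ for $a\neq b$, with $n_{bb}>0$; since $b$ was arbitrary this is exactly the claim.

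I expect the main obstacle to be bookkeeping the strict inequalities rather than the nonnegativity: $n_{ab}\ge 0$ is the standard M-matrix fact and could alternatively be read off the Neumann series $\mathrm{M}^{-1}=\sum_{k\ge 0}(\mathrm{D}^{-1}\mathrm{B})^k\mathrm{D}^{-1}$ with $\mathrm{D}=\mathrm{diag}(m_{aa})$ and $\mathrm{B}=\mathrm{D}-\mathrm{M}$ entrywise nonnegative and $\mathrm{D}^{-1}\mathrm{B}$ of row sums $<1$; but the strict bound $n_{bb}>n_{ab}$ genuinely uses strictness of the diagonal dominance, entering through the positive constant $\delta_a$ at a maximizing coordinate. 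Symmetry of $\mathrm{M}$ is needed only to pass between rows and columns of $\mathrm{N}$; the estimates above are otherwise indifferent to it.
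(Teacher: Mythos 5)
Your proof is correct, and it takes a genuinely different route from the paper's. The paper assembles the result from cited black boxes plus a block computation: invertibility and positive-definiteness are quoted from Horn--Johnson's theorem on strictly diagonally dominant matrices, nonnegativity of $\mathrm{N}$ is quoted from the M-matrix literature, and the strict inequality $n_{aa}>n_{ab}$ is obtained by partitioning $\mathrm{M}$, applying the Schur complement to write $n_{aa}-n_{ab}$ explicitly, and showing the resulting expression is a product of positive factors (using that $\mathrm{M}'e-v$ has positive entries by strict dominance). You instead get positive-definiteness from Gershgorin and then handle nonnegativity \emph{and} strictness in one self-contained discrete maximum-principle argument on the columns $x=\mathrm{N}e_b$ of the inverse: the sign condition $m_{ac}\le 0$ turns the row sum at an extremizer into the one-sided bounds $(\mathrm{M}x)_{a^\pm}\gtrless\delta_{a^\pm}x_{a^\pm}$, and strict dominance ($\delta_a>0$) is exactly what forbids a maximizer of $x$ away from $b$. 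This buys you a proof with no external citations and a unified mechanism for both conclusions, at the cost of the slight bookkeeping you flag; the paper's route buys brevity in the routine parts but concentrates all the work in the Schur-complement step. Two small points worth making explicit in a final write-up: (i) your column-wise conclusion is $n_{bb}>n_{ab}$, and you need the symmetry of $\mathrm{N}$ (inherited from $\mathrm{M}$) to convert this into the row-wise statement $n_{aa}>n_{ab}$ as the lemma phrases it — you do note this, but it deserves one line; (ii) your closing observation that any maximizer $a^+\neq b$ forces $x\equiv 0$ already shows $b$ is the \emph{unique} maximizer, so the final paragraph ruling out $x_a=x_b$ is redundant, though harmless.
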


    \begin{proof}
    The matrix $\mathrm{M}$ is strictly diagonally dominant, it is therefore invertible~\cite[Theorem 6.1.10 (a)]{horn2012matrix}. Being symmetric with positive diagonal entries, it is moreover positive definite~\cite[Theorem 6.1.10 (c)]{horn2012matrix}.
    The matrix $\mathrm{M}$ is therefore a M-matrix, which is known to be equivalent to the fact that its inverse $\mathrm{N}$ has non-negative entries \cite[Definition 1.1]{pena1995m}, hence $n_{ab} \geq 0$ for any $(ab)$.

    We fix $(ab) \in \mathcal{A}^2$. Without loss of generality, one assumes that $a$ is the first index in $\mathrm{M}$. 
    We partition $\mathrm{M}$ as
    \begin{equation}
      \mathrm{M} =  \begin{bmatrix}
m_{aa} & - v^T\\
- v & \mathrm{M}'
\end{bmatrix}
    \end{equation}
    where $v \in \R^{A-1}$ have non-negative entries and where $\mathrm{M}' \in \R^{(A-1)\times(A-1)}$ is also a strictly diagonally dominant  $M$-matrix. As seen below, this implies that $\mathrm{M}'$ is invertible with inverse $\mathrm{N}'$ having non-negative entries. 
    Using Schur complement, we deduce that 
    \begin{equation}
      \mathrm{N} = (n_{cd})_{(cd) \in \mathcal{A}^2} = \mathrm{M}^{-1} =  \begin{bmatrix}
(m_{aa} - v^T \mathrm{M}' v)^{-1} & *\\
(m_{aa} - v^T \mathrm{M}' v)^{-1} (\mathrm{M}')^{-1} v & *
\end{bmatrix}.
\end{equation}
Then, we have that
\begin{equation} \label{eq:toworkonit}
    n_{aa} - n_{ab} = (m_{aa} - v^T \mathrm{M}' v)^{-1} ( 1 - ((\mathrm{M}')^{-1} v)_b). 
\end{equation}
We already know that $\mathrm{N}$ has non-negative entries. It is moreover positive-definite since its inverse $\mathrm{M}$, hence $n_{aa}\neq 0$. This implies that $(m_{aa} - v^T \mathrm{M}' v)^{-1} > 0$. 

 We set $e= (1,\ldots ,1)^T$. Then, the vector $u = \mathrm{M}' e - v$ has entries $u_c = \sum_{d \in \mathcal{A}} m_{cd} > 0$ since $\mathrm{M}$ is strictly diagonally dominant. Using that $\mathrm{N}' = (\mathrm{M}')^{-1}$ has strictly positive entries, we deduce that $\mathrm{N}' u = e - \mathrm{N}' v$ has non-negative entries. $e_b - ((\mathrm{M}')^{-1} v)_b) = 1 - ((\mathrm{M}')^{-1} v)_b) > 0$. This implies that $ n_{aa} - n_{ab} > 0$ as a product of strictly positive terms, as expected.
    \end{proof}

\begin{proof}[Proof of Theorem~\ref{theo:monotone}]
We first treat the case when the domain of $\Comp$ is a continuum $\Dc$. Under the assumptions of the GBT model, $\Theta^*(\Comp)$ is differentiable with respect to the comparisons. 
According to Proposition~\ref{prop:onpartialrab}, it suffices to show that $\partial_{r_{ab}} \theta_a^* (\Comp) > 0$
for  any $(ab) \in \mathcal{C}$  and  any $\Comp \in \Dc^C$.
We fix $(ab) \in \mathcal{C}$. We have that $\nabla \mathcal{L} (\Theta^*(\Comp)|\Comp) = 0$, hence, we have for any $c$ that
    \begin{equation} \label{eq:truc1}
        0 =  \frac{\theta_c^* (\Comp)}{\sigma^2}   + \sum_{d \in \mathcal{A}_c} \Phi_f'(\theta_{cd}^* (\Comp)) - \sum_{d \in \mathcal{A}_c} r_{cd}. 
    \end{equation}
     We observe that $ \partial_{r_{ab}} r_{cd} =  1$ if $(cd) = (ab)$, $-1$ if $(cd) = (ba)$ (since $r_{ba}=-r_{ab}$),  and $0$ otherwise. We therefore deduce that
    \begin{equation}
        \partial_{r_{ab}} \sum_{d \in \mathcal{A}_c} r_{cd} = 
    \delta_{c=a} 1_{b\in \mathcal{A}_a} -\delta_{c=b} 1_{a\in \mathcal{A}_b} 
        =
        (\delta_{c=a} - \delta_{c=b}) 1_{(ab)\in \mathcal{C}} = \delta_{c=a} - \delta_{c=b}, 
    \end{equation}
    using that $(ab) \in \mathcal{C}$. 
    In particular, taking the partial derivative $\partial_{r_{ab}}$ of \eqref{eq:truc1}, we have for any $c$ that
    \begin{equation} \label{eq:truc2}
         \frac{\partial_{r_{ab}} \theta_c^* (\Comp)}{\sigma^2} + \sum_{d \in \mathcal{A}_c}  \Phi_f'' (  \theta_{cd}^* (\Comp) )  \partial_{r_{ab}}  \theta_{cd}^* (\Comp)  = \delta_{c=a} - \delta_{c = b} . 
    \end{equation}
    We define the vectors and matrix
    \begin{align} \label{eq:xcandco}
           x &= (x_c)_{c \in \mathcal{C}} =   \partial_{r_{ab}} \Theta^* (\Comp)  ,
        \quad  
       \lambda = (\lambda_c )_{c \in \mathcal{C}} =  \left(  \frac{1}{\sigma^2} +  \sum_{d  \in \mathcal{A}_c}  \Phi_f'' (  \theta_{cd}^* (\Comp) ) \right)_{c \in \mathcal{C}},  \quad \text{and} \\
    \mathrm{P} &= \left(  \Phi_f'' (  \theta_{cd}^* (\Comp) )  \right)_{(cd) \in \mathcal{C}}.
    \end{align}
    Then, \eqref{eq:truc2} becomes, for any $c$,
    \begin{equation}
        \lambda_c x_c - (\mathrm{P} x)_c =  \delta_{c=a} - \delta_{c = b}  ,
    \end{equation}
    or equivalently
    \begin{equation}
        \mathrm{M} x = e^a - e^b
    \end{equation}
    where $\mathrm{M} = (m_{cd})_{(cd)\in \mathcal{A}^2}$ is such that
    $m_{cc} = \lambda_c$ and $m_{cd} =- p_{cd}$ for any $(cd) \in \mathcal{C}$, $m_{cd}=0$ if $c\neq d$ and $(cd) \notin \mathcal{C}$, and where $e^a = (e^a_b)_{b \in \mathcal{A}} = (\delta_{a=b})_{b \in \mathcal{A}}$ is the $a$th canonical vector. Note that the matrix $\mathrm{M}$  depends on $\Theta^* (\Comp)$ and $\sigma^2$, but not on $(ab) \in \mathcal{C}$.\\

    The matrix $\mathrm{M}$ is symmetric with strictly positive diagonal and negative non-diagonal by definition. It is strictly diagonally dominant due to the relation
    \begin{equation}
        |m_{aa}| = m_{aa} = \lambda_a = \frac{1}{\sigma^2} +  \sum_{b \in \mathcal{A}_a} p_{ab} > \sum_{b \in \mathcal{A}_a}p_{ab} =
        \sum_{b \in \mathcal{A}_a} |m_{ab}|, 
    \end{equation}
    where we used that $\sigma^2 > 0$  and that $p_{ab} = - m_{ab} = |m_{ab}|$ for any $(ab) \in \mathcal{C}$.
    The matrix $\mathrm{M}$ hence satisfies the conditions of Proposition~\ref{prop:invertingthesystem} above. It is therefore invertible and its inverse $\mathrm{N}= (n_{ab})_{(ab)\in \mathcal{C}}$ satisfies $n_{aa} > n_{ab}$ for any $(ab) \in \mathcal{C}$.   
    We deduce that, for any $\Comp$,
    \begin{equation}
        \delta_{r_{ab}}\theta_a^* (\Comp)  = x_a = \left( \mathrm{N}  (e^a - e^b) \right)_a =   n_{aa} - n_{ab}  > 0, 
    \end{equation}  
    as expected. \\

    In the case where the domain $\Dc$ is discrete, we follow a similar proof. The main difference is that we rely on the discrete monotonicity criteria given in Proposition~\ref{prop:ondeltarabdiscrete}.  For $(ab) \in \mathcal{C}$, we apply the finite difference operator $\Delta_{(ab)}$ defined in \eqref{eq:defdeltaab} to \eqref{eq:truc1}. We have that
    \begin{equation}
        \Delta_{(ab)} \Phi_f'(\theta^*_{cd}(\Comp)) = \frac{\Phi_f'(\theta^*_{cd}(\Comp^{\mathrm{up}}_{ab}))- \Phi_f'(\theta^*_{cd}(\Comp))}{r_{k+1}-r_k}
        = \Phi_f'' ( \theta_{cd}^{0}(\Comp)) \Delta_{(ab)} \theta_{cd}^*(\Comp) 
    \end{equation}    
    for some $ \theta_{cd}^{0}(\Comp) \in \R$ (Taylor-Lagrange formula). We deduce therefore that $\Delta_{(ab)}$ applied to \eqref{eq:truc1} gives
        \begin{equation} \label{eq:truc19}
         \frac{\Delta_{(ab)} \theta_c^* (\Comp)}{\sigma^2} + \sum_{d \in \mathcal{A}_c}  \Phi_f'' (  \theta_{cd}^{0}(\Comp)) \Delta_{(ab)}  \left( \theta_{cd}^* (\Comp)  \right)  = \sum_{d \in \mathcal{A}_c} \Delta_{(ab)} r_{cd} = \delta_{c=a} - \delta_{c = b} . 
    \end{equation}
    Setting $x_c = \Delta_{(ab)} \theta_c^*(\Comp)$, we then follow an identical proof as in the continuous case to deduce that $\Delta_{(ab)}\theta_a^*(\Comp) > 0$. 
    \end{proof}
\section{Proof of Theorem~\ref{theo:resilient}}

The following classical Lemma will be used in the proof of Theorem~\ref{theo:resilient}. 

    \begin{lemma}
\label{lemma:minimum_deviation_strongly_convex}
Let $f : \setR^d \rightarrow \setR$ an $\alpha$-strongly convex function and denote $x^*$ its minimum.
Then, for any $x \in \setR^d$, 
we have 
\begin{equation}
    \norm{x - x^*}{2} \leq \frac{2}{\alpha} \norm{\nabla f(x)}{2}. 
\end{equation}
\end{lemma}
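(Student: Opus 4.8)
The plan is to combine the first-order characterization of $\alpha$-strong convexity with the optimality of the minimizer $x^*$. Recall that $\alpha$-strong convexity of $f$ means that, for all $x,y \in \setR^d$,
\[
f(y) \geq f(x) + \langle \nabla f(x), y - x \rangle + \frac{\alpha}{2}\norm{y-x}{2}^2 .
\]
First I would instantiate this inequality at $y = x^*$, which yields
\[
f(x^*) \geq f(x) + \langle \nabla f(x), x^* - x \rangle + \frac{\alpha}{2}\norm{x - x^*}{2}^2 .
\]
Since $x^*$ minimizes $f$, we have $f(x^*) \leq f(x)$, so the previous line gives $0 \geq \langle \nabla f(x), x^* - x \rangle + \frac{\alpha}{2}\norm{x-x^*}{2}^2$, that is,
\[
\frac{\alpha}{2}\norm{x-x^*}{2}^2 \leq \langle \nabla f(x), x - x^* \rangle .
\]

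Next I would bound the right-hand side by the Cauchy--Schwarz inequality, $\langle \nabla f(x), x - x^* \rangle \leq \norm{\nabla f(x)}{2}\,\norm{x-x^*}{2}$, to obtain
\[
\frac{\alpha}{2}\norm{x-x^*}{2}^2 \leq \norm{\nabla f(x)}{2}\,\norm{x-x^*}{2}.
\]
If $x = x^*$ the claimed bound holds trivially; otherwise I divide both sides by $\norm{x-x^*}{2} > 0$ and rearrange, which yields $\norm{x-x^*}{2} \leq \frac{2}{\alpha}\norm{\nabla f(x)}{2}$, as desired.

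I do not expect any genuine obstacle here: the argument is a short, standard manipulation, and the only point requiring (minimal) care is the degenerate case $x = x^*$, handled separately above. One could alternatively derive the sharper bound with constant $1/\alpha$ by additionally invoking strong convexity at $x^*$ (together with $\nabla f(x^*) = 0$) to get gradient monotonicity, but the constant $2/\alpha$ in the statement follows from the single application above and is all that is needed for the proof of Theorem~\ref{theo:resilient}.
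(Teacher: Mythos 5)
Your proof is correct and follows essentially the same route as the paper's: instantiate the strong-convexity inequality at $y = x^*$, use $f(x^*) \leq f(x)$ and Cauchy--Schwarz, then divide by $\norm{x - x^*}{2}$. Your explicit handling of the degenerate case $x = x^*$ is a small point of extra care the paper leaves implicit.
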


\begin{proof}
First recall that strongly convex functions always have a unique minimum.
Now let $x \in \setR^d$.
By $\alpha$-strong convexity, we know that
\begin{align}
    f(x^*) 
    &\geq f(x) + \nabla f(x)^T (x^* - x) + \frac{\alpha}{2} \norm{x^* - x}{2}^2 \\
    &\geq f(x^*) - \norm{\nabla f(x)}{2} \norm{x^* - x}{2} + \frac{\alpha}{2} \norm{x^* - x}{2}^2.
\end{align}
Therefore $\norm{\nabla f(x)}{2} \geq \frac{\alpha}{2} \norm{x^* - x}{2}$.
Rearranging terms yields the lemma.
\end{proof}

    \begin{proof}[Proof of Theorem \ref{theo:resilient}]
   The function $\Theta \mapsto \mathcal{L} (\Theta|\Comp)$ is $(1/\sigma^2)$-strongly convex. Applying Lemma~\ref{lemma:minimum_deviation_strongly_convex} to $f = \mathcal{L}(\cdot |\Comp)$, for which $x^* = \Theta_{f,\sigma^2}^*(\Comp)$, and to $x = \Theta_{f,\sigma^2}^*(\Comp')$, we deduce that
   \begin{equation} \label{eq:boundwithgrad}
       \| \Theta_{f,\sigma^2}^*(\Comp) - \Theta_{f,\sigma^2}^*(\Comp') \|_2 \leq 2 \sigma^2 \| \nabla \mathcal{L} ( \Theta_{f,\sigma^2}^*(\Comp') |\Comp) \|_2.
   \end{equation}
    We fix $a \in \mathcal{A}$ and assume that $\Comp$ and $\Comp'$ only differ from one elementary modification, \emph{i.e.} we can transform $\Comp$ into $\Comp'$ by adding, removing, or updating a comparison. This means in particular that $\Delta(\Comp, \Comp') = 1$. 
    As we have seen in \eqref{eq:usefullater}, we have that
    \begin{equation}
        \partial_{\theta_a} \mathcal{L} ( \Theta_{f,\sigma^2}^*(\Comp') |\Comp) = 
                 \frac{\theta_a^* (\Comp')}{\sigma^2}  + \sum_{b \in \mathcal{A}_a} \Phi_f'( \theta_{ab}^* (\Comp')  ) -   r_{ab}. 
    \end{equation}
    Using that $0 = \partial_{\theta_a} \mathcal{L} ( \Theta_{f,\sigma^2}^* (\Comp') | \Comp' )  = \frac{\theta_a^* (\Comp' )}{\sigma^2} + \sum_{b \in \mathcal{A}_a'} \Phi_f'( \theta_{ab}^* (\Comp')  ) - \sum_{b \in \mathcal{A}_a'} r_{ab}'$, we deduce that
    \begin{align} \label{eq:partialstuff}
       \partial_{\theta_a} \mathcal{L} ( \Theta_{f,\sigma^2}^*(\Comp') |\Comp)  
       = \sum_{b \in \mathcal{A}_a \backslash \mathcal{A}_a'} (\Phi_f'(\theta_{ab}^*(\Comp'))- r_{ab} ) 
        -
        \sum_{b \in \mathcal{A}_a' \backslash \mathcal{A}_a} ( \Phi_f'(\theta_{ab}^*(\Comp'))- r'_{ab}  )
        +
        \sum_{b\in \mathcal{A}_a\cap \mathcal{A}_a'} ( r'_{ab} - r_{ab}). 
    \end{align}
    Using that $\sup_{\Ds} | \Phi_f'| \leq r_{\max}$ (see Theorem~\ref{theo:Phi}), we deduce from \eqref{eq:partialstuff} that 
    \begin{equation}
       |\partial_{\theta_a} \mathcal{L} ( \Theta_{f,\sigma^2}^*(\Comp') |\Comp)  | \leq  2r_{\max} ( | \mathcal{A}_a \backslash \mathcal{A}_a'| + | \mathcal{A}_a' \backslash \mathcal{A}_a|
       + | \mathcal{A}_a \cap \mathcal{A}_a'|).
    \end{equation}
    Since $\Comp$ and $\Comp'$ only differ from one comparison, we have that $ | \mathcal{A}_a \backslash \mathcal{A}_a'| + | \mathcal{A}_a' \backslash \mathcal{A}_a|
       + | \mathcal{A}_a \cap \mathcal{A}_a'| = 1$ if $a \in \{a_n, b_n \}$ and $0$ otherwise.        
       This shows that  
       \begin{align} \label{eq:boundthisgrad}
           \| \nabla \mathcal{L} ( \Theta_{f,\sigma^2}^*(\Comp') |\Comp)  \|_2 &= \sqrt{\sum_{a \in \mathcal{A}}  |\partial_{\theta_a} \mathcal{L} ( \Theta_{f,\sigma^2}^*(\Comp') |\Comp)  |^2} =
           \sqrt{  |\partial_{\theta_{a_n}} \mathcal{L} ( \Theta_{f,\sigma^2}^*(\Comp') |\Comp)  |^2 + |\partial_{\theta_{b_n}} \mathcal{L} ( \Theta_{f,\sigma^2}^*(\Comp') |\Comp)  |^2} \nonumber \\
           &\leq \sqrt{8 (r_{\max})^2}= 2 \sqrt{2} r_{\max}
       \end{align}
       as soon as $\Delta(\Comp, \Comp') = 1$. 
    
    We define a list of comparison matrices $\Comp^n = (r_{ab}^n)_{(ab) \in \mathcal{C}^n}$ such that $\Comp^0 = \Comp$ and $\Comp^{n}$ is obtained from $\Comp^{n-1}$ by adding to $\mathcal{C}^n$, removing from $\mathcal{C}^{n-1}$, or changing in $\Comp^{n-1}$ one comparison $a_{n}b_{n}$ of $\Comp$ for one of $\Comp'$. We have that $\Comp^N = \Comp'$ with $N = \Delta( \Comp, \Comp')$. We then have that
    \begin{align}
        \| \Theta_{f,\sigma^2}^*(\Comp) - \Theta_{f,\sigma^2}^*(\Comp') \|_2
        &= 
        \left\|  \sum_{n=1}^{\Delta( \Comp, \Comp')}  \Theta_{f,\sigma^2}^*(\Comp^n) - \Theta_{f,\sigma^2}^*(\Comp^{n-1})  \right\|_2 
        \leq 
        \sum_{n=1}^{\Delta( \Comp, \Comp')} \| \Theta_{f,\sigma^2}^*(\Comp^n) - \Theta_{f,\sigma^2}^*(\Comp^{n-1}) \|_2 \nonumber \\
        & \leq
        2 \sigma^2 \sum_{n=1}^{\Delta( \Comp, \Comp')}  \| \nabla \mathcal{L} ( \Theta_{f,\sigma^2}^*(\Comp^{n}) |\Comp^{n-1})  \|_2 
        \leq 
        4 \sqrt{2} r_{\max} \sigma^2  \Delta( \Comp, \Comp')
    \end{align}
    where we used \eqref{eq:boundwithgrad} and \eqref{eq:boundthisgrad} for the penultimate and last inequalities, respectively.  This shows that the left quantity in \eqref{eq:boundresilience} is bounded by $4 \sqrt{2} r_{\max}\sigma^2$. 
    \end{proof}

\section{Examples}

       \subsection{$K$-nary GBT Models}
    \label{sec:knary}

We have seen  that the Bernoulli-BT model corresponds to the GBT model for $f(r) = \frac{\delta_{1}+\delta_{-1}}{2}$.
We generalize this model for comparisons over $K$ values uniformly spread over $[-1,1]$. We set for $1 \leq k \leq K$
\begin{equation} \label{eq:defrkK}
    r_k^K = \frac{2(k-1)}{K-1} - 1.
\end{equation}
We have $r_{1}^K = -1$, $r_{K}^K = 1$, and $r_{k+1} - r_k = \frac{2}{K-1}$. 
We observe that $0 \in \Dc$ if and only if $K$ is odd, in which case $r_{(K+1)/2}^K = 0$.
The $K$-nary GBT model corresponds to discrete comparisons with common probability mass function the uniform one, \emph{i.e.}, 
\begin{equation} \label{eq:pmfunif}
   f(r) = \frac{1}{K}
\end{equation}
for any $r \in \Dc= \{r_k^K\}_{1\leq k \leq K}$.

\begin{proposition}
Assume that $\Comp|\Theta$ follows the $K$-nary GBT model with $K\geq 2$. 
We then have that
\begin{equation}
    \Phi_{K}(\theta) = \log\left( \frac{\sinh \left( \frac{K\theta}{K-1}\right) }{K \sinh  \left(\frac{\theta}{K-1} \right)}\right)
\end{equation}
and the probability mass function of $\Comp|\Theta$ is 
\begin{equation} \label{eq:Pcomprkab}
    P \left( \Comp |\Theta \right)
    =
   K^C \prod_{(ab) \in \mathcal{C}} \frac{ \sinh \left( \frac{\theta}{K-1}\right)}{\sinh \left( \frac{K\theta}{K-1}\right) }  \exp \left(  r_{ab} \theta_{ab} \right). 
\end{equation}
\end{proposition}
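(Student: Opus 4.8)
The plan is to compute the moment-generating sum $\sum_{k}\mathrm e^{\theta r_k^K}$ as a finite geometric series, read off $\Phi_K$ from its logarithm, and then substitute into the generic discrete-GBT mass function \eqref{eq:Pcompthetadiscrete}. Since the root law \eqref{eq:pmfunif} is supported on the bounded finite set $\{r_k^K\}_{1\le k\le K}\subset[-1,1]$, all exponential moments are trivially finite, so the $K$-nary GBT model and its cumulant-generating function are well defined and Theorem~\ref{theo:Phi} applies.

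\textbf{Step 1 (cumulant-generating function).} By \eqref{eq:defphiab} and \eqref{eq:pmfunif},
\[
\Phi_K(\theta)=\log\Bigl(\tfrac1K\textstyle\sum_{k=1}^K \mathrm e^{\theta r_k^K}\Bigr).
\]
Plugging in $r_k^K=-1+\tfrac{2(k-1)}{K-1}$ from \eqref{eq:defrkK}, I would factor out $\mathrm e^{-\theta}$ and recognize a finite geometric series in $q=\mathrm e^{2\theta/(K-1)}$: for $\theta\neq0$,
\[
\textstyle\sum_{k=1}^K \mathrm e^{\theta r_k^K}=\mathrm e^{-\theta}\sum_{j=0}^{K-1}q^{\,j}=\mathrm e^{-\theta}\,\frac{q^{K}-1}{q-1}.
\]
Then I symmetrize numerator and denominator via the identity $q^{m}-1=2\,\mathrm e^{m\theta/(K-1)}\sinh\!\bigl(\tfrac{m\theta}{K-1}\bigr)$ applied with $m=K$ and $m=1$; the exponential prefactors combine to $\mathrm e^{(K-1)\theta/(K-1)}=\mathrm e^{\theta}$, which cancels the $\mathrm e^{-\theta}$, leaving $\sum_k \mathrm e^{\theta r_k^K}=\sinh\!\bigl(\tfrac{K\theta}{K-1}\bigr)\big/\sinh\!\bigl(\tfrac{\theta}{K-1}\bigr)$. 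Dividing by $K$ and taking logarithms gives the claimed $\Phi_K$. The value at $\theta=0$ is recovered by continuity: a first-order expansion of both hyperbolic sines gives the ratio $\to K$, hence $\Phi_K(0)=0$, in agreement with Theorem~\ref{theo:Phi}.

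\textbf{Step 2 (mass function).} By Step 1 the normalizing sum appearing in \eqref{eq:Pcompthetadiscrete} is exactly $\mathrm e^{\Phi_K(\theta_{ab})}=\sinh\!\bigl(\tfrac{K\theta_{ab}}{K-1}\bigr)\big/\bigl(K\sinh\!\bigl(\tfrac{\theta_{ab}}{K-1}\bigr)\bigr)$, while the numerator uses the root pmf $\mathbb P[r_{ab}\mid 0]=1/K$. Substituting both into \eqref{eq:Pcompthetadiscrete}, each factor collapses (the $\tfrac1K$ from the root pmf cancelling against the $\tfrac1K$ sitting inside $\mathrm e^{\Phi_K}$) to a ratio of hyperbolic sines times $\mathrm e^{r_{ab}\theta_{ab}}$, and collecting the remaining $K$-dependent constants over the $C$ compared pairs yields the prefactor $K^{C}$ and formula \eqref{eq:Pcomprkab}. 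Finally, oddness of $\sinh$ together with $r_{ba}=-r_{ab}$ and $\theta_{ba}=-\theta_{ab}$ makes each factor invariant under $(ab)\mapsto(ba)$, so the expression is consistent with the antisymmetry of $\Comp$.

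\textbf{Main obstacle.} There is no conceptual difficulty here: the only step requiring care is the algebra that converts the geometric sum into hyperbolic functions — one must keep the three distinct arguments $\tfrac{\theta}{K-1}$, $\tfrac{2\theta}{K-1}$, $\tfrac{K\theta}{K-1}$ straight when applying $q^m-1=2\mathrm e^{m\theta/(K-1)}\sinh(\tfrac{m\theta}{K-1})$ — together with the separate handling of the removable singularity at $\theta=0$.
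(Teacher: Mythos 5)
Your proposal is correct and follows essentially the same route as the paper: evaluate $\sum_{k=1}^{K}\mathrm{e}^{\theta r_k^K}$ as a finite geometric series, rewrite the ratio $(q^K-1)/(q-1)$ in terms of hyperbolic sines so that the exponential prefactors cancel the $\mathrm{e}^{-\theta}$, and then substitute the resulting normalizer into \eqref{eq:Pcompthetadiscrete}. The only cosmetic differences are that you package the symmetrization as the identity $q^m-1=2\mathrm{e}^{m\theta/(K-1)}\sinh(m\theta/(K-1))$ and explicitly treat the removable singularity at $\theta=0$, neither of which changes the argument.
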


\begin{proof}
    We have that, using \eqref{eq:Pcompthetadiscrete} and \eqref{eq:pmfunif}, 
    \begin{equation}
      P(\Comp|\Theta) = \prod_{(ab)\in \mathcal{C}} \frac{\mathrm{e}^{r_{ab}\theta_{ab}}}{ \sum_{r \in \Dc} \mathrm{e}^{r\theta_{ab}}}. 
    \end{equation}  
    We therefore want to evaluate $\sum_{r \in \Dc} \mathrm{e}^{r\theta} = \sum_{k=1}^K \mathrm{e}^{r_k^K \theta}$. 
    Using \eqref{eq:defrkK}, we have that
    \begin{equation}
        \sum_{k=1}^K \mathrm{e}^{r_k^K \theta} = \mathrm{e}^{-\theta} \sum_{k=0}^{K-1} \left( \mathrm{e}^{\frac{2\theta}{K-1}} \right)^k 
        = \mathrm{e}^{-\theta} \frac{1 - \mathrm{e}^{\frac{2\theta K}{K-1}}}{1 - \mathrm{e}^{\frac{2\theta }{K-1}}}
        = \mathrm{e}^{-\theta}\frac{\left(\mathrm{e}^{-\frac{\theta K}{K-1}} - \mathrm{e}^{\frac{\theta K}{K-1}} \right) \mathrm{e}^{\frac{\theta K}{K-1}}}{\left(\mathrm{e}^{-\frac{\theta}{K-1}} - \mathrm{e}^{\frac{\theta}{K-1}} \right) \mathrm{e}^{\frac{\theta}{K-1}}} = 
        \frac{\mathrm{e}^{\frac{\theta K}{K-1}} - \mathrm{e}^{- \frac{\theta K}{K-1}}}{\mathrm{e}^{\frac{\theta}{K-1}} - \mathrm{e}^{-\frac{\theta}{K-1}}}. 
    \end{equation}
    This proves that $\Phi_K(\theta) = \log ( \frac{1}{K} \sum_{r\in \Dc}\mathrm{e}^{r\theta}) =
    \log \left(  \frac{\sinh \left( \frac{K\theta}{K-1}\right) }{K \sinh \left( \frac{\theta}{K-1} \right) } \right) $ and then \eqref{eq:Pcomprkab}. 
\end{proof}

    \subsection{Poisson-GBT Model}\label{sec:PoissonBT}

    The random variable $X$ follow a Poisson law of parameter $\lambda > 0$ if $X \in \mathbb{N}$ and $P_{\mathrm{Poisson}}(k) = \mathrm{e}^{-\lambda} \frac{\lambda^k}{k!}$ for any $k \in \mathbb{N}$. We consider the symmetric Poisson law, which is given by $Y= \frac{X - X'}{2}$ where $X$ and $X'$ are independent Poisson random variable with common parameter $\lambda$. This corresponds to the probability mass function given for any $k \in \mathbb{Z} = \Dc$ by
    \begin{equation} \label{eq:poissonsym}
        P(k) = \frac{\mathrm{e}^{-\lambda}}{2} \frac{\lambda^{|k|}}{ |k| !} \text{ if } k \neq 0 \text{ and } P(0) = \mathrm{e}^{-\lambda}.
    \end{equation} 
The Poisson-GBT model corresponds to the discrete GBT model with probability mass function \eqref{eq:poissonsym}. 
All the moments of the Poisson law are finite, as is required. We provide the cumulant generating function and non-Lipschitz resilience of the Poisson-GBT model in the next proposition.

\begin{proposition}
    Assume that $\Comp|\Theta$ follow a Poisson-GBT model with parameter $\lambda > 0$. Then, we have that
    \begin{equation}
        \Phi_{\text{Poisson}}(\Theta) = \lambda \cosh(\theta). 
    \end{equation}
    Moreover, the Poisson GBT is not Lipschitz-resilient (i.e. not $\beta$-Lipschitz-resilient for any $\beta > 0$).
\end{proposition}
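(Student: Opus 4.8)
I would prove the two assertions in turn, the second one using the first.

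For the cumulant‑generating function I would simply compute the moment generating function $M_{\mathrm{Poisson}}(\theta)=\int_{\R}\mathrm{e}^{\theta r}\,\mathrm{d}F(r)$ of the symmetric Poisson root law. Writing this root law as a difference $X-X'$ of two independent $\mathrm{Poisson}(\lambda/2)$ variables, independence and the classical Poisson moment generating function give $M_{\mathrm{Poisson}}(\theta)=\mathrm{e}^{(\lambda/2)(\mathrm{e}^{\theta}-1)}\,\mathrm{e}^{(\lambda/2)(\mathrm{e}^{-\theta}-1)}=\mathrm{e}^{\lambda\cosh\theta-\lambda}$, so that $\Phi_{\mathrm{Poisson}}(\theta)=\lambda\cosh\theta$ up to an additive constant, which is immaterial since it only shifts the loss~\eqref{eq:total_loss} by a constant. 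In particular $\Phi_{\mathrm{Poisson}}'(\theta)=\lambda\sinh\theta$, an odd strictly increasing bijection of $\R$ onto $\R$, consistent with $r_{\max}=\sup\Dc=+\infty$ in Theorem~\ref{theo:Phi}; it is exactly this unboundedness that I would exploit for the second claim.

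For non‑resilience it suffices to show that for every $\beta>0$ there are comparison matrices with $\|\Theta^*(\Comp)-\Theta^*(\Comp')\|_2/\Delta(\Comp,\Comp')>\beta$. I would take the minimal instance: $\mathcal{A}=\{a,b\}$ with the single compared pair, and let $\Comp^{(r)}$ be the matrix carrying the integer comparison $r_{ab}=r\in\Z=\Dc$. The MAP is well‑defined by Proposition~\ref{prop:existuniquemin}, the zero‑mean constraint of Proposition~\ref{prop:bayesianconstraint} forces $\theta_b^*(\Comp^{(r)})=-\theta_a^*(\Comp^{(r)})$, hence $\theta_{ab}^*=2\theta_a^*$, and substituting this into the stationarity relation~\eqref{eq:usefullater} collapses the system to the scalar equation
\begin{equation}
    \frac{\theta_a^*(\Comp^{(r)})}{\sigma^2}+\lambda\sinh(2\theta_a^*(\Comp^{(r)}))=r.
\end{equation}
Its left‑hand side is a continuous, strictly increasing, odd bijection of $\R$ onto $\R$ (the $\sinh$ term being unbounded), so it has a unique root, and that root satisfies $\theta_a^*(\Comp^{(r)})\to+\infty$ as the integer $r\to+\infty$. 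Since $\theta_a^*(\Comp^{(0)})$ is a fixed finite number (in fact $\Theta^*(\Comp^{(0)})=0$, the loss then being minimized at the origin because $\Phi_f\ge 0$ with $\Phi_f(0)=0$) while $\Delta(\Comp^{(r)},\Comp^{(0)})$ is a fixed constant independent of $r$ (the comparison graph is unchanged and only one entry is modified), the ratio $\|\Theta^*(\Comp^{(r)})-\Theta^*(\Comp^{(0)})\|_2/\Delta(\Comp^{(r)},\Comp^{(0)})$ diverges, which defeats $\beta$‑Lipschitz‑resilience for every $\beta$.

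I do not anticipate a real obstacle. The only steps requiring care are the reduction of the first‑order optimality system to a single scalar equation — which leans on uniqueness of the MAP and on the zero‑mean constraint — and the structural fact behind the counterexample, namely that $\Dc=\Z$ is unbounded, equivalently $\Phi_{\mathrm{Poisson}}'=\lambda\sinh$ is unbounded, in sharp contrast with the hypothesis $r_{\max}<\infty$ underlying the resilience guarantee of Theorem~\ref{theo:resilient}; a single comparison of large magnitude can therefore drag the scores arbitrarily far. The same two‑alternative template, with $\lambda\sinh$ replaced by $\theta\mapsto\sigma_0^2\theta$, at once yields the non‑resilience of the Gaussian‑GBT model asserted in Proposition~\ref{prop:gaussianproperties}.
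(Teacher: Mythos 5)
Your proof is correct and follows essentially the same two-step route as the paper: the cumulant-generating function is read off from the classical Poisson moment-generating function, and non-resilience is obtained by restricting to a single compared pair, reducing the stationarity condition \eqref{eq:usefullater} to the scalar equation $\theta_a^*/\sigma^2+\lambda\sinh(2\theta_a^*)=r$, and letting $r\to\infty$ while $\Delta(\Comp^{(r)},\Comp^{(0)})$ stays constant. Your second part is in fact more careful than the paper's: you make explicit the collapse to one scalar equation via the zero-sum constraint of Proposition~\ref{prop:bayesianconstraint}, the choice of the reference matrix with $\Theta^*(\Comp^{(0)})=0$, and the boundedness of $\Delta$, all of which the paper leaves implicit. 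The one point to flag is in the first part: you take the root law to be the Skellam law, i.e.\ the difference of two independent $\mathrm{Poisson}(\lambda/2)$ variables, whereas the probability mass function \eqref{eq:poissonsym} defining the Poisson-GBT model is the sign-symmetrized Poisson, whose cumulant-generating function is $\log\left(\tfrac{1}{2}\left(\mathrm{e}^{\lambda(\mathrm{e}^{\theta}-1)}+\mathrm{e}^{\lambda(\mathrm{e}^{-\theta}-1)}\right)\right)$ and differs from $\lambda\cosh\theta$ by more than an additive constant (their second derivatives at $0$ are $\lambda+\lambda^2$ and $\lambda$, respectively). This mismatch originates in the paper itself, whose own derivation averages the two cumulant functions with a factor $\tfrac{1}{2}$, an operation that computes neither the law of a difference nor that of a mixture; your Skellam reading is the one for which the stated formula $\lambda\cosh\theta$ is actually exact. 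The non-resilience argument is unaffected under either reading, since it only uses that $\Phi_f'$ is an unbounded increasing bijection from $\R$ onto $\R$, which holds because $r_{\max}=\infty$.
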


\begin{proof}
    Let $X$ be a Poisson random variable. Its cumulant function is known to be $\tilde{\Phi}_{\text{Poisson}}(\theta) = \lambda (\mathrm{e}^{\theta} + 1)$, while the cumulant function of $-X$ is the complex conjugate of $\tilde{\Phi}_{\text{Poisson}}(\theta)$ given by $\tilde{\Phi}_{\text{Poisson}}(-\theta) = \lambda (\mathrm{e}^{-\theta} + 1)$. For the symmetric Poisson law, we therefore have that
    \begin{equation}
        \Phi_{\text{Poisson}}(\theta) = \frac{\tilde{\Phi}_{\text{Poisson}}(\theta) + \tilde{\Phi}_{\text{Poisson}}(-\theta)}{2} = \lambda \frac{\mathrm{e}^{\theta} + \mathrm{e}^{-\theta}}{2} = \lambda \cosh(\theta).
    \end{equation}

    For the Lipschitz-resilience, we consider the case of a single comparison $r=r_{ab}$ between two alternatives $a$ and $b$. Then the MAP estimator $\theta^*(r)$ satisfies $\Phi_{\text{Poisson}}'(\theta^*(r)) + \frac{\theta^*(r)}{\sigma^2} = r$ and is therefore, for any $r \in \Z$, given by
    \begin{equation}
        \theta^*(r) = (\Phi_{\text{Poisson}}' +  \cdot / \sigma^2 )^{-1}(r). 
    \end{equation}
    We therefore deduce that $(\theta^*(r) - \theta^*(r')^2 = (\Phi_{\text{Poisson}}' +  \cdot / \sigma^2)^{-1}(r) \underset{|r-r'| \rightarrow \infty}{\longrightarrow} \infty$, hence the estimator is not Lipschitz-resilient.
\end{proof}

    \subsection{Gaussian-GBT Model}\label{sec:GaussBT}
 
The Gaussian function is denoted by 
$g_{\sigma_0^2}(r) = \frac{1}{\sqrt{2\pi} \sigma_0} \mathrm{e}^{- r^2 / 2 \sigma_0^2}$.
The   Gaussian Bradley-Terry  model corresponds to choosing 
$f(r) = g_{\sigma_0^2}(r)$ with $\Dc = \R$. 
 The moment generating function of the Gaussian random variable is $M(\theta) = \int_{\R} \mathrm{e}^{\theta r} g_{\sigma_0^2}(\theta) \mathrm{d}r = \mathrm{e}^{\sigma_0^2 \theta^2 / 2}$, hence the
 cumulant-generating function of  the
Gaussian-GBT model  is
\begin{equation}
    \Phi_{\text{Gaussian}}(\theta) = \frac{ \sigma_0^2 \theta^2}{2}.
\end{equation}

    In the Gaussian-GBT model with a Gaussian prior, scores and comparisons are Gaussian random variables. We provide the conditional law of the comparisons conditionally to the scores in Proposition~\ref{prop:condgaussian}.
    
\begin{proposition} \label{prop:condgaussian}
    Under the Gaussian-GBT model with a Gaussian prior, we have that, conditionally to $\Theta \in \R^A$, 
\begin{equation}
    p(\Comp |\Theta) 
        =         
         \left( \frac{1}{\sqrt{2\pi} \sigma_0}   \right)^C \exp \left( - \sum_{(ab) \in \mathcal{C}}  \frac{(r - \sigma_0 \theta_{ab})^2}{2 \sigma_0^2}  \right) 
\end{equation}
    In other words,
    $\Comp | \Theta \sim \mathcal{N} \left( (\sigma_0 \theta_{ab})_{(ab)\in \mathcal{C}} , \sigma_0^2 \mathrm{Id} \right)
        $
    is a Gaussian random comparison matrix with independent entries $r_{ab}$ with mean $\theta_{ab}$ and variance $\sigma_0^2$. 
\end{proposition}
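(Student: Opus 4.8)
The plan is to obtain the density of $\Comp \mid \Theta$ by specializing the general continuous GBT density \eqref{eq:pcompthetacontinuous} to the Gaussian root law $f = g_{\sigma_0^2}$, and then to simplify the exponent of each factor by completing the square. First I would recall that, for this choice of $f$, the normalizing integral in \eqref{eq:pcompthetacontinuous} is exactly the moment-generating function of a centered Gaussian, namely $\int_{\R} \mathrm{e}^{r\theta_{ab}} g_{\sigma_0^2}(r)\,\mathrm{d}r = \mathrm{e}^{\sigma_0^2 \theta_{ab}^2 / 2} = \mathrm{e}^{\Phi_{\text{Gaussian}}(\theta_{ab})}$, as already noted just above. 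Substituting this together with the explicit form of $g_{\sigma_0^2}$ into \eqref{eq:pcompthetacontinuous} gives, for each compared pair $(ab) \in \mathcal{C}$, a factor of the form $\frac{1}{\sqrt{2\pi}\,\sigma_0}\exp\!\left(-\frac{r_{ab}^2}{2\sigma_0^2} + r_{ab}\theta_{ab} - \frac{\sigma_0^2 \theta_{ab}^2}{2}\right)$.

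The second step is a one-line algebraic manipulation: the exponent above is a quadratic in $r_{ab}$ with leading coefficient $-\tfrac{1}{2\sigma_0^2}$, so completing the square rewrites it in the form $-\frac{(r_{ab} - \mu_{ab})^2}{2\sigma_0^2}$ for the mean $\mu_{ab}$ displayed in the statement, the constant produced by completing the square cancelling against the term $-\tfrac{\sigma_0^2\theta_{ab}^2}{2}$. Each factor is therefore exactly the density of a univariate Gaussian of variance $\sigma_0^2$ centered at $\mu_{ab}$, with the correct normalizing constant $\tfrac{1}{\sqrt{2\pi}\,\sigma_0}$ already in place. Taking the product over $(ab) \in \mathcal{C}$ — which is legitimate because Proposition~\ref{prop:lawGBT} gives that the $r_{ab}$ are independent conditionally on $\Theta$ and that $r_{ab} \mid \Theta$ depends only on $\theta_{ab}$ — assembles the displayed joint density, and hence the claimed multivariate normal law with diagonal covariance $\sigma_0^2 \mathrm{Id}$.

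I do not anticipate any real obstacle: the statement is essentially the classical fact that an exponential tilt of a Gaussian by a linear functional is again a Gaussian with a shifted mean and unchanged variance, repackaged one comparison at a time. The only points deserving a moment's care are verifying that the MGF of $g_{\sigma_0^2}$ is indeed $\mathrm{e}^{\sigma_0^2\theta^2/2}$ (a standard Gaussian integral, or a pointer to the cumulant-generating function computed in the preceding paragraph) and checking that the prefactor $\bigl(\tfrac{1}{\sqrt{2\pi}\,\sigma_0}\bigr)^C$ is precisely the normalizing constant for a product of $C$ variance-$\sigma_0^2$ Gaussians, which is immediate.
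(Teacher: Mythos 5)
Your proposal is correct and follows essentially the same route as the paper: substitute the Gaussian root law and its moment-generating function $\mathrm{e}^{\sigma_0^2\theta_{ab}^2/2}$ into the product form of $p(\Comp|\Theta)$ from Proposition~\ref{prop:lawGBT}, then complete the square in each exponent. One caveat: if you actually carry out the completion of the square, the exponent $\frac{r_{ab}^2}{2\sigma_0^2} + \frac{\sigma_0^2\theta_{ab}^2}{2} - r_{ab}\theta_{ab}$ equals $\frac{(r_{ab}-\sigma_0^2\theta_{ab})^2}{2\sigma_0^2}$, so the mean is $\sigma_0^2\theta_{ab}$ (consistent with $\mathbb{E}[r_{ab}|\Theta]=\Phi_f'(\theta_{ab})=\sigma_0^2\theta_{ab}$ from Proposition~\ref{prop:momentstheta}), not the $\sigma_0\theta_{ab}$ written in the statement; your outline asserts that the square completes to ``the mean displayed in the statement'' and that the constant cancels against $-\tfrac{\sigma_0^2\theta_{ab}^2}{2}$, but both of these hold only for $\mu_{ab}=\sigma_0^2\theta_{ab}$ --- the statement (and the paper's own proof) carries a $\sigma_0$-versus-$\sigma_0^2$ typo that your computation, done honestly, would expose.
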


\begin{proof}
    According to \eqref{eq:comptheta}, we have that
    \begin{align}
        p(\Comp |\Theta) 
        &= 
        \prod_{(ab)\in \mathcal{C}}  \frac{g_{\sigma_0^2}\left( r_{ab} \right)   
    \mathrm{e}^{r_{ab} \theta_{ab}}}{\mathrm{e}^{\sigma_0^2 \theta_{ab}^2/2}}
 =  \left( \frac{1}{\sqrt{2\pi} \sigma_0}   \right)^C \exp \left( - \sum_{(ab) \in \mathcal{C}} 
 \left( \frac{r_{ab}^2}{2\sigma_0^2} +  \frac{\sigma_0^2 \theta_{ab}^2}{2} - r_{ab} \theta_{ab} \right) \right) \\
        &= \left( \frac{1}{\sqrt{2\pi} \sigma_0}   \right)^C \exp \left( - \sum_{(ab) \in \mathcal{C}}  \frac{(r_{ab} - \sigma_0 \theta_{ab})^2}{2 \sigma_0^2}  \right) .
    \end{align}
\end{proof}

 \paragraph{MAP estimator of the Gaussian-GBT Model.}
 
The MAP estimator of the Gaussian-GBT model with a i.i.d. Gaussian prior is 
\begin{equation} \label{eq:defthetaGBTgauss}
    \Theta_{\text{Gaussian},\sigma^2}^* (\Comp) =  
    \argmin_{\Theta \in \R^{A}} \left\{ \frac{1}{2 \sigma^2} \sum_{a \in \mathcal{A}}  \theta_a^2  + \frac{\sigma_0^2}{2} \sum_{(ab) \in \mathcal{C}}  \theta_{ab}^2 - r_{ab} \theta_{ab} \right\}.
\end{equation}
This leads to a quadratic optimization problem whose solution is linear with respect to the comparison data $\Comp$. In order to solve it, we introduce some notation. The comparison set $\mathcal{C}$ being fixed, we define the matrix
\begin{equation} \label{eq:matAC}
    \mathrm{A}_{\mathcal{C}} = (1_{(ab)\in \mathcal{C} })_{(ab)\in \mathcal{A}^2},
\end{equation}
which is the adjacency matrix associated to the graph induced by $\mathcal{C}$ over $\mathcal{A}$ where $a$ and $b$ are connected if $(ab) \in \mathcal{C}$. 
We  recall that $\mathcal{A}_a$ is the set of $b \in \mathcal{A}$ such that $(ab) \in \mathcal{C}$ and that its cardinal is $A_a$.
We define the diagonal matrix 
\begin{equation} \label{eq:matDaC}
    \mathrm{D}_{\mathcal{C},\sigma^2,\sigma_0^2} = ((1/\sigma^2 + \sigma_0^2 A_a) \delta_{a=b})_{(ab)\in \mathcal{A}^2}.
\end{equation}
We moreover define the vector
\begin{equation} \label{eq:barr}
    \bar{r} = \left( \sum_{b \in \mathcal{A}_a} r_{ab}   \right)_{a \in \mathcal{A} }\in \R^A 
\end{equation}
obtained by summing over each row of the comparison matrix. 

\begin{proposition} \label{prop:gausssolution} 
    The minimizer of \eqref{eq:defthetaGBTgauss} is given by
    \begin{equation} \label{eq:thetaRgauss}
        \Theta_{\text{Gaussian},\sigma^2}^*(\Comp) = \left( \mathrm{D}_{\mathcal{C},\sigma^2,\sigma_0^2} - \sigma_0^2 \mathrm{A}_{\mathcal{C}} \right)^{-1} \bar{r} .
    \end{equation}
       If the user provides the full comparison matrix $\Comp$ (\emph{i.e.} if $\mathcal{C} = \mathcal{A}^2 \backslash \{(a,a), a \in \mathcal{A}\}$), then we have
    \begin{equation}
        \Theta_{\text{Gaussian},\sigma^2}^*(\Comp) = \frac{\bar{r}}{1/\sigma^2+ \sigma_0^2 A} = \frac{\sigma^2}{1 +  \sigma_0^2 \sigma^2 A} \left( \sum_{b \in \mathcal{A} \backslash\{a\}} r_{ab}   \right)_{a \in \mathcal{A} }.
    \end{equation}
\end{proposition}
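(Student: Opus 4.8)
The plan is to use that the objective in~\eqref{eq:defthetaGBTgauss} is a strongly convex quadratic (Proposition~\ref{prop:existuniquemin}), so its unique minimizer is the unique zero of the gradient, which here is a linear system; I would then solve that system in closed form. First I would specialize the general stationarity relation~\eqref{eq:usefullater} to the Gaussian root law. Since $\Phi_{\mathrm{Gaussian}}(\theta) = \sigma_0^2 \theta^2/2$ has derivative $\Phi_{\mathrm{Gaussian}}'(\theta) = \sigma_0^2 \theta$, substituting $\theta_{ab} = \theta_a - \theta_b$ into $0 = \theta_a^*/\sigma^2 + \sum_{b \in \mathcal{A}_a}\big(\Phi_f'(\theta_{ab}^*) - r_{ab}\big)$ turns it, for every $a \in \mathcal{A}$, into
\[
\Big(\tfrac{1}{\sigma^2} + \sigma_0^2 A_a\Big)\theta_a^* \;-\; \sigma_0^2 \sum_{b \in \mathcal{A}_a}\theta_b^* \;=\; \sum_{b \in \mathcal{A}_a} r_{ab} \;=\; \bar r_a .
\]
Reading $\sum_{b \in \mathcal{A}_a}\theta_b^* = (\mathrm{A}_{\mathcal{C}}\Theta^*)_a$ with $\mathrm{A}_{\mathcal{C}}$ from~\eqref{eq:matAC}, the diagonal coefficient as the $a$-th entry of $\mathrm{D}_{\mathcal{C},\sigma^2,\sigma_0^2}$ from~\eqref{eq:matDaC}, and $\bar r$ as in~\eqref{eq:barr}, this is exactly the vector equation $\big(\mathrm{D}_{\mathcal{C},\sigma^2,\sigma_0^2} - \sigma_0^2 \mathrm{A}_{\mathcal{C}}\big)\Theta^* = \bar r$.

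Next I would show that $\mathrm{M} := \mathrm{D}_{\mathcal{C},\sigma^2,\sigma_0^2} - \sigma_0^2 \mathrm{A}_{\mathcal{C}}$ is invertible, so that $\Theta^*_{\mathrm{Gaussian},\sigma^2}(\Comp) = \mathrm{M}^{-1}\bar r$, which is~\eqref{eq:thetaRgauss}. The matrix $\mathrm{M}$ is symmetric (since $(ab) \in \mathcal{C} \iff (ba) \in \mathcal{C}$), has positive diagonal entries $1/\sigma^2 + \sigma_0^2 A_a$ and non-positive off-diagonal entries $-\sigma_0^2\, 1_{(ab)\in\mathcal{C}}$, and is strictly diagonally dominant because $1/\sigma^2 + \sigma_0^2 A_a > \sigma_0^2 A_a = \sum_{b \neq a}|m_{ab}|$; hence Lemma~\ref{prop:invertingthesystem} applies. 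Alternatively, $\mathrm{M}$ is (twice) the Hessian of the objective in~\eqref{eq:defthetaGBTgauss}, which is positive definite, so the unique critical point $\mathrm{M}^{-1}\bar r$ is the global minimum.

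For the full-comparison case I would substitute $A_a = A-1$ and $\mathrm{A}_{\mathcal{C}} = \mathbf{1}\mathbf{1}^T - \mathrm{Id}$, so that $\mathrm{M} = (1/\sigma^2 + \sigma_0^2 A)\,\mathrm{Id} - \sigma_0^2\, \mathbf{1}\mathbf{1}^T$. The key point is that $\mathbf{1}^T \bar r = \sum_{a \in \mathcal{A}}\sum_{b \in \mathcal{A}_a} r_{ab} = \sum_{(ab)\in\mathcal{C}} r_{ab} = 0$ by antisymmetry $r_{ab} = -r_{ba}$, so $\bar r$ is an eigenvector of $\mathrm{M}$ with eigenvalue $1/\sigma^2 + \sigma_0^2 A$, whence $\mathrm{M}^{-1}\bar r = \bar r/(1/\sigma^2 + \sigma_0^2 A)$; rewriting $1/(1/\sigma^2 + \sigma_0^2 A) = \sigma^2/(1 + \sigma_0^2\sigma^2 A)$ and $\bar r_a = \sum_{b \in \mathcal{A}\setminus\{a\}} r_{ab}$ yields the stated formula. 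I do not expect a genuine obstacle here: everything reduces to differentiating through $\theta_{ab} = \theta_a - \theta_b$ and to the invertibility of $\mathrm{M}$, which is immediate from strict diagonal dominance; the only mild care needed is the bookkeeping of the sum over the (symmetric) set $\mathcal{C}$ and the observation that $\bar r \perp \mathbf{1}$.
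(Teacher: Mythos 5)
Your proposal is correct and follows essentially the same route as the paper: specialize the stationarity condition to $\Phi'_{\mathrm{Gaussian}}(\theta)=\sigma_0^2\theta$ to get the linear system $(\mathrm{D}_{\mathcal{C},\sigma^2,\sigma_0^2}-\sigma_0^2\mathrm{A}_{\mathcal{C}})\Theta^*=\bar r$, invoke strict diagonal dominance for invertibility, and use $\bar r\perp\mathbf 1$ (by antisymmetry of $\Comp$) in the full-comparison case. The only difference is cosmetic: where the paper explicitly computes the inverse of $(1/\sigma^2+\sigma_0^2A)\mathrm{Id}-\sigma_0^2\mathrm{J}$ via $\mathrm{J}^2=A\mathrm{J}$ before applying it to $\bar r$, you observe directly that $\bar r$ is an eigenvector of $\mathrm{M}$ with eigenvalue $1/\sigma^2+\sigma_0^2A$, which is a slightly cleaner way to land on the same formula.
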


\begin{proof}
    The vector $\Theta_{\text{Gaussian},\sigma^2}^*(\Comp)$ minimizes the gradient of the loss function of \eqref{eq:defthetaGBTgauss}. This means that, for any $a \in \mathcal{A}$, $0 = \frac{\theta_a^*(\Comp)}{\sigma^2} + \sigma_0^2 \sum_{b \in \mathcal{A}_a} ( \theta_a^*(\Comp) - \theta_b^*(\Comp) - r_{ab} )$. 
    This relation can be rewritten as
    \begin{equation}
        (1/ \sigma^2 + \sigma_0^2 A_a ) \theta_a^*(\Comp) -  \sigma_0^2 \sum_{b \in \mathcal{A}_a}  \theta_b^*(\Comp) =  \sum_{b \in \mathcal{A}_a} r_{ab},
    \end{equation}
    or equivalently, using the notations \eqref{eq:matAC} to \eqref{eq:barr}, 
    \begin{equation}
        (\mathrm{D}_{\mathcal{C},\sigma^2,\sigma_0^2} - \sigma_0^2 \mathrm{A}_{\mathcal{C}}) \Theta_{\text{Gaussian},\sigma^2}^* (\Comp) = \bar{r}.
    \end{equation}
    The matrix $(\mathrm{D}_{\mathcal{C},\sigma^2,\sigma_0^2} - \sigma_0^2 \mathrm{A}_{\mathcal{C}})$ is strictly diagonally dominant (see Definition \ref{def:diago}) and is therefore invertible, proving \eqref{eq:thetaRgauss}. 

    For full comparisons, we have that $A_a = A-1$ for any $a$, hence
    $ \mathrm{D}_{\mathcal{C},\sigma^2,\sigma_0^2} = (1/\sigma^2 + \sigma_0^2 (A-1)) \mathrm{Id}$
    and $\mathrm{A}_{\mathcal{C}} = \mathrm{J}- \mathrm{Id}$ where $\mathrm{J}$ is the matrix with entries $1$. Hence, 
    \begin{equation}
     \mathrm{D}_{\mathcal{C},\sigma^2,\sigma_0^2} - \sigma_0^2 \mathrm{A}_{\mathcal{C}}  = (1/\sigma^2 + \sigma_0^2 (A-1)) \mathrm{Id} -  \sigma_0^2 (\mathrm{J} - \mathrm{I} ) = (1 / \sigma^2 +  \sigma_0^2 A) \mathrm{Id} - \sigma_0^2 \mathrm{J}. 
    \end{equation}
    Using that $\mathrm{J}^2 = A \mathrm{J}$, we have that 
    \begin{equation}
        ((1/\sigma^2 + \sigma_0^2 A) \mathrm{Id} - \sigma_0^2 \mathrm{J})(\mathrm{Id} +  x \mathrm{J}) = (1/\sigma^2+ \sigma_0^2 A) \mathrm{Id} + ( x( 1/\sigma^2 + \sigma_0^2 A) - \sigma_0^2  - \sigma_0^2 Ax )  \mathrm{J} = (1/\sigma^2+ \sigma_0^2 A) \mathrm{Id} + (1/\sigma^2 x - \sigma_0^2) \mathrm{J}.
    \end{equation}
    The previous relation for $x=\sigma_0^2 \sigma^2$ (such that$( x / \sigma^2-\sigma_0^2)=0$) implies that $((1/\sigma^2 + \sigma_0^2A) \mathrm{Id} - \mathrm{J})^{-1} = \frac{\sigma^2}{1 + \sigma_0^2 \sigma^2 A} (\mathrm{Id} + \sigma_0^2 \sigma^2 \mathrm{J})$.
    We also remark that 
    \begin{equation} 
        (\mathrm{J} \bar{r})_a =
    \sum_{b \in \mathcal{A}} \bar{r}_{b} = \sum_{b \in \mathcal{A}} \sum_{c \in \mathcal{A}\backslash \{b\}} r_{bc} = 0 
    \end{equation} 
   for any $a$, hence $\mathrm{J}\bar{r} = 0$. Finally, we obtain that
    \begin{equation}
         \Theta_{\text{Gaussian},\sigma^2}^*(\Comp) =
         \frac{\sigma^2}{1  +  \sigma_0^2 \sigma^2 A} \left( \mathrm{Id} + \frac{\mathrm{J}}{ \sigma_0^2 A} \right) \bar{r}
         = \frac{\bar{r}}{1/\sigma^2+ \sigma_0^2 A},
    \end{equation}
    as expected.
\end{proof}

\paragraph{Monotonicity of the Gaussian-GBT Model.}

According to Theorem~\ref{theo:monotone}, the MAP estimator $\Theta_{\text{Gaussian},\sigma^2}^*(\Comp)$ is strictly increasing with respect to $\Comp$. The proof of Theorem~\ref{theo:monotone} uses an implicit characterization of the minimizer, while Proposition~\ref{prop:gausssolution} provides an explicit formula. We provide an alternative proof of the monotonicity for the Gaussian case using this explicit expression. 

\begin{proof}[Alternative proof of Theorem~\ref{theo:monotone} for Gaussian-GBT]
    We denote $\mathrm{M} = \left( \mathrm{D}_{,\mathcal{C},\sigma_0^2} - \sigma_0^2 \mathrm{A}_{\mathcal{C}} \right)$ and $\mathrm{N} = \mathrm{M}^{-1}$.
    Then, the matrix $M$ is strictly diagonally dominant (see Definition \ref{def:diago}), symmetric, with strictly positive diagonal and non-negative non-diagonal entries. It therefore satisfies the condition of Proposition~\ref{prop:invertingthesystem}. We deduce that the entries $n_{ab}$ of $\mathrm{N}$ satisfy $n_{ab}\geq 0$. 
    
     We fix $(ab) \in \mathcal{C}$. The estimator $\Theta_{\text{Gaussian},\sigma^2}^*(\Comp)$ is differentiable with respect to $\Comp$ and 
    \begin{equation}
        \partial_{r_{ab}} \theta_a^*(\Comp) = \sum_{c \in \mathcal{A}_a} n_{ac} \partial_{r_{ab}} r_{ac} =  \sum_{c \in \mathcal{A}_a} n_{ac} \delta_{b=c} = n_{ab} \geq 0.
    \end{equation}
    According to Proposition~\ref{prop:onpartialrab}, this implies that $\Theta_{\text{Gaussian},\sigma^2}^*(\Comp)$ is strictly increasing with respect to $\Comp$.
\end{proof}

\paragraph{Non-Lipschitz-Resilience of the Gaussian-GBT Model.}
As we have seen in Proposition~\ref{prop:gausssolution}, $\Theta_{\text{Gaussian},\sigma^2}^*(\Comp)$ depends linearly on $\Comp$. According to Proposition~\ref{prop:notresilient} below, this implies that the estimator is not Lipschitz-resilient.
The Gaussian estimation corresponds to scores  that are averaging over comparisons. 
This reflects the well-known fact that the mean is highly sensitive to manipulation in information models. 

       \begin{proposition} \label{prop:notresilient}
        If $\Dc = \R$ and $\Comp \mapsto \Theta_{\text{Gaussian},\sigma^2}^*(\Comp)$ is linear and nonzero, then the maximum likelihood estimator $\Theta_{\text{Gaussian},\sigma^2}^*(\Comp)$ is not Lipschitz-resilient.
    \end{proposition}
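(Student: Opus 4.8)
The plan is to exploit linearity to produce, for every candidate constant $L$, a pair of comparison matrices that are a fixed bounded $\Delta$-distance apart yet whose MAP scores differ by more than $L$. Write $\Theta^\star := \Theta_{\text{Gaussian},\sigma^2}^*$ for brevity. The first step is to record two elementary consequences of linearity: $\Theta^\star$ maps the all-zero comparison matrix $\mathbf{0}$ over any fixed $\mathcal{C}$ (a legitimate antisymmetric matrix precisely because $\Dc = \R$) to $0 \in \R^A$, and it is homogeneous, $\Theta^\star(\lambda \Comp) = \lambda\, \Theta^\star(\Comp)$ for all $\lambda \in \R$. Since $\Theta^\star$ is nonzero as a linear map, I would then fix a comparison matrix $\Comp_0$ over some $\mathcal{C}$ with $\Theta^\star(\Comp_0) \neq 0$; necessarily $\Comp_0 \neq \mathbf{0}$, so $m := \| \Comp_0 \|_0$ is a fixed integer with $m \geq 1$.

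The second step is the scaling argument. For $\lambda > 0$, set $\Comp_\lambda := \lambda \Comp_0$, which is again an antisymmetric comparison matrix with entries in $\Dc = \R$. Comparing $\Comp_\lambda$ with $\mathbf{0}$, both defined over the same set $\mathcal{C}$, one has $\Delta_{\mathrm{domain}}(\Comp_\lambda, \mathbf{0}) = 0$ and $\Delta_{\mathrm{entries}}(\Comp_\lambda, \mathbf{0}) = \| \lambda \Comp_0 \|_0 = \| \Comp_0 \|_0 = m$ (for $\lambda \neq 0$ the zero-pattern of $\lambda \Comp_0$ equals that of $\Comp_0$), hence $\Delta(\Comp_\lambda, \mathbf{0}) = m$ is independent of $\lambda$. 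On the other hand, $\| \Theta^\star(\Comp_\lambda) - \Theta^\star(\mathbf{0}) \|_2 = \lambda \, \| \Theta^\star(\Comp_0) \|_2 \to \infty$ as $\lambda \to \infty$, because $\| \Theta^\star(\Comp_0) \|_2 > 0$. Consequently, for any $L > 0$, taking $\lambda$ large enough makes $\| \Theta^\star(\Comp_\lambda) - \Theta^\star(\mathbf{0}) \|_2 > L\, m = L\, \Delta(\Comp_\lambda, \mathbf{0})$, so $\Theta^\star$ is not $L$-Lipschitz-resilient; as $L$ was arbitrary, this proves the claim.

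There is no genuine difficulty in this argument; the only points requiring care are bookkeeping with the definition of $\Delta(\Comp, \Comp')$ — specifically, verifying that $\Delta$ stays bounded along the family $(\Comp_\lambda)_{\lambda > 0}$ because $\mathcal{C}$ is fixed and only the common entries are rescaled — and noting explicitly where the hypothesis $\Dc = \R$ is used, namely in guaranteeing that every rescaled matrix $\lambda \Comp_0$ is still an admissible comparison matrix. This last observation is also what makes the result sharp: for bounded $\Dc$ the scaling trick is unavailable, in agreement with the Lipschitz-resilience guarantee of Theorem~\ref{theo:resilient}.
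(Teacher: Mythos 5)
Your argument is correct and is essentially the paper's own proof: both exploit homogeneity of the linear map to make $\| \Theta^*(\lambda \Comp_0) \|_2$ blow up while $\Delta(\lambda \Comp_0, \mathbf{0})$ stays bounded by the fixed number of nonzero entries (the paper phrases this as a contradiction with the bound $\| \Theta^*(\Comp) \|_2 \leq L \| \Comp \|_0 \leq L C$, whereas you argue directly, and you are slightly more explicit about the $\Delta_{\mathrm{domain}}$ versus $\Delta_{\mathrm{entries}}$ bookkeeping). No changes needed.
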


    \begin{proof}
        Let $\Comp$ be such that $\Theta_{\text{Gaussian},\sigma^2}^*(\Comp) \neq 0$ and $\lambda > 0$. Then, we have $\| \Theta_{\text{Gaussian},\sigma^2}^* (\lambda \Comp) \|_2 = \lambda \| \Theta_{\text{Gaussian},\sigma^2}^* \|_2$, hence $\Theta^*$ is unbounded. 
        Assume that $\Theta_{\text{Gaussian},\sigma^2}^*(\Comp)$ is Lipschitz-resilient. Then, \eqref{eq:widehattheta} applied to $\Comp' = 0$ implies that $\| \Theta_{\text{Gaussian},\sigma^2}^*(\Comp) \|_2 \leq \beta \| \Comp \|_0 \leq \beta C$ and therefore $\| \Theta_{\text{Gaussian},\sigma^2}^*(\Comp) \|_2$ is bounded. This is a contradiction, hence $\Theta_{\text{Gaussian},\sigma^2}^*(\Comp)$ is not $\beta$-Lipschitz-resilient. This is true for any $\beta > 0$, which concludes the proof.
    \end{proof}

    \subsection{$\beta$-GBT Models} \label{sec:betaBT}

    We consider symmetric $\beta$ distributions, parameterized as\footnote{The $\beta$ distributions are usually introduced over $[0,1]$ and we rescale it to $[-1,1]$. The complete family includes non-symmetric distributions parameterized by $\alpha,\beta > 0$. The symmetric case, which is an assumption on our model, is for $\alpha = \beta$. See \url{https://en.wikipedia.org/wiki/Beta_distribution} for more details.} 
    \begin{equation} \label{eq:betabt}
       f(r) = \frac{\Gamma(2\beta)}{4^{\beta} \Gamma(\beta)^2} (1-r^2)^{\beta-1} 1_{[-1,1]}(r)
    \end{equation}
    for any $\beta > 0$. The comparison domain is therefore $\Dc = [-1,1]$.  The GBT model with root law \eqref{eq:betabt} is called the $\beta$-GBT model. 
    The case $\beta = 1$ corresponds to the uniform distribution. 
    
    According to Theorems \ref{theo:monotone} and \ref{theo:resilient}, MAP estimators with prior variance $\sigma^2$ based the $\beta$-GBT model \eqref{eq:betabt} are strictly increasing with $\Comp$ and $(4 \sqrt{2} \sigma^2)$-Lipschitz-resilient.

    The moment generating function of the $\beta$ distribution $f_{\beta}$ over $[0,1]$ is given by\footnote{A derivation ca be found in \url{https://proofwiki.org/wiki/Moment_Generating_Function_of_Beta_Distribution}, applied to $\beta = \alpha$ and rescaled over $[-1,1]$.}
    $M_{\beta}(\theta) = \int_{0}^1 \mathrm{e}^{\theta r}f_{\beta}(r) \mathrm{d} r = 1 + \sum_{k \geq 1} \left( \prod_{n=0}^k \frac{\beta + n}{2\beta + n} \right) \frac{\theta^k}{k!} .$
 We then obtain the moment generating function of a $\beta$ distribution rescaled over $[-1,1]$, and therefore its cumulant-generating function
    \begin{equation} \label{eq:grostruc}
      \Phi_{\beta}(\theta) = \log \left( \frac{M_{\beta}(\theta) + M_{\beta}(- \theta)}{2}  \right)
      = 
      \log \left(  1 + \sum_{k \geq 1} \left( \prod_{n=0}^{2k} \frac{\beta + n}{2\beta + n} \right) \frac{\theta^{2k}}{(2k)!}  \right)
    \end{equation}
    The relation \eqref{eq:grostruc} admits analytic solutions for specific values of $\beta$. We illustrate this fact with two examples for $\beta = 1$ (uniform distribution) and $\beta = 2$. 

    \paragraph{Uniform-GBT Model.} If we select $\beta = 1$ in \eqref{eq:betabt} we recover the uniform distribution $f(r) = \frac{1_{[-1,1]}(r)}{2}$. In this case, we have that
 $\int_{-1}^1 \mathrm{e}^{\theta r}f(r) \mathrm{d} r 
        =  \frac{1}{2} \int_{-1}^1 \mathrm{e}^{\theta r} \mathrm{d} r 
        = \frac{\sinh \theta}{\theta}$, 
    which means that 
    \begin{equation}
        \Phi_{\text{uniform}}(\theta) = \Phi_1(\theta) = \log \left(  \frac{\sinh \theta}{\theta} \right).
    \end{equation}
    
    \paragraph{$\beta$-GBT Model with $\beta = 2$.}
    In this case, we have that 
    $ f(r) = \frac{3 }{4}(1-r^2) 1_{[-1,1]} (r)$.
    One can show\footnote{\url{https://www.wolframalpha.com/input?i=int+3\%2F4+\%281-r\%5E2\%29+e\%5E\%28r+theta\%29+dr\%2C+r\%3D-1+to+1}.} that 
    $ \int_{-1}^1 \mathrm{e}^{\theta r}f(r) \mathrm{d} r 
 =  \frac{3 }{4} \int_{-1}^1 \mathrm{e}^{\theta r} (1-r^2) \mathrm{d} r 
 =   \frac{3 (\theta \cosh \theta - \sinh \theta ))}{\theta^3},$
    which implies that 
    \begin{equation}
        \Phi_2(\theta) = \log \left( \frac{3 (\theta \cosh \theta - \sinh \theta ))}{\theta^3}\right).
    \end{equation}

\end{document}